\renewcommand{\tocsection}[3]{%
  \indentlabel{\@ifnotempty{#2}{\bfseries\ignorespaces#1 #2\quad}}\bfseries#3}
\renewcommand{\tocsubsection}[3]{%
  \indentlabel{\@ifnotempty{#2}{\ignorespaces#1 #2\quad}}#3}
\patchcmd{\@setaddresses}{\indent}{\noindent}{}{}
\patchcmd{\@setaddresses}{\indent}{\noindent}{}{}
\patchcmd{\@setaddresses}{\indent}{\noindent}{}{}
\patchcmd{\@setaddresses}{\indent}{\noindent}{}{}
\DeclareMathOperator{\C}{\mathcal{C}}
\newcommand{\srk}{\mathrm{srk}}
\newcommand{\dsrk}{\mathrm{d}_{\mathrm{srk}}}
\DeclareMathOperator{\supp}{supp}
\DeclareMathOperator{\rk}{rk}
\DeclareMathOperator{\dd}{d}
\DeclareMathOperator{\Mat}{Mat}
\DeclareMathOperator{\colsp}{colsp}
\DeclareMathOperator{\GL}{GL}
\DeclareMathOperator{\ww}{w}
\theoremstyle{definition}
\newtheorem{theorem}{Theorem}[section]
\newtheorem{lemma}[theorem]{Lemma}
\newtheorem{corollary}[theorem]{Corollary}
\newtheorem{definition}[theorem]{Definition}
\newtheorem{proposition}[theorem]{Proposition}
\newtheorem{example}[theorem]{Example}
\newtheorem{notation}[theorem]{Notation}
\newtheorem{remark}[theorem]{Remark}
\newcommand{\cC}{{\mathcal C}}
\newcommand{\F}{{\mathbb F}}
\newcommand{\NN}{{\mathbb N}}
\newcommand{\bfn}{\mathbf {n}}
\newcommand{\bfm}{\mathbf {m}}
\newcommand{\fq}{{\mathbb F}_{q}}
\newcommand{\Fq}{{\mathbb F}_{q}}
\newcommand{\Fm}{{\mathbb F}_{q^m}}
\newcommand{\fqm}{{\mathbb F}_{q^m}}
\newcommand{\la}{\langle}
\newcommand{\ra}{\rangle}
\newcommand{\PG}{\mathrm{PG}}
\newcommand{\spac}{\Mat(\bfn,\bfm,\Fq)}
\newcommand{\Fmnk}{[\bfn,k]_{q^m/q}}
\newcommand{\Fmnkd}{[\bfn,k,d]_{q^m/q}}
\newcommand{\st}{\,:\,}
\title{One-weight codes in the sum-rank metric}
\date{}
\author[U. Mushrraf]{Usman Mushrraf}
\address{Usman Mushrraf, \textnormal{Dipartimento di Matematica e Fisica, Universit\`a degli Studi della Campania ``Luigi Vanvitelli'', Viale Lincoln, 5, I--\,81100 Caserta, Italy}}
\email{usman.mushrraf@unicampania.it}
\author[F. Zullo]{Ferdinando Zullo}
\address{Ferdinando Zullo, \textnormal{Dipartimento di Matematica e Fisica, Universit\`a degli Studi della Campania ``Luigi Vanvitelli'', Viale Lincoln, 5, I--\,81100 Caserta, Italy}}
\email{ferdinando.zullo@unicampania.it}
\subjclass[2020]{11T71; 51E20; 11T06; 94B05} 
\keywords{Sum-rank metric code;  One-weight code; Linear set; Simplex Code}
\begin{document}

\begin{abstract}
One-weight codes, in which all nonzero codewords share the same weight, form a highly structured class of linear codes with deep connections to finite geometry. While their classification is well understood in the Hamming and rank metrics—being equivalent to (direct sums of) simplex codes—the sum-rank metric presents a far more intricate landscape. In this work, we explore the geometry of one-weight sum-rank metric codes, focusing on three distinct classes.
First, we introduce and classify \emph{constant rank-list} sum-rank metric codes, where each nonzero codeword has the same tuple of ranks, extending results from the rank-metric setting. Next, we investigate the more general \emph{constant rank-profile} codes, where, up to reordering, each nonzero codeword has the same tuple of ranks. Although a complete classification remains elusive, we present the first examples and partial structural results for this class. Finally, we consider one-weight codes that are also MSRD (Maximum Sum-Rank Distance) codes. For dimension two, constructions arise from partitions of scattered linear sets on projective lines. For dimension three, we connect their existence to that of special $2$-fold blocking sets in the projective plane, leading to new bounds and nonexistence results over certain fields.
\end{abstract}

\maketitle
\tableofcontents

\section{Introduction}

In the study of linear codes over finite fields with the Hamming metric, a one-weight code is a linear code in which all nonzero codewords have the same Hamming weight. These codes are rare and highly structured, and their classification reveals deep connections between coding theory and finite geometry.
The simplest example of a one-weight code is the repetition code, where all nonzero codewords are scalar multiples of a single vector of full support. A classical family of one-weight codes is given by the \emph{simplex codes} (also known as MacDonald codes \cite{macdonald1960design}), which are the duals of Hamming codes. Over $\mathbb{F}_q$, the simplex code has parameters $\left[\frac{q^m - 1}{q - 1},\ m,\ q^{m-1}\right]$, for some $m \in \mathbb{N}$, and all nonzero codewords have Hamming weight $q^{m-1}$.

Linear one-weight codes in the Hamming metric have been classified as precisely those codes that are equivalent to direct sums of simplex codes. This result has been established in several papers using a variety of techniques; see \cite{bonisoli1984every,peterson1962error,assmus1963error,weiss1966linear}.

The notion of one-weight codes naturally extends to the rank metric. In \cite{randrianarisoa2020geometric} (see also \cite{alfarano2022linear}), Randrianarisoa proved a rank-metric analogue of the classification for linear one-weight codes in the Hamming metric. Specifically, he showed that all linear one-weight rank-metric codes are equivalent to the simplex rank-metric code, i.e., a code with parameters $[mk, k, m]_{q^m/q}$.

Surprisingly, this classification does not extend to the sum-rank metric. In \cite{neri2023geometry}, the authors presented classes of sum-rank metric codes of dimension two that are not equivalent to the simplex sum-rank metric code. Later, in \cite{borello2024geometric}, the authors found examples of one-weight sum-rank metric codes of arbitrary dimension that are also not equivalent to the simplex sum-rank metric code, demonstrating that the variety of one-weight sum-rank metric codes is significantly richer.

These results suggest that obtaining a classification of one-weight codes in the sum-rank metric is likely to be a very challenging task.

In this paper, using a geometric perspective, we provide new insights into the theory of one-weight sum-rank metric codes. We focus on two classes of such codes.
First, we consider sum-rank metric codes in which every nonzero codeword has the same list of ranks—that is, the same vector of local ranks across the blocks. We refer to these as \emph{constant rank-list} sum-rank metric codes. For this class, we provide a classification result, building on the work of Randrianarisoa in \cite{randrianarisoa2020geometric}.

Next, we study codes in which every nonzero codeword has the same rank profile, meaning the same multiset of local ranks (i.e., the rank-list up to reordering). We call these \emph{constant rank-profile} sum-rank metric codes. While a full classification remains out of reach, we present the first known examples of such codes and give a characterization in certain cases. Here again, the geometric approach plays a central role.

The final class of one-weight codes we study consists of those that are also MSRD codes, i.e., codes that are optimal with respect to the Singleton bound. As we point out, constructing one-weight MSRD codes of dimension two can be obtained via a partition of scattered linear sets on the projective lines. However, the case of dimension three is already far from trivial, as also noted in \cite{martinez2023doubly}.
By employing counting arguments, we relate the existence of such codes to the existence of a particular type of $2$-fold blocking set in the projective plane. This connection allows us to derive bounds on the number of blocks in a putative one-weight MSRD code of dimension three, as well as to prove some nonexistence results in the case $q=2$ and for field extensions of degree two.

\subsection{Organization of the paper}
The structure of the paper is as follows. Section 2 introduces the preliminaries on linear sets. In Section 3, we review the basics of sum-rank metric codes and their geometric interpretation. Section 4 is devoted to a special class of one-weight codes based on a constant rank-list in the context of the sum-rank setting. In Section 5, we present another family of one-weight codes characterized by a constant rank-profile and explore their properties. Finally, in Section 6, we explore one-weight MSRD codes with particular attention to the cases where the dimension is two or three.  

\section{Preliminaries on linear sets}
In this paper, we use the geometric perspective of sum-rank metric codes, which is closely related to the theory of linear sets. Therefore, we will quickly review the definition and some basic properties of linear sets.
Let $\mathbb{V}$ be a $k$-dimensional $\fqm$-vector space and let $\Lambda=\PG(\mathbb{V},\fqm)=\PG(k-1,q^m)$. 
Clearly, $\mathbb{V}$ can also be seen as both an $\fqm$-vector space of dimension $k$ and as an $\fq$-vector space of dimension $mk$. We can consider an $\fq$-subspace $U$ of $\mathbb{V}$ and define the set
\[ L_U=\{\langle {u} \rangle_{\fqm} \colon {u}\in U\setminus\{{0}\}\} \]
as an $\fq$-\textbf{linear set} of rank $\dim_{\fq}(U)$.
The \textbf{weight} of a projective subspace $S=\PG(W,\fqm) \subseteq \Lambda$ in $L_U$ is defined as $w_{L_U}(S)=\dim_{\fq}(U \cap W)$.

Let us review some fundamental relationships involving the size of a linear set, the count of points with a specific weight, and its rank.
If $L_U$ has rank $n$ then the weight of any point is bounded by $n$. Denote by $N_i(U)$ the number of points of $\Lambda$ having weight $i\in \{0,\ldots,n\}$ in $L_U$, the following relations hold:
\begin{equation}\label{eq:card}
    |L_U| \leq \frac{q^n-1}{q-1},
\end{equation}
\begin{equation}\label{eq:pesicard}
    |L_U| =N_1(U)+\ldots+N_n(U),
\end{equation}
\begin{equation}\label{eq:pesivett}
    N_1(U)+N_2(U)(q+1)+\ldots+N_n(U)(q^{n-1}+\ldots+q+1)=q^{n-1}+\ldots+q+1.
\end{equation}

For further details we refer to \cite{polverino2010linear}.

\section{Sum-rank metric codes and their geometry}

 We fix now the notation that we will use for the whole paper. For us $q$ is a prime power and $\Fq$ is the finite field with $q$ elements. We will often consider the degree $m$ extension field $\Fm$ of $\Fq$. 
\noindent Let $t$ be a positive integer. From now on
$\bfn=(n_1,\ldots,n_t), \bfm=(m_1,\ldots,m_t) \in \NN^t$ will always be ordered tuples with $n_1 \geq n_2 \geq \ldots \geq n_t$ and $m_1 \geq m_2 \geq \ldots \geq m_t$, and we set $N\coloneqq n_1+\ldots+n_t$. We use the following compact notations for the direct sum of vector spaces 
$$  \Fm^\bfn\coloneqq\bigoplus_{i=1}^t\Fm^{n_i} \,\, \text{and}\,\,\spac\coloneqq \bigoplus_{i=1}^t \F_q^{n_i \times m_i}.$$

For a vector $\mathbf{a}=(a_1,\ldots, a_r)\in \NN^r$, we define $$\mathcal{S}_{\mathbf{a}}=\mathcal{S}_{a_{1}} \times \ldots \times \mathcal{S}_{a_r},$$
where $\mathcal{S}_i$ is the symmetric group of order $i$. Similarly, we denote by $\GL(\mathbf{a}, \F_q)$ the direct product of the general linear groups of degree $a_i$ over $\F_q$, i.e.
$$ \GL(\mathbf{a}, \F_q) = \GL(a_1, \F_q)\times \ldots\times \GL(a_r, \F_q).$$

We will first recall the framework of matrices for sum-rank metric codes and later the vectorial framework.

\subsection{Matrix codes}

In this section we recall the basic notions of sum-rank-metric codes seen as elements in $\spac$, that will be useful for the rest of the paper. The interested reader is referred to \cite{byrne2021fundamental,byrne2020anticodes,moreno2021optimal} for a more detailed description of this setting. 

\begin{definition}
Let $X\coloneqq(X_1,\dots, X_t)\in\Mat(\mathbf{n},\mathbf{m},\Fq)$. 
The \textbf{sum-rank support} of $X$ is defined as the space
$$\supp(X)\coloneqq(\colsp(X_1), \colsp(X_2),\ldots, \colsp(X_t)) \subseteq \Fq^\bfn,$$
where $\colsp(X_i)$ is the $\F_q$-span of the columns of $X_i$.
The \textbf{rank-list} of $X$ is defined as 
$$\rho(X) \coloneqq (\dim(\colsp(X_1)),\dots, \dim(\colsp(X_t)))\in\mathbb{N}^t.$$
The \textbf{rank-profile} of $X$ is the $t$-uple obtained from $\rho(X)$ by rearranging its  entries in non-increasing order and it is denoted by $\mu(X)$. Finally, the \textbf{sum-rank weight} of $X$ is the quantity
$$\ww_{\srk}(X)\coloneqq\dim_{\Fq}(\supp(X))=\sum_{i=1}^t \rk(X_i).$$
\end{definition}

With these definition in mind, we can endow the space $\spac$ with a distance function, called the \textbf{sum-rank distance},
\[
\dsrk : \spac \times \spac \longrightarrow \mathbb{N}  
\]
defined by
\[
\dsrk(X,Y) \coloneqq \ww_{\srk}(X-Y).
\]

Therefore, we can now give the definition of sum-rank metric codes.

\begin{definition}
 A \textbf{sum-rank metric code} $\mathcal{C}$ is an $\F_q$-linear subspace of $\spac$ endowed with the sum-rank distance. 
The \textbf{minimum sum-rank distance} of a sum-rank metric code $\mathcal{C}$ is defined as usual via $$\dsrk(\mathcal{C})\coloneqq\min\{\ww_{\srk}(X) \st X \in \mathcal{C}, X \neq \mathbf{0}\}.$$ 
The \textbf{sum-rank support} of the code $\mathcal{C}$ is the $\Fq$-span of the supports of all the codewords of $\C$, that is
$$ \supp(\C)\coloneqq\sum_{X\in\C} \supp(X) \subseteq \Fq^\bfn.$$
Finally,  we say that $\mathcal{C}$ is \textbf{sum-rank  nondegenerate} if $\supp(\mathcal{C})=\F_q^\bfn$.
\end{definition}

For the nondegeneracy notion see also \cite{santonastaso2025invariants}.

\subsection{Vector codes} \label{sub:vectorsetting}

Let us focus on the scenario where $m=m_1=\cdots=m_t$. We consider an alternate setting for the study of sum-rank metric codes, where codewords are vectors with entries from an extension field $\F_{q^m}$ instead of matrices over $\F_q$. For a comprehensive explanation of this, see \cite{martinezpenas2018skew,martinez2019universal,martinez2019theory,neri2021twisted,ott2021bounds,martinez2022codes}.

The \textbf{$\Fq$-rank} of a vector $v=(v_1,\ldots,v_n) \in \F_{q^m}^n$ is the $\Fq$-dimension of the
vector space generated over $\F_q$ by its entries, i.e, 
$$\rk_q(v)\coloneqq\dim_{\fq} (\langle v_1,\ldots, v_n\rangle_{\fq}).$$ 
Let $x=(x_1 , \ldots,  x_t)\in\F_{q^m}^\bfn$, with $x_i\in\F_{q^m}^{n_i}$ for any $i$. We can now extend the notion of rank to a $t$-tuple of vectors, defining the \textbf{sum-rank weight of $x$} as the quantity
$$ \ww(x)=\sum_{i=1}^t \rk_q(x_i).$$
Consequently, the 
\textbf{sum-rank distance on $\Fm^\bfn$}, is  
\[
\dd(x,y)=\sum_{i=1}^t \rk_q(x_i-y_i),
\]
for any  $x=(x_1 , \ldots, x_t), y=(y_1, \ldots, y_t) \in \F_{q^m}^\bfn$, with $x_i,y_i \in \F_{q^m}^{n_i}$. 

Therefore, a sum-rank metric code can be defined in this framework as follows.

\begin{definition}
Let $k$ be a positive integer with $1 \le k \le N$. A \textbf{sum-rank metric code} $\C$ is a $k$-dimensional $\Fm$-subspace of $\Fm^\bfn$ endowed with the sum-rank metric. 
The \textbf{minimum sum-rank distance} of $\C$ is the integer
\[
\dd(\C)=\min\{\dd(x,y) \st x, y \in \C, x\neq y  \}= \min\{\ww(x) \st x \in \C, x\neq 0  \}.
\]
We will write that $\C$ is an $\Fmnkd$ code if $k$ is the $\Fm$-dimension of $\C$ and $d$ is its minimum distance, or simply an $\Fmnk$ code if the minimum distance is not relevant/known.
\end{definition}

Since the elements of $\Fm^\bfn$ can also be seen as long vectors in $\Fm^N$, every $\Fmnk$ code can be provided with generator and parity-check matrices. Each of them, is naturally partitioned as $G=(G_1 \mid \ldots \mid G_t)$, where $G_i \in \Fm^{k\times n_i}$ (respectively $H=(H_1 \mid \ldots \mid H_t)$, where $H_i \in \Fm^{(N-k)\times n_i}$).

\medskip 

For vector sum-rank metric codes, a Singleton-like bound holds.
\begin{theorem}[see \textnormal{\cite[Proposition 16]{martinezpenas2018skew}}]  \label{th:Singletonbound}
     Let $\mathcal{C}$ be an $\Fmnkd$ code. Then 
     \[
     d \leq N-k+1.
     \]
\end{theorem}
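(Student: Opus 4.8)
The plan is to adapt the classical puncturing argument for the Singleton bound to the block structure of the sum-rank metric. Viewing codewords as long vectors in $\Fm^N$, I would delete $d-1$ of the $N$ coordinates and argue that the resulting projection remains injective on $\mathcal{C}$, which immediately pins down the dimension.

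First I would fix any set $T$ of $d-1$ coordinate positions among the $N$ columns (distributed in whatever way across the $t$ blocks) and let $\pi_T \colon \Fm^\bfn \to \Fm^{N-(d-1)}$ be the projection deleting exactly these positions. The crucial observation is a block-wise bound on the sum-rank weight: if a vector $z=(z_1,\ldots,z_t)\in\Fm^\bfn$ is supported only on the positions in $T$, with $c_i$ of them lying in the $i$-th block (so that $\sum_{i=1}^t c_i = d-1$), then $\rk_q(z_i)\le c_i$ for each $i$, since the $\Fq$-span of the entries of $z_i$ is generated by at most $c_i$ nonzero entries. Summing over $i$ yields $\ww(z)\le \sum_{i=1}^t c_i = d-1$.

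Next I would use this estimate to establish injectivity of $\pi_T$ restricted to $\mathcal{C}$. If $x,y\in\mathcal{C}$ satisfy $\pi_T(x)=\pi_T(y)$, then $z\coloneqq x-y$ is supported only on $T$, so by the bound above $\ww(z)\le d-1 < d$. As $\mathcal{C}$ is $\Fm$-linear, $z\in\mathcal{C}$, and since the minimum distance of $\mathcal{C}$ is $d$ this forces $z=0$, that is $x=y$. Hence $\pi_T|_{\mathcal{C}}$ is an injective $\Fm$-linear map.

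Finally, injectivity of $\pi_T|_{\mathcal{C}}$ gives $k=\dim_{\Fm}(\mathcal{C})\le \dim_{\Fm}\Fm^{N-(d-1)}=N-(d-1)$, which rearranges to the claimed inequality $d\le N-k+1$. The argument is essentially routine; the only point requiring genuine care is the block-wise rank estimate $\rk_q(z_i)\le c_i$, which is precisely where the sum-rank nature of the metric (as opposed to the Hamming metric) enters and must be handled per block before the weights are summed.
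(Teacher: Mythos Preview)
Your argument is correct. The block-wise estimate $\rk_q(z_i)\le c_i$ is valid because the $\Fq$-span of the entries of $z_i$ is generated by the at most $c_i$ nonzero coordinates, and the rest of the puncturing argument goes through exactly as in the Hamming case.

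There is nothing to compare against in the paper itself: Theorem~\ref{th:Singletonbound} is stated with a reference to \cite[Proposition~16]{martinezpenas2018skew} and no proof is reproduced here. The argument in that reference is essentially the same idea phrased slightly differently (one bounds the sum-rank weight by the Hamming weight of the long vector in $\Fm^N$ and invokes the classical Singleton bound), which is exactly what your block-wise inequality $\ww(z)\le \sum_i c_i$ amounts to. So your route is standard and matches the cited source in spirit.
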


An $\Fmnkd$ code is called a \textbf{Maximum Sum-Rank Distance code} (or shortly \textbf{MSRD code}) if $d=N-k+1$.

The next result establishes the $\Fm$-linear isometries of the space $\F_{q^m}^\bfn$ endowed with the sum-rank distance, and it has been proved in \cite[Theorem 3.7]{alfarano2021sum}. The case $\bfn=(n,\ldots, n)$ was already proved in \cite[Theorem 2]{martinezpenas2021hamming}.
\begin{theorem}
The group of $\F_{q^m}$-linear isometries of the space $(\F_{q^m}^\bfn,\dd)$  is
$$((\F_{q^m}^\ast)^{t} \times \GL(\bfn, \F_q)) \rtimes \mathcal{S}_{\lambda(\bfn)},$$
which (right)-acts as 
  \begin{equation*} (x_1 , \ldots, x_t)\cdot (\mathbf{a},A_1,\ldots, A_t,\pi) \longmapsto (a_1x_{\pi(1)}A_1 \mid \ldots \mid a_tx_{\pi(t)} A_{t}).\end{equation*}
\end{theorem}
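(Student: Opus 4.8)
The plan is to prove the two inclusions separately. For the easy inclusion I would verify that each of the three types of generators preserves the sum-rank weight. Scaling the $i$-th block by $a_i\in\fqm^\ast$ multiplies every entry of $x_i$ by $a_i$ and hence leaves $\dim_{\fq}\langle (x_i)_1,\ldots,(x_i)_{n_i}\rangle_{\fq}$ unchanged; right multiplication of $x_i$ by $A_i\in\GL(n_i,\fq)$ is an invertible $\fq$-linear recombination of the entries and so preserves their $\fq$-span; and permuting blocks of equal length by $\pi\in\mathcal S_{\lambda(\bfn)}$ merely reorders the summands of $\ww$. A short computation with the composition rule then shows that $\mathcal S_{\lambda(\bfn)}$ normalises the block-diagonal subgroup $(\fqm^\ast)^{t}\times\GL(\bfn,\fq)$ (conjugation permutes the components among blocks of equal size), which yields the asserted semidirect-product structure. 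This reduces the theorem to the converse: every $\fqm$-linear isometry $\phi$ is of this form.

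The heart of the argument is to recover the block decomposition metrically. Since $\phi$ is linear and weight-preserving, it maps the set of \emph{weight-one points} (the $\fqm$-lines $\langle x\rangle_{\fqm}$ with $\ww(x)=1$, i.e.\ those supported on a single block with $\fq$-rank $1$ there) bijectively to itself. The key observation is a dichotomy for a $2$-dimensional $\fqm$-subspace $\Pi=\langle x,y\rangle_{\fqm}$ spanned by two distinct weight-one points. If $x,y$ lie in different blocks, then $\ww(\alpha x+\beta y)=[\alpha\neq 0]+[\beta\neq 0]$ for all $\alpha,\beta\in\fqm$, so $\Pi$ contains exactly $2$ weight-one points; whereas if $x,y$ lie in the same block $V_i\cong\fqm^{n_i}$, writing $x=aw$, $y=bw'$ with $w,w'\in\fq^{n_i}$ necessarily $\fq$-independent, a vector $\alpha x+\beta y$ has $\fq$-rank $1$ exactly when $(\alpha a,\beta b)$ is $\fq$-linearly dependent, and these constitute the $\fq$-rational subline $\{\langle w+cw'\rangle:c\in\fq\cup\{\infty\}\}$, i.e.\ $q+1\geq 3$ points. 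Because $\phi$ sends $2$-dimensional subspaces to $2$-dimensional subspaces and preserves weight-one points, it preserves the relation ``the span contains at least three weight-one points''; by the dichotomy this relation is precisely ``same block''. Hence $\phi$ induces a permutation $\sigma$ of $\{1,\ldots,t\}$ with $\phi(V_i)=V_{\sigma(i)}$, and since each $V_i$ is $\fqm$-spanned by weight-one points (e.g.\ the $\fq$-standard basis vectors) equality of dimensions forces $n_{\sigma(i)}=n_i$, so $\sigma\in\mathcal S_{\lambda(\bfn)}$.

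Composing $\phi$ with a suitable block-permutation isometry, I may assume $\phi(V_i)=V_i$ for every $i$. Then $\phi|_{V_i}$ is an $\fqm$-linear bijection of $\fqm^{n_i}$ preserving the ordinary $\fq$-rank, i.e.\ an $\fqm$-linear rank-metric isometry; by the known rank-metric classification (the single-block case, cf.\ \cite{martinezpenas2021hamming}) there exist $a_i\in\fqm^\ast$ and $A_i\in\GL(n_i,\fq)$ with $\phi|_{V_i}(x_i)=a_i x_i A_i$. Writing $x=\sum_i x^{(i)}$ with $x^{(i)}\in V_i$, $\fqm$-linearity then gives $\phi(x)=\sum_i\phi(x^{(i)})=(a_1x_1A_1\mid\ldots\mid a_tx_tA_t)$, and reinstating the block permutation produces exactly the stated action. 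The main obstacle is the middle step — establishing that the partition into blocks is an intrinsic metric invariant — and the plane-counting dichotomy above is what makes it work, genuinely using $q\geq 2$ to separate $q+1$ from $2$; once the blocks are recovered, the remainder follows from linearity together with the already-established single-block rank-metric case.
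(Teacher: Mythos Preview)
The paper does not supply its own proof of this theorem: it is quoted as a known result from \cite[Theorem 3.7]{alfarano2021sum} (with the equal-block case $\bfn=(n,\ldots,n)$ attributed to \cite[Theorem 2]{martinezpenas2021hamming}), so there is no in-paper argument to compare your proposal against.

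For what it is worth, your approach is the standard one and is correct. The heart of the matter is precisely your plane-counting dichotomy---two weight-one projective points in the same block span an $\fqm$-plane containing a full $\fq$-subline of $q+1\geq 3$ weight-one points, while two weight-one points in distinct blocks span a plane with exactly two---which recovers the block partition as a metric invariant; after composing with a block permutation the problem reduces to the single-block rank-metric case you cite. One small edge case worth making explicit is $n_i=1$: such a block contributes a unique weight-one projective point, and your dichotomy (applied to $\phi^{-1}$) shows its image cannot lie in a block of length $\geq 2$; together with the inclusion $\phi(V_i)\subseteq V_{\sigma(i)}$ and the pigeonhole $\sum n_i=\sum n_{\sigma(i)}$ this forces $n_i=n_{\sigma(i)}$ for all $i$, as you claim.
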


Given that MacWilliams's extension theorem does not hold in this context (refer to \cite{barra2015macwilliams}), we will define the equivalence of sum-rank metric codes using $\Fm$-linear isometries of the entire ambient space.

\begin{definition}\label{def:equiv_codes} We say that two $\Fmnk$ sum-rank metric codes $\cC_1, \cC_2$ are \textbf{equivalent} if there is an $\Fm$-linear isometry $\phi$ such that $\phi(\cC_1)=\cC_2$. 
\end{definition}

We denote the set of equivalence classes of $\Fmnkd$ sum-rank metric codes by $\mathfrak{C}\Fmnkd$.

\subsection{Supports and degeneracy of sum-rank metric codes}

It is known that  to a vector sum-rank metric code we can associate a matrix sum-rank metric code with the same parameters and metric properties in the following way.

For every $r \in [t]$, let $\Gamma_r=(\gamma_1^{(r)},\ldots,\gamma_m^{(r)})$ be an ordered $\Fq$-basis of $\Fm$, and let $\Gamma\coloneqq(\Gamma_1,\ldots,\Gamma_t)$. Given   $x=(x_1, \ldots ,x_t) \in \Fm^\bfn$, with $x_i \in \Fm^{n_i}$, define the element $\Gamma(x)=(\Gamma_1(x_1), \ldots, \Gamma_t(x_t)) \in \spac$, where $\mathbf{m}=(m,\ldots,m)$ and 
$$x_{r,i} = \sum_{j=1}^m \Gamma_r (x_r)_{ij}\gamma_j^{(r)}, \qquad \mbox{ for all } i \in [n_r].$$
In other words, the $r$-th block of $\Gamma(x)$ is the matrix expansion of the vector $x_r$ with respect to the $\Fq$-basis $\Gamma_r$ of $\Fm$.

One can observe that the map 
$$\Gamma: \Fm^\bfn \longrightarrow \spac$$
is an $\Fq$-linear isometry between the metric spaces $(\Fm^\bfn, \dd)$ and $(\spac,\dsrk)$; see \cite[Theorem 2.7]{neri2023geometry}. 
This demonstrates that the two settings, namely the matrix framework and the vectorial one, are isometric. Consequently, we can operate within both contexts.

This isometry enables us to express support, rank profile, and rank-list within the vector framework; see \cite{martinez2019theory}.

\begin{definition}
The \textbf{sum-rank support} of an element $x=(x_1, \ldots, x_t) \in \Fm^{\bfn}$ is the tuple
$$\supp_{\bfn}(x)\coloneqq\supp(\Gamma(x)),$$
for any choice of $\Gamma=(\Gamma_1,\ldots, \Gamma_t)$, where $\Gamma_i$ is an $\Fq$-basis of $\Fm$ for each $i \in[t]$.
The \textbf{rank-list} of $x=(x_1, \ldots, x_t)$ is the $t$-uple $\rho(x)=(\rk_q(x_1), \ldots,\rk_q(x_t))$. The \textbf{rank-profile} of $x$ is the $t$-uple obtained from $\rho(x)$ by rearranging the entries in non-increasing order and it is denoted by $\mu(x)$.
\end{definition}

Relying on the notion of sum-rank support of a vector, we can define the sum-rank support of a sum-rank metric code and the nondegeneracy property.
\begin{definition}
 We define the \textbf{sum-rank support of an $\Fmnk$ code} $\C$ as the $\F_q$-span of the supports of the codewords of $\C$ and we denote it by $\supp(\C)$, i.e.
$$ \supp_{\bfn}(\C)\coloneqq\sum_{c\in\C}\supp_{\bfn}(c).$$
Furthermore, we say  that $\C$ is \textbf{sum-rank  nondegenerate} if $\supp(\C)=\Fq^\bfn$. We say that $\C$ is \textbf{sum-rank degenerate} if it is not sum-rank  nondegenerate.
\end{definition}

\subsection{Geometry of sum-rank metric codes}\label{sec:3}

Hamming-metric codes correspond to projective systems \cite[Theorem 1.1.6]{tsfasman1991algebraic}, and rank-metric codes to $q$-systems, their $q$-analogues \cite{sheekey2019scatterd,randrianarisoa2020geometric,alfarano2022linear}. The geometric view of sum-rank metric codes has been recently pointed out in \cite{neri2023geometry} (see also \cite{santonastaso2023subspace}).
We are going to describe it, as it will be crucial in our study.
\medskip

\begin{theorem}\label{th:connection}(see \cite[Theorem 3.1]{neri2023geometry})
Let $\mathcal{C}$ be a  nondegenerate $\Fmnkd$ code with generator matrix $G=(G_1|\ldots|G_t)$.
Let $\mathcal{U}_i$ be the $\F_q$-span of the columns of $G_i$, for $i\in [t]$. Then, for every $v\in \Fm^k$ and $i \in [t]$ we have
\begin{equation}\label{eq:weight_dimension}
\ww(v G) = N - \sum_{i=1}^t \dim_{\fq}(\mathcal{U}_i \cap v^{\perp}),\end{equation}
where $v^{\perp}=\{y \in \F_{q^m}^k \colon v \cdot y=0\}$ and $v \cdot y$ denotes the standard scalar product between $v$ and $y$.
In particular,
\[
d=N - \max\left\{ \sum_{i =1}^t\dim_{\fq}(\mathcal{U}_i \cap H)  \colon H\mbox{ is an } \F_{q^m}\mbox{-hyperplane of }\F_{q^m}^k  \right\}.
\]
\end{theorem}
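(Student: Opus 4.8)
The plan is to compute the sum-rank weight of an arbitrary codeword $vG$ block by block and to convert each local rank into an intersection dimension via the rank--nullity theorem. Write $vG=(vG_1\mid\ldots\mid vG_t)$, so that by definition $\ww(vG)=\sum_{i=1}^t\rk_q(vG_i)$. Fix a block index $i$ and let $g_1^{(i)},\ldots,g_{n_i}^{(i)}\in\Fm^k$ denote the columns of $G_i$, so that $\mathcal{U}_i=\langle g_1^{(i)},\ldots,g_{n_i}^{(i)}\rangle_{\fq}$. The entries of the vector $vG_i$ are precisely the scalars $v\cdot g_j^{(i)}$; since the $\fq$-linear evaluation map $\phi_v\colon\mathcal{U}_i\to\Fm$, $u\mapsto v\cdot u$, sends each $g_j^{(i)}$ to $v\cdot g_j^{(i)}$, its image is exactly the $\fq$-span of the entries of $vG_i$. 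Hence $\rk_q(vG_i)=\dim_{\fq}\bigl(\phi_v(\mathcal{U}_i)\bigr)$.

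Next I would apply rank--nullity to $\phi_v$ restricted to $\mathcal{U}_i$. Its kernel is $\{u\in\mathcal{U}_i:v\cdot u=0\}=\mathcal{U}_i\cap v^{\perp}$, so
\[
\rk_q(vG_i)=\dim_{\fq}(\mathcal{U}_i)-\dim_{\fq}(\mathcal{U}_i\cap v^{\perp}).
\]
Summing over $i$ gives $\ww(vG)=\sum_{i=1}^t\dim_{\fq}(\mathcal{U}_i)-\sum_{i=1}^t\dim_{\fq}(\mathcal{U}_i\cap v^{\perp})$, and it remains only to show that nondegeneracy forces $\sum_{i=1}^t\dim_{\fq}(\mathcal{U}_i)=N$, equivalently $\dim_{\fq}(\mathcal{U}_i)=n_i$ for every $i$.

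This last identity is the main point of the argument, and I expect it to be the only delicate step. I would prove it by computing the orthogonal complement, in $\fq^{n_i}$, of the $i$-th block of $\supp(\mathcal{C})$. For a single $v$, a vector $a\in\fq^{n_i}$ is orthogonal to all columns of the matrix expansion $\Gamma_i(vG_i)$ if and only if $\sum_j a_j\,(v\cdot g_j^{(i)})=0$ in $\Fm$, reading off coordinates row by row. Since $\sum_j a_j\,(v\cdot g_j^{(i)})=v\cdot\bigl(\sum_j a_j g_j^{(i)}\bigr)$, intersecting these conditions over all $v\in\Fm^k$ shows that the $i$-th block of $\supp(\mathcal{C})$ has orthogonal complement $\{a\in\fq^{n_i}:\sum_j a_j g_j^{(i)}=0\}$, namely the space of $\fq$-linear relations among the columns of $G_i$. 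Thus the nondegeneracy hypothesis $\supp(\mathcal{C})=\fq^{\bfn}$, which forces this complement to vanish in each block, is equivalent to the $\fq$-linear independence of the columns of every $G_i$; this yields $\dim_{\fq}(\mathcal{U}_i)=n_i$ and hence establishes \eqref{eq:weight_dimension}.

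Finally I would derive the formula for $d$. Because $\mathcal{C}$ has $\Fm$-dimension $k$, the map $v\mapsto vG$ is injective, so nonzero codewords correspond bijectively to nonzero $v\in\Fm^k$, and
\[
d=\min_{v\neq 0}\ww(vG)=N-\max_{v\neq 0}\sum_{i=1}^t\dim_{\fq}(\mathcal{U}_i\cap v^{\perp}).
\]
As $v$ ranges over the nonzero vectors of $\Fm^k$, the subspace $v^{\perp}$ ranges over all $\Fm$-hyperplanes of $\Fm^k$, with $v$ and its scalar multiples producing the same hyperplane; since $\sum_i\dim_{\fq}(\mathcal{U}_i\cap v^{\perp})$ depends only on $v^{\perp}$, the maximum over nonzero $v$ coincides with the maximum over hyperplanes $H$, giving the stated expression. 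The remainder of the argument beyond the nondegeneracy reduction is just rank--nullity together with this change in the index of maximization.
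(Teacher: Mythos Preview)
The paper does not supply its own proof of this theorem; it is stated with a reference to \cite[Theorem 3.1]{neri2023geometry} and used as a black box. Your argument is correct and is essentially the standard one: compute each local rank via the $\fq$-linear evaluation map $u\mapsto v\cdot u$ on $\mathcal{U}_i$, apply rank--nullity, and then identify $\sum_i\dim_{\fq}(\mathcal{U}_i)$ with $N$. The one nontrivial point---that nondegeneracy forces the columns of each $G_i$ to be $\fq$-linearly independent---you handle cleanly by showing that the $\fq$-relation space among those columns is exactly the orthogonal complement of the $i$-th block of $\supp(\mathcal{C})$. The derivation of the distance formula is routine. Nothing is missing.
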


The above result points out that the metric properties of a code can be completely translated into hyperplane intersection properties with certain subspaces.

\begin{definition}
An \textbf{$\Fmnkd$ system} $\mathcal{U}$ is an ordered set $(\mathcal{U}_1,\ldots,\mathcal{U}_t)$, where, for any $i\in [t]$, $\mathcal{U}_i$ is an $\F_q$-subspace of $\F_{q^m}^k$ of dimension $n_i$, such that
$ \langle \mathcal{U}_1, \ldots, \mathcal{U}_t \rangle_{\F_{q^m}}=\F_{q^m}^k$ and 
$$ d=N-\max\left\{\sum_{i=1}^t\dim_{\F_q}(\mathcal{U}_i\cap \mathcal H) \st \mathcal H \textnormal{ is an $\F_{q^m}$-hyperplane of }\F_{q^m}^k\right\}.$$
\end{definition}

Building upon the last result, the following corollary establishes the link between the rank-list of codewords and the dimension-list of the system.

\begin{corollary}\label{cor:geometricranklist}
Let $\mathcal{C}$ be a  nondegenerate $\Fmnkd$ code with generator matrix $G = (G_1 \mid \dots \mid G_t)$. Let $\mathcal{U}_i$ be the $\mathbb{F}_q$-span of the columns of $G_i$, for $i \in [t]$. Then, for every $v \in \mathbb{F}_{q^m}^k$ and $i \in [t]$, we have
\[\rho (vG)=(\mathrm{rk}_q(vG_1),\ldots,\mathrm{rk}_q(vG_t))=(n_1-\dim_{\mathbb{F}_q}(\mathcal{U}_1\cap v^\perp),\ldots, n_t-\dim_{\mathbb{F}_q}(\mathcal{U}_t\cap v^\perp)).\]
\end{corollary}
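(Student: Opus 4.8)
The plan is to derive Corollary~\ref{cor:geometricranklist} as a direct, entry-by-entry refinement of Theorem~\ref{th:connection}. The key observation is that Theorem~\ref{th:connection} computes the total sum-rank weight $\ww(vG)$ as $N - \sum_{i=1}^t \dim_{\fq}(\mathcal{U}_i \cap v^\perp)$, and since $N = \sum_{i=1}^t n_i$ and the sum-rank weight is by definition the sum of the local ranks $\rk_q(vG_i)$, the identity in the theorem is really a statement about the sum of $t$ nonnegative quantities. What I want is the individual summands. So the first step is to write $\ww(vG) = \sum_{i=1}^t \rk_q(vG_i)$ and compare this term-by-term with $\sum_{i=1}^t \left( n_i - \dim_{\fq}(\mathcal{U}_i \cap v^\perp)\right)$, aiming to show the equality holds block-by-block, i.e.\ $\rk_q(vG_i) = n_i - \dim_{\fq}(\mathcal{U}_i \cap v^\perp)$ for each fixed $i$.

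The clean way to get the per-block equality (rather than merely the equality of the sums) is to revisit the mechanism inside the proof of Theorem~\ref{th:connection} at the level of a single block. Fix $i \in [t]$. The $i$-th block $vG_i$ is a vector in $\Fm^{n_i}$ whose $\fq$-rank $\rk_q(vG_i)$ equals $\dim_{\fq}$ of the $\fq$-span of its entries. I would set up the $\Fm$-bilinear (or $\fq$-bilinear) pairing between $\mathcal{U}_i$ and $\Fm^k$ induced by the columns of $G_i$: writing $G_i$ with columns $g_1^{(i)},\dots,g_{n_i}^{(i)} \in \Fm^k$, the $j$-th entry of $vG_i$ is the scalar $v \cdot g_j^{(i)}$. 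Then the map $u \mapsto v \cdot u$ sends $\mathcal{U}_i = \langle g_1^{(i)},\dots,g_{n_i}^{(i)}\rangle_{\fq}$ into $\Fm$, its kernel is exactly $\mathcal{U}_i \cap v^\perp$, and its image is the $\fq$-span of the entries of $vG_i$. By rank--nullity over $\fq$ applied to this $\fq$-linear map on the $n_i$-dimensional space $\mathcal{U}_i$ (using that nondegeneracy guarantees $\dim_{\fq}\mathcal{U}_i = n_i$, as the columns of $G_i$ are $\fq$-independent), I obtain directly
\[
\rk_q(vG_i) = \dim_{\fq}\big(\mathrm{image}\big) = n_i - \dim_{\fq}(\mathcal{U}_i \cap v^\perp).
\]
Assembling these $t$ identities into a tuple yields the claimed expression for $\rho(vG)$.

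The main subtlety to check, rather than a genuine obstacle, is whether the rank--nullity argument can be deployed cleanly per block without reproving the summation identity of Theorem~\ref{th:connection}. There are two viable routes. The first is to note that the pairing-plus-rank--nullity argument above is self-contained for each block and does not logically need Theorem~\ref{th:connection} at all: it is precisely the localization of that theorem's proof to one summand, and it gives the per-block equality for free. The second, lighter route is to invoke Theorem~\ref{th:connection} as a black box applied to each single-block restriction of the code (treating block $i$ in isolation as a rank-metric instance), which immediately returns $\rk_q(vG_i) = n_i - \dim_{\fq}(\mathcal{U}_i \cap v^\perp)$. I expect the first route to be the most transparent for the reader, so I would present the rank--nullity computation explicitly for a single block and then simply state that ranging over $i \in [t]$ and collecting the components gives the rank-list formula. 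The one point worth a sentence of justification is the equality $\dim_{\fq}\mathcal{U}_i = n_i$, which follows from nondegeneracy of $\mathcal{C}$; without it the formula would read with $\dim_{\fq}\mathcal{U}_i$ in place of $n_i$.
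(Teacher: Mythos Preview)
Your proposal is correct. The paper does not actually supply a proof of this corollary: it is stated immediately after Theorem~\ref{th:connection} with the remark that it ``establishes the link between the rank-list of codewords and the dimension-list of the system,'' and is treated as an immediate consequence. Your per-block rank--nullity argument is precisely the natural way to extract the individual summands from the global identity \eqref{eq:weight_dimension}, and it is the standard mechanism underlying Theorem~\ref{th:connection} itself (cf.\ \cite[Theorem 3.1]{neri2023geometry}); so your approach is the expected one, made explicit.
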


In the following we give the definition of equivalent systems.

\begin{definition}\label{def:equiv_systems}
Two $\Fmnkd$ systems $(\mathcal{U}_1,\ldots,\mathcal{U}_t)$ and $(\mathcal{V}_1,\ldots, \mathcal{V}_t)$ are \textbf{equivalent} if there exists an isomorphism $\varphi\in\GL(k,\F_{q^m})$, an element $\mathbf{a}=(a_1,\ldots,a_t)\in (\Fm^*)^t$ and a permutation $\sigma\in\mathcal{S}_t$, such that for every $i\in[t]$
$$ \varphi(\mathcal{U}_i) = a_i\mathcal{V}_{\sigma(i)}.$$
\end{definition}

We define $\mathfrak{U}\Fmnkd$ as the set of equivalence classes of $\Fmnkd$ systems.

Consider $[\C]\in\mathfrak{C}\Fmnkd$ where $\C$ is a  nondegenerate $\Fmnk$ code. Let $G=(G_1 \,\mid\, \ldots \,\mid\, G_t)$ be a generator matrix for $\cC$. Consider the equivalence class $[\mathcal{U}]$ of $\Fmnkd$ systems, where $\mathcal{U}=(\mathcal{U}_1,\ldots,\mathcal{U}_t)$ and $\mathcal{U}_i$ is the $\F_q$-span of the columns of $G_i$ for every $i\in[t]$. 
We will refer to $\mathcal{U}$ as a \textbf{system associated with } $\C$.
Vice versa, given $[(\mathcal{U}_1,\ldots,\mathcal{U}_t)]\in\mathfrak{U}\Fmnkd$, for every $i\in[t]$, fix an $\F_q$-basis $\{g_{i,1}, \ldots, g_{i,n_i}\}$ of $\mathcal{U}_i$. Define $G_i$ as the matrix whose columns are $\{g_{i,1}, \ldots, g_{i,n_i}\}$ and consider the equivalence class of the sum-rank metric code generated by $G=(G_1 \,\mid\, \ldots \,\mid\, G_t)$.
We refer to $\mathcal{C}$ as a \textbf{code associated with } $\mathcal{U}$.
Thanks to \cite[Theorem 3.7]{neri2023geometry}, we know that two correspondences define a one-to-one correspondence between $\mathfrak{C}\Fmnkd$ and $\mathfrak{U}\Fmnkd$.

\section{Constant rank-list sum-rank-metric codes}

In this section, we completely classify constant rank-list sum-rank-metric codes, proving a sum-rank metric analogue of the classification of one-weight codes shown in the Hamming metric in \cite{bonisoli1984every,peterson1962error,assmus1963error,weiss1966linear} and in the rank metric by Randrianarisoa in \cite{randrianarisoa2020geometric} (see also \cite{alfarano2022linear}).

A constant rank-list sum-rank-metric code is a sum-rank metric code in which all the nonzero codewords have the same rank-list.

\begin{definition}
    Let $\mathcal{C}$ be an $[\mathbf{n},k,d]_{q^m/q}$ code and let $(\rho_1,\ldots,\rho_t)\in \NN^t$. 
    If for any nonzero codeword $c \in \mathcal{C}$ we have $\rho(c)=(\rho_1,\ldots,\rho_t)$, then we say that $\mathcal{C}$ is a $(\rho_1,\ldots,\rho_t)$-\textbf{constant rank-list} sum-rank metric code.
\end{definition}



We can now associate to a sum-rank metric code in $\F_{q^m}^{\mathbf{n}}$ some rank-metric codes.

\begin{definition}
    Let $\mathcal{C}$ be an $[\mathbf{n},k,d]_{q^m/q}$ code.
    For each $i\in [t]$, we define the \textbf{$i$-th projection} of $\mathcal{C}$ as
    \[\mathcal{C}_i=\{c_{i} \colon c=(c_1,\ldots,c_t)\in\mathcal{C}\} \subseteq \F_{q^m}^{n_i}.\]
\end{definition}

In the sum-rank metric, a constant rank-list implies that the projections of the code are rank-metric codes with a single weight.

\begin{proposition}\label{prop:owc}
    Let $\mathcal{C}$ be an $[\mathbf{n},k,d]_{q^m/q}$ $(\rho_1,\ldots,\rho_t)$-constant rank-list sum-rank metric code. For any $i \in [t]$, the $i$-th projection $\mathcal{C}_i$ of $\mathcal{C}$ is a one-weight rank-metric code in $\F_{q^m}^{n_i}$ with minimum distance $\rho_i$.
\end{proposition}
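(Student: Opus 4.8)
The plan is to realize $\mathcal{C}_i$ as the image of the $\fqm$-linear projection $\pi_i \colon \mathcal{C} \to \fqm^{n_i}$, $c \mapsto c_{|i}$, and then read off its weight spectrum directly from the constant rank-list hypothesis. First I would observe that $\mathcal{C}_i = \pi_i(\mathcal{C})$ is an $\fqm$-subspace of $\fqm^{n_i}$, being the image of an $\fqm$-linear map; hence it is a rank-metric code in the sense of the vector framework of Section~\ref{sub:vectorsetting}, and its rank weight is measured by $\rk_q$. This step is immediate and requires no computation.

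The heart of the argument is to compute $\rk_q(w)$ for an arbitrary nonzero $w \in \mathcal{C}_i$. I would pick any preimage $c \in \mathcal{C}$ with $c_{|i} = w$. Since $w \neq 0$, such a $c$ cannot be the zero codeword, so $c$ is a nonzero element of $\mathcal{C}$; the constant rank-list hypothesis then forces $\rho(c) = (\rho_1,\ldots,\rho_t)$, and in particular $\rk_q(w) = \rk_q(c_{|i}) = \rho_i$. Note that this value is independent of the chosen preimage precisely because every nonzero codeword has the same $i$-th local rank, so no well-definedness issue arises. Consequently every nonzero codeword of $\mathcal{C}_i$ has rank weight exactly $\rho_i$, which is the definition of a one-weight rank-metric code, and its minimum distance equals $\rho_i$.

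The only point that needs care is the degenerate case $\rho_i = 0$: then $c_{|i} = 0$ for every nonzero $c$, so $\mathcal{C}_i = \{0\}$ and the statement is vacuous; I would therefore read the claim under the standing hypothesis $\rho_i \geq 1$ (which holds whenever $\mathcal{C} \neq \{0\}$ and, in particular, for nondegenerate codes), in which case $\mathcal{C}_i \neq \{0\}$ as witnessed by $c_{|i}$ for any nonzero $c$. As a by-product, the same linearity argument shows that $\pi_i$ is injective when $\rho_i \geq 1$: if $c_{|i} = c'_{|i}$ with $c \neq c'$, then $c - c'$ would be a nonzero codeword with vanishing $i$-th component, contradicting $\rho_i \geq 1$, so in fact $\dim_{\fqm} \mathcal{C}_i = k$. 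I do not expect any genuine obstacle here; the proposition is essentially a direct unwinding of the definitions, with the constant rank-list condition doing all the work.
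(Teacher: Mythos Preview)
Your argument is correct and follows essentially the same route as the paper's own proof: both simply observe that for any nonzero $c\in\mathcal{C}$ the constant rank-list hypothesis forces $\rk_q(c_{|i})=\rho_i$, so every nonzero element of $\mathcal{C}_i$ has rank weight $\rho_i$. Your version is more explicit about linearity, the choice of preimage, and the edge case $\rho_i=0$, and your additional remark that $\pi_i$ is injective (hence $\dim_{\fqm}\mathcal{C}_i=k$) when $\rho_i\geq 1$ is a nice observation not made in the paper.
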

\begin{proof}
    Suppose that $\mathcal{C}$ is a $(\rho_1,\ldots,\rho_t)$-constant rank-list sum-rank metric code. For any $c=(c_1,\ldots,c_t) \in \mathcal{C}$ we have
    \[ \rho(c)=(w(c_1),\ldots,w(c_t))=(\rho_1,\ldots,\rho_t), \]
    i.e. $w(c_i)=\rho_i$ for any $i \in[t]$.
    Therefore, the assertion is proved.
\end{proof}

A class of one-weight rank-metric codes is the following.

\begin{definition}
   A rank-metric code with parameters $[km,k,m]_{q^m/q}$ is called \textbf{simplex rank-metric code}. 
\end{definition}

A simplex rank-metric code is clearly a one-weight rank-metric code.
In \cite{randrianarisoa2020geometric} (see also \cite{alfarano2022linear}) the one-weight codes in the rank-metric have been classified.

\begin{theorem}[see \text{\cite[Corollary 3.17]{alfarano2022linear}}]\label{thm:onew}
     Let $k\geq 2$ and let $\mathcal{C}$ be an $[n,k,d]_{q^m/q}$ one-weight code. Then, $\mathcal{C}$ is isometric to a simplex rank-metric $[km,k,m]_{q^m/q}$ code.
 \end{theorem}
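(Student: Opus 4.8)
The plan is to pass to the geometric description of rank-metric codes (the case $t=1$ of the correspondence in Section~\ref{sec:3}) and to show that the one-weight hypothesis forces the associated $q$-system to be the whole ambient space. We may assume that $\mathcal{C}$ is nondegenerate, as in the setup of Section~\ref{sec:3}, and let $U\subseteq\mathbb{F}_{q^m}^k$ be an associated system, so that $\dim_{\mathbb{F}_q}U=n$ and $\langle U\rangle_{\mathbb{F}_{q^m}}=\mathbb{F}_{q^m}^k$. Applying Theorem~\ref{th:connection} with $t=1$, for every nonzero $v$ one has $\ww(vG)=n-\dim_{\mathbb{F}_q}(U\cap v^{\perp})$; since $v^{\perp}$ runs through all $\mathbb{F}_{q^m}$-hyperplanes as $v$ runs through $\mathbb{F}_{q^m}^k\setminus\{0\}$, the one-weight condition is equivalent to requiring that
\[\dim_{\mathbb{F}_q}(U\cap H)=w\qquad\text{for every }\mathbb{F}_{q^m}\text{-hyperplane }H,\]
where $w\coloneqq n-d$.

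Next I would isolate the target. The simplex $[km,k,m]_{q^m/q}$ code corresponds, up to equivalence, exactly to the system $U=\mathbb{F}_{q^m}^k$ (the full $\mathbb{F}_q$-space of dimension $km$), and this is the unique system in its equivalence class, since every element of $\GL(k,\mathbb{F}_{q^m})$ and every scalar fixes the whole space. Because any $q$-system satisfies $n=\dim_{\mathbb{F}_q}U\le\dim_{\mathbb{F}_q}\mathbb{F}_{q^m}^k=km$, it therefore suffices to prove that the one-weight hypothesis forces $n=km$: then $U=\mathbb{F}_{q^m}^k$ and $\mathcal{C}$ is isometric to the simplex code.

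The core is a counting step. I would double-count the incident pairs $(u,H)$ with $0\neq u\in U$ and $H$ a hyperplane containing $u$. Each nonzero $u$ lies in exactly $\tfrac{q^{m(k-1)}-1}{q^m-1}$ hyperplanes (those through the point $\langle u\rangle_{\mathbb{F}_{q^m}}$ of $\PG(k-1,q^m)$), while by the constant-weight property each of the $\tfrac{q^{mk}-1}{q^m-1}$ hyperplanes contains exactly $q^{w}-1$ nonzero vectors of $U$. Equating the two counts yields
\[(q^{n}-1)\,\frac{q^{m(k-1)}-1}{q^m-1}=\frac{q^{mk}-1}{q^m-1}\,\bigl(q^{w}-1\bigr),\]
that is, $(q^{n}-1)(q^{m(k-1)}-1)=(q^{mk}-1)(q^{w}-1)$.

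Finally I would read off the parameters. Writing $a=q^m$, the relation becomes $\tfrac{q^{n}-1}{q^{w}-1}=\tfrac{a^{k}-1}{a^{k-1}-1}$, whose right-hand side equals $a+1$ when $k=2$ and lies strictly between $a$ and $a+1$ when $k\ge 3$. Since the left-hand side equals $q^{\,n-w}+\tfrac{q^{\,n-w}-1}{q^{w}-1}$ and every rank weight is at most $m$, so that $n-w=d\le m$, comparing the two sides should force $d=m$; the strict monotonicity of $x\mapsto\tfrac{q^{m}x-1}{x-1}$ then yields $q^{w}=a^{k-1}$, i.e.\ $w=m(k-1)$ and $n=mk$, whence $U=\mathbb{F}_{q^m}^k$ and $\mathcal{C}$ is the simplex code. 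The main obstacle is precisely this concluding Diophantine step: converting the single counting identity into the sharp equalities $d=m$ and $n=mk$ uniformly in $q,m,k$ requires a careful but elementary case analysis, the delicate cases being $k=2$ versus $k\ge 3$ and the small values $q=2$, $w=1$. A cleaner alternative that bypasses the number theory would be to prove structurally that constant hyperplane weight already forces $U$ to be an $\mathbb{F}_{q^m}$-subspace of $\mathbb{F}_{q^m}^k$; nondegeneracy would then give $U=\mathbb{F}_{q^m}^k$ at once, and establishing this $\mathbb{F}_{q^m}$-linearity is the crux of that route.
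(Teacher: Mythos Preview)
The paper does not give its own proof of this statement; it is quoted from \cite{alfarano2022linear} (and ultimately \cite{randrianarisoa2020geometric}) and stated without argument, so there is nothing in the paper to compare your proposal against.

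Your approach via the $q$-system correspondence and the double count of incident pairs $(u,H)$ is the standard route and is correct. The Diophantine step you flag as the main obstacle in fact goes through cleanly and uniformly, with no case distinctions needed. Writing $d=n-w$, the identity you derived rearranges to
\[
q^{d}+\frac{q^{d}-1}{q^{w}-1}\;=\;q^{m}+\frac{q^{m}-1}{q^{m(k-1)}-1}.
\]
One has $w\ge 1$ (otherwise every $\mathbb{F}_{q^m}$-hyperplane would miss $U\setminus\{0\}$, impossible for $k\ge 2$ and $U\ne\{0\}$) and $d\le m$ (rank weights in $\mathbb{F}_{q^m}^n$ never exceed $m$). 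If $d\le m-1$ then the left-hand side is strictly less than $2q^{d}\le 2q^{m-1}\le q^{m}$, while the right-hand side exceeds $q^{m}$; hence $d=m$. Now both sides are of the form $(q^{m}x-1)/(x-1)$ with $x=q^{w}$ on the left and $x=q^{m(k-1)}$ on the right, and this function is strictly monotone for $x>1$, so $w=m(k-1)$ and $n=mk$. No separate treatment of $q=2$, $k=2$, or $w=1$ is required. Your initial reduction to the nondegenerate case is legitimate precisely because the conclusion is ``isometric to'' rather than ``equivalent to'': a degenerate code is isometric to a nondegenerate one in a shorter ambient space, to which the argument then applies.
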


As a consequence of this result, we have also a classification for one-weight  nondegenerate rank-metric codes.

\begin{corollary}\label{cor:onew}
    Let $k\geq 2$ and let $\mathcal{C}$ be a  nondegenerate $[n,k,d]_{q^m/q}$ one-weight code. Then $\mathcal{C}$ is the simplex rank-metric $[km,k,m]_{q^m/q}$ code.
\end{corollary}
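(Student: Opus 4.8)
The plan is to derive Corollary \ref{cor:onew} directly from Theorem \ref{thm:onew} by invoking nondegeneracy to pin down the specific representative within the isometry class. Theorem \ref{thm:onew} already tells us that any one-weight $[n,k,d]_{q^m/q}$ code $\mathcal{C}$ with $k \geq 2$ is isometric to the simplex rank-metric $[km,k,m]_{q^m/q}$ code. So the only additional content of the corollary is that, when $\mathcal{C}$ is also nondegenerate, the isometry can be taken to preserve the ambient length, forcing $n = km$ and identifying $\mathcal{C}$ with the simplex code itself (up to the usual rank-metric equivalences) rather than merely isometric to it.

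First I would recall that in the rank-metric (vectorial) setting, a one-weight code $\mathcal{C}$ corresponds geometrically to a $q$-system, namely an $\F_q$-subspace $\mathcal{U} \subseteq \F_{q^m}^k$ of dimension $n$ whose associated linear set governs the weight distribution via the hyperplane-intersection formula of Theorem \ref{th:connection} (specialized to $t=1$). Nondegeneracy of $\mathcal{C}$ is exactly the condition $\langle \mathcal{U}\rangle_{\F_{q^m}} = \F_{q^m}^k$. The one-weight property means every nonzero $v \in \F_{q^m}^k$ yields the same weight $\ww(vG) = n - \dim_{\F_q}(\mathcal{U}\cap v^\perp)$, so $\dim_{\F_q}(\mathcal{U}\cap H)$ is constant over all hyperplanes $H$. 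I would then quote the classification (Theorem \ref{thm:onew} / \cite[Corollary 3.17]{alfarano2022linear}) to conclude that such a $\mathcal{U}$ must have $n = km$ and $\mathcal{U}$ equals (an $\F_{q^m}$-multiple image of) the full space $\F_{q^m}^k$ viewed as an $\F_q$-space of dimension $km$ under a basis embedding --- this is precisely the $q$-system giving the simplex code.

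The step I expect to carry the real weight is the reduction from "isometric to the simplex code" to "is the simplex code": one must check that the $\F_{q^m}$-linear isometries, which act on the $q$-system side by $\GL(k,\F_{q^m})$ together with a nonzero scalar (the $t=1$ specialization of Definition \ref{def:equiv_systems}), fix the length $n$ and that the nondegenerate one-weight $q$-systems of rank $n$ form a single orbit coinciding with the simplex $q$-system of rank $km$. In particular, nondegeneracy rules out the code being a direct sum embedded in a larger ambient space padded with redundant coordinates, which is what "isometric" would otherwise allow. Concretely, I would argue that the constant weight together with $\langle\mathcal{U}\rangle_{\F_{q^m}}=\F_{q^m}^k$ forces every hyperplane weight to equal $n - m$, and feeding this into the counting identities analogous to \eqref{eq:pesivett} forces $n = km$, after which $\mathcal{U}$ is the whole $km$-dimensional $\F_q$-space and $\mathcal{C}$ is the simplex code on the nose.

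The main obstacle is purely bookkeeping rather than conceptual: making sure the orbit-and-length argument is airtight, i.e.\ that nondegeneracy does not merely cut down the isometry class but actually selects the canonical simplex representative uniquely. Since Theorem \ref{thm:onew} does the heavy lifting, the corollary's proof should be short, and I would phrase it as: apply Theorem \ref{thm:onew} to obtain the isometry to the simplex code, then observe that both $\mathcal{C}$ and the simplex code are nondegenerate of the same length $km$, so the isometry is an equivalence in the nondegenerate sense, whence $\mathcal{C}$ \emph{is} the simplex rank-metric $[km,k,m]_{q^m/q}$ code.
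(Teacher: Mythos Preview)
Your proposal is correct and follows the same route the paper (implicitly) takes: the paper states the corollary without proof, treating it as an immediate consequence of Theorem~\ref{thm:onew}, and your argument---extract the constant weight $m$ from the isometry, then use nondegeneracy and a hyperplane-counting identity to force $n=km$---is exactly the right way to fill in that step. One small wording issue: in your final summary you write ``both $\mathcal{C}$ and the simplex code are nondegenerate of the same length $km$,'' which reads as assuming $n=km$ before you have established it; the counting argument you sketch earlier is what actually yields $n=km$, so lead with that rather than the circular phrasing.
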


The next result discusses about sum-rank  nondegenerate codes and explains that the projections of  nondegenerate sum-rank metric codes are  nondegenerate codes in rank metric, as stated in the following proposition (see also \cite[Corollary 3.9]{santonastaso2025invariants}).

\begin{proposition}\label{prop:nd}
    Let $\mathcal{C}$ be a  nondegenerate $[\mathbf{n},k,d]_{q^m/q}$ sum-rank metric code. For any $i \in [t]$, the $i$-th  projection $\mathcal{C}_i$ of $\mathcal{C}$ is a  nondegenerate code in the rank metric.
\end{proposition}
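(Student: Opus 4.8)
The plan is to unwind the definition of the sum-rank support and exploit its componentwise structure. Fix $i \in [t]$ and fix a basis tuple $\Gamma = (\Gamma_1, \ldots, \Gamma_t)$, where $\Gamma_j$ is an $\F_q$-basis of $\F_{q^m}$ for each $j$. First I would record that, for a single codeword $c = (c_1, \ldots, c_t) \in \mathcal{C}$, the sum-rank support $\supp_{\bfn}(c) = (\colsp(\Gamma_1(c_1)), \ldots, \colsp(\Gamma_t(c_t)))$ is a tuple living in $\bigoplus_{j=1}^t \F_q^{n_j}$, whose $i$-th entry is the $\F_q$-subspace $\colsp(\Gamma_i(c_i)) \subseteq \F_q^{n_i}$.

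Since the sum defining $\supp_{\bfn}(\mathcal{C})$ is a sum of such tuples, and sums inside a direct sum are formed componentwise, the $i$-th component of $\supp_{\bfn}(\mathcal{C})$ is exactly $\sum_{c \in \mathcal{C}} \colsp(\Gamma_i(c_i))$. Next I would invoke surjectivity of the $i$-th projection: as $c$ ranges over $\mathcal{C}$, its component $c_{|i}$ ranges over all of $\mathcal{C}_i$, so
\[
\sum_{c \in \mathcal{C}} \colsp(\Gamma_i(c_{|i})) = \sum_{y \in \mathcal{C}_i} \colsp(\Gamma_i(y)),
\]
and the right-hand side is, by definition (the $t=1$ specialization of the support of a code), the rank-metric support of $\mathcal{C}_i$ computed with respect to $\Gamma_i$. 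The hypothesis that $\mathcal{C}$ is sum-rank nondegenerate gives $\supp_{\bfn}(\mathcal{C}) = \F_q^{\bfn} = \bigoplus_{j=1}^t \F_q^{n_j}$; reading off the $i$-th component forces $\sum_{y \in \mathcal{C}_i} \colsp(\Gamma_i(y)) = \F_q^{n_i}$, which is precisely the statement that $\mathcal{C}_i$ is nondegenerate in the rank metric.

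The argument is essentially a chase through the definitions, so I do not expect a serious obstacle. The only points that warrant a line of care are the interchange between summing over $c \in \mathcal{C}$ and summing over $y \in \mathcal{C}_i$, which rests on the surjectivity of the projection $c \mapsto c_{|i}$, and the basis-independence of the column space $\colsp(\Gamma_i(\cdot))$: a change of $\F_q$-basis of $\F_{q^m}$ amounts to right multiplication of $\Gamma_i(y)$ by an invertible matrix over $\F_q$, which leaves the column space unchanged, so the notion of rank support of $\mathcal{C}_i$ used above coincides with the standard one for rank-metric codes and the conclusion does not depend on the chosen $\Gamma$.
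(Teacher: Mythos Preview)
Your proof is correct and follows essentially the same approach as the paper's: both arguments identify the $i$-th component of $\supp_{\bfn}(\mathcal{C})$ with (or bound it by) the rank-metric support of $\mathcal{C}_i$, and then read off nondegeneracy componentwise. The paper phrases this by contrapositive and only needs the inclusion $\supp_{\bfn}(\mathcal{C})_i \subseteq \supp(\mathcal{C}_i)$, whereas you argue directly and actually establish equality via surjectivity of the projection; this is a stylistic difference rather than a genuinely different route.
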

\begin{proof}
     Let $i \in [t]$ and suppose that $\mathcal{C}_i$ is degenerate for some $i\in[t]$. 
     This means that $\text{supp}(\mathcal{C}_i)\subset \fq^{n_i}$. 
     Note that 
     \[\text{supp}_{\textbf{n}}(\mathcal{C})_i\subseteq \text{supp}(\mathcal{C}_i),\]
     since $\text{supp}(\mathcal{C}_i)\subset \fq^{n_i}$ we have that $\text{supp}_{\textbf{n}}(\mathcal{C})\ne \fq^{\textbf{n}}$.
     However, this contradicts our assumption that $\mathcal{C}$ is sum-rank  nondegenerate.  
\end{proof}

The following is the main result of this section, which provides a classification of constant rank-list sum-rank-metric codes. This statement serves as a sum-rank metric analogue to the well-established classifications of one-weight codes presented in the Hamming metric and in the rank metric.

\begin{theorem}\label{thm:classoneweightlist}
 Let $\mathcal{C}$ be a nondegenerate $[\mathbf{n},k,d]_{q^m/q}$ sum-rank metric code. If $\mathcal{C}$ is a $(\rho_1,\ldots,\rho_t)$-constant rank-list sum-rank metric code then 
 \begin{itemize}
     \item $\text{dim}(\mathcal{C}_i)=k$ for any $i \in [t]$;
     \item if $k\geq 2$, then we have that $\mathcal{C}_i$ is an $[mk,k,m]_{q^m/q}$ code;
     \item if $k=1$, then for any $i \in [t]$ $\mathcal{C}_i$ is an $[n_i,1,n_i]_{q^m/q}$ code;
     \item there are no nonzero codewords in $\C$ having a zero block.
 \end{itemize} 
Conversely, if for $\C$ the previous items hold, then $\C$ is a nondegenerate $(\mathrm{d}(\C_1),\ldots,\mathrm{d}(\C_t))$-constant rank-list sum-rank metric code.
\end{theorem}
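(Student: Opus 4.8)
The forward direction builds directly on the machinery already assembled. The plan is to exploit Proposition \ref{prop:owc}, which tells us that each projection $\mathcal{C}_i$ is a one-weight rank-metric code with minimum distance $\rho_i$, together with Proposition \ref{prop:nd}, which guarantees each $\mathcal{C}_i$ is nondegenerate. First I would treat the case $k_i = \dim(\mathcal{C}_i) \geq 2$: here I simply invoke the classification of nondegenerate one-weight rank-metric codes, namely Corollary \ref{cor:onew}, to conclude that $\mathcal{C}_i$ is the simplex rank-metric code with parameters $[mk_i, k_i, m]_{q^m/q}$. This forces $\rho_i = m$ for every such index. For the case $\dim(\mathcal{C}_i) = 1$, the projection is a one-dimensional rank-metric code, so all its nonzero codewords are $\F_{q^m}$-scalar multiples of a single vector; being one-weight and nondegenerate of length $n_i$ means every nonzero codeword attains rank $n_i$, giving the parameters $[n_i, 1, n_i]_{q^m/q}$.

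The third item is where the genuine content lies, and I expect it to be the main obstacle. I need to show no nonzero codeword $c = (c_1,\ldots,c_t)$ has a vanishing block $c_i = 0$. Suppose for contradiction some nonzero $c$ has $c_i = 0$. Then $\rk_q(c_i) = 0$, but the constant rank-list condition demands $\rho(c) = (\rho_1,\ldots,\rho_t)$ with $\rho_i$ equal to the fixed value forced above ($m$ or $n_i$, in any case strictly positive since $\mathcal{C}_i$ is nondegenerate and hence nonzero). This is an immediate contradiction: a zero block would give local rank $0 \neq \rho_i$. The subtlety I must be careful about is the edge case where $\mathcal{C}_i$ could be the zero code, i.e.\ $\dim(\mathcal{C}_i) = 0$; here nondegeneracy of $\mathcal{C}$ via Proposition \ref{prop:nd} must be used to rule this out, since a zero projection would contradict $\supp(\mathcal{C}_i) = \F_q^{n_i}$ with $n_i \geq 1$. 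Thus every $\rho_i > 0$, and the third item follows.

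For the converse, I would start from a code $\mathcal{C}$ whose projections satisfy the three listed properties and verify that $\mathcal{C}$ is a nondegenerate constant rank-list code. Nondegeneracy is the easier half: since each $\mathcal{C}_i$ is itself nondegenerate (a simplex code or a full-rank one-dimensional code both have full support $\F_q^{n_i}$), the containment $\supp_{\bfn}(\mathcal{C})_i \subseteq \supp(\mathcal{C}_i)$ used in Proposition \ref{prop:nd} can be run in reverse to obtain $\supp_{\bfn}(\mathcal{C}) = \F_q^{\bfn}$. The constant rank-list property requires showing that for every nonzero $c = (c_1,\ldots,c_t)$, each block satisfies $\rk_q(c_i) = \rho_i$ with $\rho_i$ the common value. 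Here the third hypothesis is essential: it ensures $c_i \neq 0$ for every $i$, so $c_{|i}$ is a nonzero codeword of the one-weight code $\mathcal{C}_i$ and therefore has rank exactly equal to the single weight of $\mathcal{C}_i$ (namely $m$ when $k_i \geq 2$, and $n_i$ when $k_i = 1$). Since this weight is independent of the chosen codeword, the rank-list $\rho(c)$ is the same constant tuple for all nonzero $c$, completing the argument.
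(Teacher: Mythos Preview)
Your proposal is correct and follows essentially the same approach as the paper's proof: both use Proposition~\ref{prop:nd} and Proposition~\ref{prop:owc} to reduce to nondegenerate one-weight rank-metric projections, invoke Corollary~\ref{cor:onew} for the $k_i\ge 2$ case, handle $k_i=1$ directly, and derive the third item from the contradiction that a zero block would give local rank $0\ne\rho_i$. Your treatment of the converse is somewhat more detailed than the paper's (which dispatches it in one sentence), but the underlying idea---each $c_i$ is nonzero by the third hypothesis, hence a nonzero element of the one-weight code $\mathcal{C}_i$---is identical.
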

\begin{proof}
Suppose that $\mathcal{C}$ is a $(\rho_1,\ldots,\rho_t)$-constant rank-list sum-rank metric code.
By Proposition \ref{prop:nd}, the projections of $\mathcal{C}$ are nondegenerate rank-metric codes for all $i\in[t]$. Since each $\mathcal{C}_i$ is a  nondegenerate one-weight rank-metric code, we can apply Corollary \ref{cor:onew} to those codes for which $k_i=\dim(\mathcal{C}_i)\geq2$. For such projections, the length of each $\mathcal{C}_i$ is $mk_i$ and their minimum distance is $m$. 
If $\dim(\mathcal{C}_i)=1$, then trivially, the parameters of $\mathcal{C}_i$ are $[n_i,1,n_i]_{q^m/q}$.
Clearly, $\dim(\mathcal{C}_i)$ cannot be zero as the code $\mathcal{C}$ is a  nondegenerate sum-rank metric code.
It cannot happen that there exists two indices $i_1,i_2 \in [t]$ such that $k_{i_1}\ne k_{i_2}$, otherwise it would exist a nonzero codeword $c$ with $c_{i_1}=0$ or $c_{i_2}=0$, implying that one of the $\rho_i$'s is zero, a contradiction to the nondegenerate condition. 
Clearly, there are no nonzero codewords in $\C$ having a zero block otherwise we would get a contradiction to the fact that every nonzero codeword has constant rank-list.
The converse also holds because of the fact that the $\C_i$'s are one-weight code and the constant rank-list is given by the minimum distances of the $\mathcal{C}_i$'s.
\end{proof}
Based on the previous result, we can show some restrictions on the parameters of a nondegenerate sum-rank metric code with constant rank-list.
\begin{corollary}
    Let $\mathcal{C}$ be a  nondegenerate $[\mathbf{n},k,d]_{q^m/q}$  $(\rho_1,\ldots,\rho_t)$-constant rank-list sum-rank metric code.
    For any $i \in[t]$, let $k=\dim(\mathcal{C}_i)$ for any $i \in [t]$.
    We have that
    \begin{itemize}
        \item if $k\geq 2$, then $n_i= mk$ for every $i$;
        \item $\rho_i=\begin{cases}
    m & \text{if } k\geq 2,\\
    n_i & \text{if } k=1.
    \end{cases}$
    \end{itemize}
    We have that the minimum distance of $\C$ is $tm$ if $k\geq 2$ or $\sum_{i \in [t]}n_i$ if $k=1$.
\end{corollary}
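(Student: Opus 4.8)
The plan is to combine Theorem~\ref{thm:classoneweightlist} with the additive nature of the sum-rank weight. The corollary's three bullet points are direct translations of the structural theorem into the language of parameters, so the strategy is to unpack each projection's parameters and then sum the local ranks to obtain the minimum distance.

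First I would establish the constraint on $n_i$. Fix $i \in [t]$ with $k_i \geq 2$. By Theorem~\ref{thm:classoneweightlist}, the projection $\mathcal{C}_i$ is an $[mk_i,k_i,m]_{q^m/q}$ simplex rank-metric code; in particular its length equals $mk_i$. But the length of $\mathcal{C}_i$ is exactly $n_i$, since $\mathcal{C}_i \subseteq \F_{q^m}^{n_i}$ is the $i$-th projection and the ambient dimension is preserved by the projection map. Hence $n_i = mk_i$, giving the first bullet.

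Next I would read off $\rho_i$. Since $\mathcal{C}$ is $(\rho_1,\ldots,\rho_t)$-constant rank-list, every nonzero codeword $c$ satisfies $\rho(c)_i = \ww(c_{|i}) = \rho_i$, so $\rho_i$ is precisely the common rank-weight of the nonzero codewords of the one-weight code $\mathcal{C}_i$, i.e.\ its minimum distance. When $k_i \geq 2$, Theorem~\ref{thm:classoneweightlist} gives minimum distance $m$, so $\rho_i = m$; when $k_i = 1$, the projection is an $[n_i,1,n_i]_{q^m/q}$ code (a rank-metric repetition code all of whose nonzero codewords have full rank $n_i$), so $\rho_i = n_i$. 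One subtlety to address is that for $k_i=1$ the common weight is automatically $n_i$: because $\mathcal{C}_i$ is nondegenerate (Proposition~\ref{prop:nd}) and one-dimensional over $\F_{q^m}$, a single generating vector has $\F_q$-rank equal to $\dim_{\F_q}\supp(\mathcal{C}_i)=n_i$, and every nonzero codeword is an $\F_{q^m}$-scalar multiple, hence has the same rank $n_i$.

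Finally I would compute the minimum distance by summing. Since $\mathcal{C}$ has constant rank-list $(\rho_1,\ldots,\rho_t)$, every nonzero codeword has the same sum-rank weight $\ww(c) = \sum_{i=1}^t \rho_i = d$. Splitting the index set as $[t] = \mathcal{I} \sqcup ([t]\setminus\mathcal{I})$ and substituting $\rho_i = m$ for $i \in \mathcal{I}$ and $\rho_i = n_i$ for $i \in [t]\setminus\mathcal{I}$ yields
\[
d = \sum_{i=1}^t \rho_i = \sum_{i \in \mathcal{I}} m + \sum_{i \in [t]\setminus\mathcal{I}} n_i = m\,|\mathcal{I}| + \sum_{i \in [t]\setminus\mathcal{I}} n_i,
\]
as claimed. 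I do not expect a serious obstacle here: the entire argument is a bookkeeping exercise once Theorem~\ref{thm:classoneweightlist} is in hand. The only point requiring minor care is confirming that the constant rank-list hypothesis forces $d$ to equal the single value $\sum_i \rho_i$ (rather than merely bounding it), which follows immediately because all nonzero codewords share the same rank-list and hence the same sum-rank weight, so the minimum over them is that common value.
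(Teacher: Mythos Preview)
Your proposal is correct and matches the paper's intended argument: the corollary is stated without proof, as an immediate consequence of Theorem~\ref{thm:classoneweightlist}, and your unpacking of the projections' parameters together with the summation of the constant rank-list is exactly the bookkeeping the paper leaves implicit.
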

We conclude this section by showing an example of constant rank-list sum-rank metric codes.
\begin{example}
Let $\alpha$ be a primitive element of $\F_{q^m}$ and consider 
\begin{align*}
G&= \left(
\begin{array}{c|c|c|c}
\begin{matrix}
I_k & \alpha I_k & \cdots & \alpha^{m-1} I_k \\
\end{matrix}
&
\begin{matrix}
I_k & \alpha I_k & \cdots & \alpha^{m-1} I_k
\end{matrix}
&
\cdots
&
\begin{matrix}
I_k & \alpha I_k & \cdots & \alpha^{m-1} I_k
\end{matrix}
\end{array}
\right)\\
&=(G_1|\cdots|G_s) \in \F_{q^m}^{k\times smk},    
\end{align*}
where $I_k$ denotes the identity matrix in $\F_{q^m}^{k\times k}$.
Let $\C$ be the sum-rank metric code having $G$ as generator matrix and note that it is nondegenerate.
Note that, since $(I_k,\alpha I_k,\cdots,\alpha^{m-1} I_k)$ is the generator matrix of the simplex code with parameters $[mk,k,m]_{q^m/q}$ by \cite[Proposition 3.16]{alfarano2022linear}, we have that by Theorem \ref{thm:classoneweightlist}, $\C$ is a  nondegenerate $[(mk,\ldots,mk),k,sm]_{q^m/q}$ $(m,\ldots,m)$-constant rank-list sum-rank metric code.
\end{example}

\section{Constant rank-profile sum-rank metric codes}

In this section, we will study sum-rank metric codes whose nonzero weights have a constant rank-profile; the study of these codes began in \cite{neri2023geometry}. Although the situation appears very similar to that of constant rank-list sum-rank metric codes, we will demonstrate that examples of such codes exist, but a complete classification result seems quite challenging to achieve.

\begin{definition}
    Let $\mathcal{C}$ be an $[\bfn,k,d]_{q^m/q}$ sum-rank metric code and let $(\mu_1,\dots,\mu_t)\in \NN^t$ such that $\mu_1\geq \ldots \geq \mu_t$.
    We say that $\mathcal{C}$ is a $(\mu_1,\dots,\mu_t)$-\textbf{constant rank-profile} sum-rank metric code if for every nonzero codeword $c \in \mathcal{C}$ we have
    \[ \mu(c)=(\mu_1,\dots,\mu_t). \]
\end{definition}

A case study is one where the blocks of the sum-rank metric all have the same length. Indeed, the following necessary conditions have been provided for constant rank-profile sum-rank metric codes by leveraging the connection with the Hamming metric.

\begin{corollary} [see \text{\cite[Corollary 5.8]{neri2023geometry}}]\label{cor:parameters}
    Let $\mathcal{C}$ be a nondegenerate  $ [(n, \dots, n), k]_{q^m / q} $ $(\mu_1,\dots,\mu_t)$-constant rank-profile sum-rank metric code. Then
    \begin{itemize}
        \item[(1)] The number
        \[
        \ell = \frac{t(q^n - 1)(q^m - 1)}{(q - 1)(q^{km} - 1)}
        \]
        is a positive integer.
        \item[(2)] It holds that
        \[
        t q^{m(k-1)} (q^n - 1)(q^m - 1) = (q^{km} - 1) \left( t q^n - \sum_{i=1}^t q^{n - \mu_i} \right).
        \]
    \end{itemize}
\end{corollary}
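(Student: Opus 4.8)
The plan is to translate the constant rank-profile hypothesis into a statement about an auxiliary \emph{Hamming-metric} code and then invoke the classification of one-weight Hamming codes. Let $G=(G_1\mid\cdots\mid G_t)$ be a generator matrix of $\mathcal{C}$ and let $\mathcal{U}_i$ be the $\F_q$-span of the columns of $G_i$, so each $\mathcal{U}_i$ is an $\F_q$-subspace of $\F_{q^m}^k$ of dimension $n$ with $\langle \mathcal{U}_1,\ldots,\mathcal{U}_t\rangle_{\F_{q^m}}=\F_{q^m}^k$. By Corollary \ref{cor:geometricranklist}, for every nonzero $v\in\F_{q^m}^k$ the rank-list of $vG$ is $(n-\dim_{\F_q}(\mathcal{U}_i\cap v^\perp))_{i\in[t]}$, so the constant rank-profile hypothesis says exactly that the multiset $\{\dim_{\F_q}(\mathcal{U}_i\cap v^\perp)\}_{i\in[t]}$ equals $\{n-\mu_i\}_{i\in[t]}$ for every such $v$. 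Since a sum over a multiset does not depend on the ordering, I would first record the key invariance
\[
\sum_{i=1}^t q^{\dim_{\F_q}(\mathcal{U}_i\cap v^\perp)}=\sum_{i=1}^t q^{n-\mu_i}=:\Sigma,
\]
which is independent of the nonzero vector $v$ (equivalently, of the hyperplane $v^\perp$).

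Next I would build the auxiliary code. Let $G'$ be the matrix over $\F_{q^m}$ whose columns are one representative for each of the $\tfrac{q^n-1}{q-1}$ one-dimensional $\F_q$-subspaces of each $\mathcal{U}_i$; it has $L=t\,\tfrac{q^n-1}{q-1}$ columns and, by the spanning property of the $\mathcal{U}_i$, rank $k$ and no zero column. Let $\mathcal{C}'=\{vG':v\in\F_{q^m}^k\}$ be the associated Hamming-metric code over $\F_{q^m}$. For a representative $u$, the coordinate $v\cdot u$ vanishes if and only if $u\in v^\perp$, so the number of zero coordinates of $vG'$ equals the number of one-dimensional $\F_q$-subspaces of the $\mathcal{U}_i$ lying in $v^\perp$, namely $\sum_{i=1}^t\tfrac{q^{\dim_{\F_q}(\mathcal{U}_i\cap v^\perp)}-1}{q-1}=\tfrac{\Sigma-t}{q-1}=:b$. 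Because $\Sigma$ is constant and the columns span $\F_{q^m}^k$, every nonzero $v$ gives a nonzero codeword of Hamming weight $L-b$; thus $\mathcal{C}'$ is a nondegenerate one-weight $[L,k]$ Hamming code over $\F_{q^m}$.

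To prove item (1), I would now apply the classification of one-weight Hamming codes \cite{bonisoli1984every}: such a code is equivalent to a replicated simplex code, i.e.\ the multiset of columns of $G'$ consists of each point of $\PG(k-1,q^m)$ taken with one fixed multiplicity $\ell\in\NN$. Counting columns gives $L=\ell\cdot\tfrac{q^{km}-1}{q^m-1}$, whence $\ell=\tfrac{t(q^n-1)(q^m-1)}{(q-1)(q^{km}-1)}$ is a positive integer, which is precisely item (1). (Alternatively, one may bypass the classification and deduce the constant multiplicity from the invertibility over $\mathbb{Q}$ of the point--hyperplane incidence matrix of $\PG(k-1,q^m)$, since all hyperplane sums of the multiplicity function are then equal.)

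Finally, item (2) follows by counting the columns of $G'$ contained in a fixed hyperplane $v^\perp$ in two ways. On one side this number is the constant $b=\tfrac{\Sigma-t}{q-1}$ computed above; on the other side, a hyperplane of $\PG(k-1,q^m)$ contains $\tfrac{q^{m(k-1)}-1}{q^m-1}$ points, so by the replicated-simplex structure it contains $\ell\cdot\tfrac{q^{m(k-1)}-1}{q^m-1}$ columns. (Equivalently, double-count incidences between columns and the hyperplanes through them, each point lying on $\tfrac{q^{m(k-1)}-1}{q^m-1}$ hyperplanes.) Equating the two expressions and substituting the value of $\ell$ gives $(q^{km}-1)(\Sigma-t)=t(q^n-1)(q^{m(k-1)}-1)$; rearranging the term $tq^n-\Sigma$ and using the identity $(q^{km}-1)-(q^{m(k-1)}-1)=q^{m(k-1)}(q^m-1)$ then yields item (2) after a short manipulation. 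The only genuine difficulty is the integrality in item (1), which is exactly where the one-weight Hamming classification (or the incidence-matrix argument) is indispensable; everything else is incidence counting and elementary algebra.
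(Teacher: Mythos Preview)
Your proof is correct and follows precisely the route indicated by the paper, which does not give its own argument but quotes the result from \cite[Corollary~5.8]{neri2023geometry} with the remark that it is obtained ``by leveraging the connection with the Hamming metric.'' Your construction of the auxiliary Hamming code $\mathcal{C}'$, the verification that it is one-weight via the invariant $\Sigma=\sum_i q^{n-\mu_i}$, the appeal to Bonisoli's classification for item~(1), and the hyperplane double count for item~(2) are exactly this connection made explicit, and the final algebraic rearrangement checks out.
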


As a first interesting case to study, the authors of \cite{neri2023geometry} considered the following.

\begin{example}\label{exa:constantrankprofile}(see \text{\cite[Example 5.9]{neri2023geometry}})
Let us assume that we want to construct a  nondegenerate  $[(n,\ldots,n),k]_{q^m/q}$ code with constant rank-profile $(\mu_1,\ldots,\mu_t)$, and let us fix $n=3$, $k=m=2$. 
Using the above corollary, one can get that the constant rank-profile must be
$$ ( \underbrace{2, \ldots, 2}_{t'q^2 \text{ {times}}}, \underbrace{1, \ldots, 1}_{t' \text{ {times}}}),$$
for some $t' \in \NN$.
\end{example}

Our aim is to use the geometric view to get some constructions of constant rank-profile sum-rank metric codes. To this aim, we need the following map.

\begin{notation}
    Define the following map
\[
\tau: \mathbb{N}^t \to \mathbb{N}^t
\]
such that for any \( (a_1, a_2, \dots, a_t) \in \mathbb{N}^t \), the image
\[
\tau(a_1, \dots, a_t) = (b_1, \dots, b_t)
\]
satisfies
\[
b_1 \geq b_2 \geq \dots \geq b_t,
\]
and \( (b_1, \dots, b_t) \) is a reordering of \( (a_1, \dots, a_t) \).
\end{notation}

We start by giving the geometric interpretation of constant rank-profile sum-rank metric codes.

\begin{proposition}\label{prop:geometricranklist}
Let $\mathcal{C}$ be a  nondegenerate $(\mu_1,\dots,\mu_t)$-constant rank-profile sum-rank metric code with parameters $[\bfn, k, d]_{q^m/q}$. Let $G = (G_1 \mid \dots \mid G_t)$ be a generator matrix of $\mathcal{C}$. Let $\mathcal{U}_i$ be the $\mathbb{F}_q$-span of the columns of $G_i$, for every $i \in [t]$. 
For all $v \in \mathbb{F}_{q^m}^k$, we have that
\[\tau(n_1-\dim_{\mathbb{F}_q}(\mathcal{U}_1\cap v^\perp),\ldots, n_t-\dim_{\mathbb{F}_q}(\mathcal{U}_t\cap v^\perp))= (\mu_1,\dots,\mu_t). \]
\end{proposition}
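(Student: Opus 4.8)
The plan is to derive the statement directly from Corollary~\ref{cor:geometricranklist} together with the definition of the rank-profile as a reordering of the rank-list. Since $G$ is a generator matrix of the $k$-dimensional code $\mathcal{C}$, it has full $\mathbb{F}_{q^m}$-rank $k$, so the map $v \mapsto vG$ is an $\mathbb{F}_{q^m}$-linear bijection from $\mathbb{F}_{q^m}^k$ onto $\mathcal{C}$; in particular it sends nonzero vectors to nonzero codewords. Thus every nonzero codeword of $\mathcal{C}$ is of the form $vG$ for a unique nonzero $v \in \mathbb{F}_{q^m}^k$, and the constant rank-profile hypothesis can be read entirely on such vectors.

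First I would fix a nonzero $v \in \mathbb{F}_{q^m}^k$ and apply Corollary~\ref{cor:geometricranklist} to the codeword $vG$, obtaining the rank-list
\[
\rho(vG) = (n_1 - \dim_{\mathbb{F}_q}(\mathcal{U}_1 \cap v^\perp), \ldots, n_t - \dim_{\mathbb{F}_q}(\mathcal{U}_t \cap v^\perp)).
\]
Next I would recall that, by definition, the rank-profile $\mu(vG)$ is obtained from the rank-list $\rho(vG)$ by rearranging its entries in non-increasing order, which is exactly the effect of the map $\tau$. Hence $\mu(vG) = \tau(\rho(vG))$, that is,
\[
\mu(vG) = \tau(n_1 - \dim_{\mathbb{F}_q}(\mathcal{U}_1 \cap v^\perp), \ldots, n_t - \dim_{\mathbb{F}_q}(\mathcal{U}_t \cap v^\perp)).
\]

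Finally, I would invoke the hypothesis that $\mathcal{C}$ is $(\mu_1, \ldots, \mu_t)$-constant rank-profile: since $vG$ is a nonzero codeword, its rank-profile equals $(\mu_1, \ldots, \mu_t)$, and combining this with the previous display yields the claimed equality. There is essentially no obstacle in this argument, as it is a direct unwinding of the geometric dictionary furnished by Corollary~\ref{cor:geometricranklist}; the only point requiring a word of care is that the statement should be understood for nonzero $v$, since the zero vector corresponds to the zero codeword, whose rank-list is $(0, \ldots, 0)$ and which is excluded by the definition of a constant rank-profile code.
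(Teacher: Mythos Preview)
Your proof is correct and follows essentially the same approach as the paper: apply Corollary~\ref{cor:geometricranklist} to identify the rank-list of $vG$ with the tuple of $n_i - \dim_{\mathbb{F}_q}(\mathcal{U}_i \cap v^\perp)$, then use that $\mu = \tau \circ \rho$ together with the constant rank-profile hypothesis. Your remark that the statement should really be read for nonzero $v$ is a valid caveat that the paper's proof also glosses over.
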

\begin{proof}
   By Corollary \ref{cor:geometricranklist}, we have the rank-list \[(\mathrm{rk}_q(vG_1),\ldots,\mathrm{rk}_q(vG_t))=(n_1-\dim_{\mathbb{F}_q}(\mathcal{U}_1\cap v^\perp),\ldots, n_t-\dim_{\mathbb{F}_q}(\mathcal{U}_t\cap v^\perp)).\]
   Since $\C$ is constant rank-profile, we have that 
   \[\tau(\mathrm{rk}_q(vG_1),\ldots,\mathrm{rk}_q(vG_t))=\tau(n_1-\dim_{\mathbb{F}_q}(\mathcal{U}_1\cap v^\perp),\ldots, n_t-\dim_{\mathbb{F}_q}(\mathcal{U}_t\cap v^\perp))=(\mu_1,\ldots,\mu_t),\]
   so that, we obtain the desired result.
\end{proof}

In order to provide a construction for Example \ref{exa:constantrankprofile}, in the next results of this section we will construct examples of constant rank-list in the case in which the code has dimension two.

\begin{remark}
    For the geometric point of view, we need to control the intersection of the systems with the hyperplane of $\F_{q^m}^2$, which are the one-dimensional $\F_{q^m}$-subspaces of $\F_{q^m}^2$.
    The number of one-dimensional $\F_{q^m}$-subspaces in $\F_{q^m}^2$ is $q^m+1$ and we can list them as follows
    \[ \Lambda_1=\{  \langle {x}_1 \rangle_{\F_{q^m}}, \dots, \langle {x}_{q^m+1} \rangle_{\F_{q^m}} \}, \]
    for some ${x}_1,\ldots,{x}_{q^m+1} \in \F_{q^m}^2$.
    We will fix this set (and the representatives of the one-dimensional subspaces) throughout this paper.
\end{remark}

To construct a family of  nondegenerate constant rank-profile sum-rank metric codes we will use the geometric description.

\begin{proposition}\label{prop:constrmmme}
    Let $(\mathcal{U}_1,\ldots,\mathcal{U}_t)$ be the $[\mathbf{n},2]_{q^m/q}$ system defined as follows
    \begin{itemize}
        \item $\mathbf{n}=(m+e,\ldots,m+e)$, for some $e \in \NN$ such that $0\leq e< m$;
        \item for any $i \in [q^m+1]$
    \[
    \mathcal{U}_{i} = \langle {x}_i \rangle_{\F_{q^m}} \oplus \mathcal{S}_{e}^i,
    \]
    where $ \mathcal{S}_{e}^i $ is an $ e $-dimensional $\fq$-subspace of $ \mathbb{F}_{q^m}^2 $;
    \item for any $i \in [t]$, $ \mathcal{S}_{e}^i \cap \langle {x}_i \rangle_{\F_{q^m}} = \{0\} $.
    \end{itemize}   
    Let $\mathcal{C}$ be a sum-rank metric code associated with the system $(\mathcal{U}_1,\ldots,\mathcal{U}_t)$. 
    We have that $\mathcal{C}$ is a  nondegenerate $({ m, \dots, m}, e)$-constant rank-profile sum-rank metric code with parameters $[\mathbf{n}, 2, q^mm+e]_{q^m/q}$.
\end{proposition}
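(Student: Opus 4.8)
The plan is to work entirely on the geometric (system) side and to read off the rank-list of every codeword from the $\mathbb{F}_q$-intersection numbers of the blocks $\mathcal{U}_i$ with the hyperplanes of $\mathbb{F}_{q^m}^2$, using Corollary~\ref{cor:geometricranklist}. Here $k=2$, so the hyperplanes are precisely the $q^m+1$ one-dimensional $\mathbb{F}_{q^m}$-subspaces $\langle x_1\rangle_{\mathbb{F}_{q^m}},\ldots,\langle x_{q^m+1}\rangle_{\mathbb{F}_{q^m}}$, and the construction uses one block per such point, i.e.\ $t=q^m+1$ with the $i$-th block attached to $\langle x_i\rangle_{\mathbb{F}_{q^m}}$. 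First I would record the two elementary facts that make the system well-defined: by the transversality hypothesis $\mathcal{S}_e^i\cap\langle x_i\rangle_{\mathbb{F}_{q^m}}=\{0\}$, each $\mathcal{U}_i$ is a genuine direct sum and hence has $\mathbb{F}_q$-dimension $m+e=n_i$; and since the points $\langle x_i\rangle_{\mathbb{F}_{q^m}}\subseteq\mathcal{U}_i$ are pairwise distinct, $\langle\mathcal{U}_1,\ldots,\mathcal{U}_t\rangle_{\mathbb{F}_{q^m}}=\mathbb{F}_{q^m}^2$, so the associated code has dimension $2$ and is nondegenerate.

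The heart of the argument is the computation of $\dim_{\mathbb{F}_q}(\mathcal{U}_i\cap H)$ for a fixed hyperplane $H=\langle x_j\rangle_{\mathbb{F}_{q^m}}$, which I would split into two cases. When $i=j$, one has $H=\langle x_j\rangle_{\mathbb{F}_{q^m}}\subseteq\mathcal{U}_j$, so $\mathcal{U}_j\cap H=H$ and $\dim_{\mathbb{F}_q}(\mathcal{U}_j\cap H)=m$. When $i\neq j$, the two distinct points of $\mathrm{PG}(1,q^m)$ satisfy $\langle x_i\rangle_{\mathbb{F}_{q^m}}\oplus\langle x_j\rangle_{\mathbb{F}_{q^m}}=\mathbb{F}_{q^m}^2$; together with $\langle x_i\rangle_{\mathbb{F}_{q^m}}\subseteq\mathcal{U}_i$ and $\langle x_j\rangle_{\mathbb{F}_{q^m}}=H$, this forces $\mathcal{U}_i+H=\mathbb{F}_{q^m}^2$, and the $\mathbb{F}_q$-dimension formula gives $\dim_{\mathbb{F}_q}(\mathcal{U}_i\cap H)=(m+e)+m-2m=e$.

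Feeding these values into Corollary~\ref{cor:geometricranklist}, the rank-list of the codeword $vG$ with $v^\perp=\langle x_j\rangle_{\mathbb{F}_{q^m}}$ has entry $n_j-m=e$ in position $j$ and entry $n_i-e=m$ in every other position. Since $0\le e<m$, reordering via $\tau$ yields the rank-profile $(\underbrace{m,\ldots,m}_{q^m},e)$, independently of $j$; and as every nonzero $v$ determines some such hyperplane $v^\perp$, the code is $(m,\ldots,m,e)$-constant rank-profile. Summing the rank-list then gives sum-rank weight $q^m m+e$ for each nonzero codeword, so this is the minimum distance, establishing the parameters $[\mathbf{n},2,q^m m+e]_{q^m/q}$.

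I expect the only point requiring care to be the case $i\neq j$: one must justify that $\mathcal{U}_i+H$ is the whole of $\mathbb{F}_{q^m}^2$ solely from the inclusion $\langle x_i\rangle_{\mathbb{F}_{q^m}}\subseteq\mathcal{U}_i$ and the distinctness of the two points, after which the $\mathbb{F}_q$-dimension formula does the rest. The case $i=j$, the span/nondegeneracy check, and the bookkeeping of the profile and the weight are all routine.
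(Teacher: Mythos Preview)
Your proposal is correct and follows essentially the same approach as the paper: compute $\dim_{\mathbb{F}_q}(\mathcal{U}_i\cap\langle x_j\rangle_{\mathbb{F}_{q^m}})$ via the two cases $i=j$ and $i\neq j$, then invoke the geometric dictionary (Corollary~\ref{cor:geometricranklist}/Proposition~\ref{prop:geometricranklist}) to read off the rank-profile and hence the minimum distance. If anything, you supply more detail than the paper, which simply asserts the intersection dimensions without justifying the $i\neq j$ case via the Grassmann formula as you do.
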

\begin{proof}
    We start by observing that for any $i, j \in [q^m+1]$ we have
    \[ \dim_{\fq}(\mathcal{U}_i\cap \langle {x}_j\rangle_{\F_{q^m}})=
    \begin{cases}
        m & \text{if } i=j,\\
        e & \text{otherwise}.
    \end{cases}
    \]
    Since $\dim_{\fq}(\mathcal{U}_i)=m+e$ for any $i$, by Proposition \ref{prop:geometricranklist} we have that
    \[ \mu({x}G)=(m,\ldots,m,e), \]
    for any $x \in \F_{q^m}^2\setminus\{0\}$.
    This implies that $\mathcal{C}$ is a  nondegenerate $({ m, \dots, m}, e)$-constant rank-profile sum-rank metric code.
    The minimum distance is then the sum of the entries of the rank-list $({ m, \dots, m}, e)$.
\end{proof}

The above proposition shows a construction for Example \ref{exa:constantrankprofile} when choosing $k=m=2$ and $e=1$. In this case $t'=1$. In the next corollary, we extend it to the case of $t'>1$.

\begin{corollary}
    For any $t' \in \NN$ there exists a  nondegenerate $(\underbrace{ m, \dots, m}_{q^m t' \text{ times}}, \underbrace{ e, \dots, e}_{t' \text{ times}})$-constant rank-profile sum-rank metric code with parameters $[(m+e,\ldots,m+e), 2, t'q^mm+t'e]_{q^m/q}$.
\end{corollary}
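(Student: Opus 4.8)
The plan is to extend the construction of Proposition \ref{prop:constrmmme} by taking a suitable ``$t'$-fold repetition'' of the system obtained there. The key observation is that the construction of Proposition \ref{prop:constrmmme} produces, for the $q^m+1$ distinct one-dimensional $\F_{q^m}$-subspaces $\langle x_i\rangle_{\F_{q^m}}$, a system $(\mathcal{U}_1,\ldots,\mathcal{U}_{q^m+1})$ with $t=q^m+1$ blocks, each of dimension $m+e$, and with the crucial intersection pattern that $\dim_{\fq}(\mathcal{U}_i\cap\langle x_j\rangle_{\F_{q^m}})=m$ if $i=j$ and $e$ otherwise. To obtain the desired code with $q^m t'$ blocks of local rank $m$ and $t'$ blocks of local rank $e$, I would concatenate $t'$ independent copies of this construction into a single system of length $t=t'(q^m+1)$ blocks in $\F_{q^m}^2$.

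First I would set $\bfn=(m+e,\ldots,m+e)\in\NN^{t'(q^m+1)}$ and define the system $(\mathcal{V}_1,\ldots,\mathcal{V}_{t'(q^m+1)})$ by letting the blocks be $t'$ consecutive copies of $(\mathcal{U}_1,\ldots,\mathcal{U}_{q^m+1})$; that is, for each $s\in[t']$ and each $i\in[q^m+1]$ the block indexed by $(s-1)(q^m+1)+i$ equals a copy of $\mathcal{U}_i=\langle x_i\rangle_{\F_{q^m}}\oplus\mathcal{S}_e^i$ (one may reuse the same $\mathcal{S}_e^i$, or choose independent complements for each copy; the intersection dimensions are unaffected). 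Next I would verify nondegeneracy, which is immediate since already a single copy satisfies $\langle\mathcal{U}_1,\ldots,\mathcal{U}_{q^m+1}\rangle_{\F_{q^m}}=\F_{q^m}^2$. Then, for an arbitrary $x\in\F_{q^m}^2\setminus\{0\}$ I would apply Proposition \ref{prop:geometricranklist}: the hyperplane $x^\perp$ is a one-dimensional $\F_{q^m}$-subspace, hence equals some $\langle x_j\rangle_{\F_{q^m}}$, so within each of the $t'$ copies exactly one block meets it in dimension $m$ (giving local rank $e$) while the other $q^m$ blocks meet it in dimension $e$ (giving local rank $m$). Summing over all $t'$ copies yields a rank-list with $q^m t'$ entries equal to $m$ and $t'$ entries equal to $e$, whose reordering is $(\underbrace{m,\dots,m}_{q^m t'},\underbrace{e,\dots,e}_{t'})$ independently of $x$.

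The only point requiring care — and the closest thing to an obstacle — is bookkeeping the multiplicities correctly across the $t'$ copies and confirming that the profile does not depend on which hyperplane $\langle x_j\rangle_{\F_{q^m}}$ is chosen; this follows verbatim from the intersection computation in Proposition \ref{prop:constrmmme}, since that computation gives the same pattern for every $j\in[q^m+1]$. Finally, the minimum distance is the common sum-rank weight $q^m t'\cdot m + t'\cdot e = t'q^mm+t'e$, and the length is $t'(q^m+1)(m+e)$, with $k=2$, giving the claimed parameters $[(m+e,\ldots,m+e),2,t'q^mm+t'e]_{q^m/q}$. Thus the result reduces entirely to the single-copy case already established, and no genuinely new difficulty arises beyond the direct-sum assembly.
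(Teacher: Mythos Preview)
Your proposal is correct and is essentially the same argument as the paper's: the paper simply says ``let $\mathcal{C}$ be the repetition code of $\mathcal{C}'$ taken $t'$ times,'' where $\mathcal{C}'$ is the code of Proposition~\ref{prop:constrmmme}, and asserts that the assertion follows. Your construction of concatenating $t'$ copies of the system $(\mathcal{U}_1,\ldots,\mathcal{U}_{q^m+1})$ is exactly the system-level description of that repetition code, and your verification of the rank-profile via Proposition~\ref{prop:geometricranklist} just spells out what the paper leaves implicit.
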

\begin{proof}
    Let $\mathcal{C}'$ be as in Proposition \ref{prop:constrmmme}.
    Consider $\mathcal{C}$ as the repetition code of $\mathcal{C}'$ $t'$ times, i.e.
    \[ \C=\{ (\underbrace{ c, \dots, c}_{t' \text{ times}}) \colon c \in \C' \}. \]
    The assertion then follows.
\end{proof}

We can characterize the codes with the above parameters in the following way.

\begin{theorem}
Let $\mathcal{C}$ be a  nondegenerate $(\underbrace{ m, \dots, m}_{t' \text{ times}}, \underbrace{ e, \dots, e}_{t - t' \text{ times}})$-constant rank-profile sum-rank metric code with parameters $[\mathbf{n}, 2, d]_{q^m/q}$, where $0\leq e< m$ and $n_i<2m$ for any $i$.
We have that 
\begin{itemize}
    \item $ t = t'(q^m + 1) $;
    \item $n_i=m+e$ for any $i\in[t]$;
    \item a system associated with $\mathcal{C}$ is $(\mathcal{U}_{1,1},\ldots,\mathcal{U}_{q^m+1,t'})$ where
    \[
    \mathcal{U}_{i,j} = \langle {x}_i \rangle_{\F_{q^m}} \oplus \mathcal{S}_{e}^j,
    \]
    where $ \mathcal{S}_{e}^j $ is an $ e $-dimensional $\fq$-subspace of $ \mathbb{F}_{q^m}^2 $ with the property that $ \mathcal{S}_{e}^j \cap \langle {x}_i \rangle_{\F_{q^m}} = \{0\} $.
\end{itemize}
\end{theorem}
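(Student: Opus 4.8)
The plan is to argue entirely on the geometric side via Proposition~\ref{prop:geometricranklist}. Since $k=2$, the $\F_{q^m}$-hyperplanes of $\F_{q^m}^2$ are exactly the $q^m+1$ one-dimensional subspaces $\langle x_1\rangle_{\F_{q^m}},\dots,\langle x_{q^m+1}\rangle_{\F_{q^m}}$, and as $v$ ranges over $\F_{q^m}^2\setminus\{0\}$ the orthogonal $v^\perp$ runs through all of them. Writing $w_i(j)=\dim_{\fq}(\mathcal U_i\cap\langle x_j\rangle_{\F_{q^m}})$ for the weight of the point $\langle x_j\rangle$ in the linear set $L_{\mathcal U_i}$, Proposition~\ref{prop:geometricranklist} says that for every $j$ the multiset $\{\,n_i-w_i(j):i\in[t]\,\}$ equals the profile. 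Hence, for each fixed $j$, the number of blocks with local rank $m$ (i.e.\ $w_i(j)=n_i-m$) and the number with local rank $e$ (i.e.\ $w_i(j)=n_i-e$) are the fixed multiplicities of $m$ and $e$ in the profile; in particular, for each fixed block $i$ the weight function $j\mapsto w_i(j)$ takes only the two values $n_i-m$ and $n_i-e$.

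First I would pin down the block lengths. The engine is the weight identity~\eqref{eq:pesivett} applied to the rank-$n_i$ linear set $L_{\mathcal U_i}$ on the line $\PG(1,q^m)$. Let $\alpha_i$ be the number of points $\langle x_j\rangle$ of weight $n_i-e$ and $\beta_i=q^m+1-\alpha_i$ the number of weight $n_i-m$. One checks first that $n_i\ge m$: otherwise $n_i-m<0$ is impossible, the two admissible weights collapse to the single value $n_i-e$, and $L_{\mathcal U_i}$ would have constant weight, which~\eqref{eq:pesivett} excludes for $0<n_i<2m$. With $n_i\ge m$, combining~\eqref{eq:pesivett} with $\alpha_i+\beta_i=q^m+1$ gives the closed form
\[
\alpha_i=\frac{q^{2m-n_i}-1}{q^{m-e}-1}.
\]
Because $n_i<2m$, the numerator is positive, so $\alpha_i\ge 1$; the weight $n_i-e$ is therefore attained, and being bounded by $m=\dim_{\fq}\langle x_j\rangle_{\F_{q^m}}$ it forces $n_i-e\le m$. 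Together with $n_i\ge m$ this already confines each length to $m\le n_i\le m+e$.

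The decisive step is to upgrade this to $n_i=m+e$ (equivalently $\alpha_i=1$), and I expect this to be the main obstacle. Integrality of $\alpha_i$ forces $(m-e)\mid(2m-n_i)$, while the bounds give $m-e\le 2m-n_i\le m$. The only multiple of $m-e$ lying in the half-open interval $[\,m-e,\,2(m-e)\,)$ is $m-e$ itself, so as soon as $2m-n_i<2(m-e)$, i.e.\ $n_i>2e$, we conclude $2m-n_i=m-e$, that is $n_i=m+e$ and $\alpha_i=1$. The delicate point is exactly the inequality $n_i>2e$: it follows from $n_i\ge m$ whenever $2e<m$, but in the boundary/large-$e$ regime the competing solutions $n_i=2m-\ell(m-e)$ for $\ell\ge 2$ (geometrically, scattered-type blocks carrying several weight-$(n_i-e)$ points) must be discarded, and this is where the hypotheses $n_i<2m$ and the two-valued profile must be used most carefully.

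Once $n_i=m+e$ and $\alpha_i=1$ hold for every $i$, the structure and the count drop out. Each block has a unique point $\langle x_{\sigma(i)}\rangle$ of weight $m$, and weight $m$ means $\langle x_{\sigma(i)}\rangle_{\F_{q^m}}\subseteq\mathcal U_i$; since $\dim_{\fq}\mathcal U_i=m+e$ this yields $\mathcal U_i=\langle x_{\sigma(i)}\rangle_{\F_{q^m}}\oplus\mathcal S_e^i$ for an $e$-dimensional $\fq$-complement $\mathcal S_e^i$ meeting $\langle x_{\sigma(i)}\rangle_{\F_{q^m}}$ trivially, which is precisely the asserted form after reindexing blocks by their weight-$m$ direction. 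Finally, double-counting the incidences (block, point) of local rank $e$ gives $\sum_i\alpha_i$ on one side and $(q^m+1)$ times the number $t'$ of rank-$e$ positions per codeword on the other; since each $\alpha_i=1$ this collapses to $t=t'(q^m+1)$, and the per-point count shows that exactly $t'$ blocks share each of the $q^m+1$ directions $\langle x_j\rangle$, producing the indexing $(\mathcal U_{i,j})_{i\in[q^m+1],\,j\in[t']}$ with $n_i=m+e$ throughout.
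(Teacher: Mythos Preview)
Your route via the closed formula $\alpha_i=(q^{2m-n_i}-1)/(q^{m-e}-1)$ is more precise than the paper's argument, which instead splits into ``$\mathcal U_i$ contains some line $\langle x_j\rangle_{\F_{q^m}}$'' (Case~1) versus ``$\mathcal U_i$ contains no line'' (Case~2). In Case~1 the paper in effect uses exactly your observation: $w_i(j)=m$ together with $w_i(j)\in\{n_i-m,n_i-e\}$ and $n_i<2m$ forces $n_i=m+e$, after which the decomposition $\mathcal U_i=\langle x_j\rangle_{\F_{q^m}}\oplus\mathcal S_e$ drops out as you describe. So for blocks that do contain a line, the two proofs coincide. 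One small slip in your final count: the statement defines $t'$ as the multiplicity of $m$ in the profile, whereas your double count uses $t'$ as the number of rank-$e$ positions (the number of blocks containing a fixed line is the number of $e$'s, not of $m$'s); the paper's own proof has the same mismatch with its statement.

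The gap you flag for $m\le 2e$ is genuine, and the paper does not close it either: its Case~2 asserts that every point-weight equals $e$, but the actual constraint is only $w_i(j)\in\{n_i-m,n_i-e\}$, and when $n_i<m+e$ neither value is $e$ nor $m$, so the single-weight computation via \eqref{eq:pesicard}--\eqref{eq:pesivett} does not apply. In fact the gap cannot be filled under the stated hypotheses. Take $m=2e$ and let each $\mathcal U_i$ be an $\F_{q^e}$-subline of $\PG(1,q^m)$ (so $n_i=2e=m$, with point-weights $0$ and $e$); the orbit of one such subline under a Singer cycle of $\PG(1,q^m)$ gives $t=q^m+1$ blocks in which every point lies in exactly $q^e+1$ of them, hence a nondegenerate constant rank-profile code with profile $(m,\ldots,m,e,\ldots,e)$ but $n_i=m\neq m+e$. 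So your divisibility obstruction is sharp: without an extra hypothesis such as $2e<m$ (which makes your interval argument go through), the conclusion $n_i=m+e$ can fail.
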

 \begin{proof}
Let $G = (G_1 \mid \dots \mid G_t)$ be a generator matrix of $\mathcal{C}$. Let $\mathcal{U}_i$ be the $\mathbb{F}_q$-span of the columns of $G_i$, for every $i \in [t]$. 
By assumption and by Corollary \ref{cor:geometricranklist} we have that for any $j \in [q^m+1]$
\begin{equation}\label{eq:t'andt} \dim_{\fq}(\mathcal{U}_i\cap \langle {x}_j \rangle_{\F_{q^m}})=
    \begin{cases}
        m & \text{for } t' \text{ values of } i , \\
        e & \text{for the remaining values of } i.
    \end{cases} \end{equation}

\textbf{Case 1}: Suppose that $\mathcal{U}_i$ is such that there exists $j \in[q^m+1]$ for which $ \dim_{\fq}(\mathcal{U}_i\cap \langle {x}_j \rangle_{\F_{q^m}})=m$. 
This implies that $\mathcal{U}_i\supseteq \langle {x}_j \rangle_{\F_{q^m}}$.
By assumption, we also have that
\[ \dim_{\fq}(\mathcal{U}_i\cap \langle {x}_h \rangle_{\F_{q^m}})\in \{e,m\}. \]
Note that it cannot happen that for some $h\ne j$ we have $ \dim_{\fq}(\mathcal{U}_i\cap \langle {x}_h \rangle_{\F_{q^m}})=m$, because otherwise $\mathcal{U}_i\supseteq \langle {x}_j,{x}_h \rangle_{\F_{q^m}}=\F_{q^m}^2$, a contradiction to $n_i< 2m$.
Therefore, we have that if $\mathcal{U}_i$ is such that $\mathcal{U}_i\supseteq \langle {x}_j \rangle_{\F_{q^m}}$ for some $j$ then
    \begin{equation}\label{eq:condUi}  
    \dim_{\fq}(\mathcal{U}_i\cap \langle {x}_h \rangle_{\F_{q^m}})=
    \begin{cases}
        m & \text{if } h=j, \\
        e & \text{otherwise.}
    \end{cases} 
    \end{equation}
As a consequence, we obtain $ \dim_{\F_q}(\mathcal{U}_{i}) \geq m + e $. If $ \dim_{\F_q}(\mathcal{U}_{i}) > m + e $, then, by using Grassmann formula, we derive that $ \dim_{\F_q}(\mathcal{U}_{i} \cap \langle {x}_h \rangle) > e $ for any $h \ne j$, which contradicts \eqref{eq:condUi}. Therefore, we must have $ \dim_{\F_q}(\mathcal{U}_{i}) = m + e $.
    
    \textbf{Case 2}: Suppose that $\mathcal{U}_i$ does not contain any of the subspaces of $\Lambda_1$. By assumption
    \begin{equation}\label{eq:condUie-dim2} \dim_{\fq}(\mathcal{U}_i \cap \langle \mathbf{x}_j\rangle_{\F_{q^m}})=e, \end{equation}
    for any $j$.
    If $e=0$ we get that $\dim_{\fq}(\mathcal{U}_i)=0$, a contradiction with the  nondegenerate condition of the code.
    Suppose that $e>0$ and observe that $n_i\geq 2e$. \eqref{eq:condUie-dim2} also implies that 
    \[ L_{\mathcal{U}_i}=\Lambda_1, \]
    and by combining \eqref{eq:condUie-dim2} and Equations \eqref{eq:pesicard} and \eqref{eq:pesivett}, we have that
    \[ |L_{\mathcal{U}_i}| = \frac{q^{n_i}-1}{q^e-1}. \]
    Hence, we derive that
    \[ q^m+1=|L_{\mathcal{U}_i}| = \frac{q^{n_i}-1}{q^e-1}. \]
    We have that $\frac{q^{n_i}-1}{q^e-1}\ne q^m+1$. Indeed, since $\frac{q^{n_i}-1}{q^e-1} \in \NN$ we have that $e \mid n_i$ and, since $n_i \ne e$, this implies $q^{n_i-e}+\ldots +q^e+1 \equiv q^e+1 \pmod{q^{2e}}$ and  $q^m+1 \equiv 1 \pmod{q^{2e}}$. Hence, these numbers cannot be equal.
    
    Therefore, only Case 1 is possible and so for any $i \in [t]$ we have \eqref{eq:condUi} and $ \dim_{\fq}(\mathcal{U}_{i}) = m + e $, from which it follows that $\mathcal{U}_{i}=\langle {x}_j\rangle_{\F_{q^m}}+ \mathcal{S}_e^i$, where $\mathcal{S}_e^i$ is an $e$-dimensional $\mathbb{F}_q$-subspace of $\mathbb{F}_{q^m}^2$, and $\mathcal{S}_e^i \cap \langle {x}_j \rangle_{\F_{q^m}} = \{0\}$.
    Now, from \eqref{eq:t'andt} we know that for any element in $\Lambda_1$ there are exactly $t'$ elements among the $\mathcal{U}_i$'s through it and there is no $\mathcal{U}_i$ not containing an element of $\Lambda_1$. This implies that the the total number of the $\mathcal{U}_i$'s (counted eventually with multiplicity) is
    \[ t=t'(q^m+1). \]
    The assertion is now proved.
\end{proof}

In the last part of this section, our aim is to provide some insight in the theory of constant rank-profile sum-rank metric codes of larger dimension. Here, we will use again the geometric approach together with a notion of duality, which will allow us to work again with points instead of hyperplanes.

Let $\sigma \colon\F_{q^m}^k\times \F_{q^m}^k\rightarrow \mathbb{F}_{q^m}$ be a  nondegenerate reflexive sesquilinear form over $\F_{q^m}^k$ and consider
$\sigma' \colon \F_{q^m}^k \times \F_{q^m}^k \rightarrow \mathbb{F}_q$ by $\sigma':(u,v)\mapsto \mathrm{Tr}_{q^m/q}(\sigma(u,v))$.
Note that, once we regard $\F_{q^m}^k$ as an $\F_q$-vector space, then $\sigma^\prime$ is again a  nondegenerate reflexive sesquilinear form on $\F_{q^m}^k$.
Let $\perp$ and $\perp'$ be the orthogonal complement maps defined by $\sigma$ and $\sigma'$ on the lattices of $\F_{q^m}$-linear and 
$\F_q$-linear subspaces of $\F_{q^m}^k$, respectively.

We list some properties that we will use later.

\begin{proposition}(see \cite[Section~2]{polverino2010linear})\label{prop:dualityproperties}
With the above notation,
\begin{itemize}
    \item[(i)] $\dim_{\F_{q^m}}(W)+\dim_{\F_{q^m}}(W^\perp)=k$, for every $\F_{q^m}$-subspace $W$ of $\F_{q^m}^k$.
    \item[(ii)] $\dim_{\F_{q}}(U)+\dim_{\F_{q}}(U^{\perp'})=mk$, for every $\F_{q}$-subspace $U$ of $\F_{q^m}^k$.
    \item[(iii)] $T_1\subseteq T_2$ implies  $T_1^{\perp'}\supseteq T_2^{\perp'}$, for every $\F_q$-subspaces $T_1,T_2$ of $\F_{q^m}^k$.
    \item[(iv)] $W^\perp=W^{\perp'}$, for every $\F_{q^m}$-subspace $W$ of $\F_{q^m}^k$.
    \item[(v)] Let $W$ and $U$ be an $\F_{q^m}$-subspace and an $\F_q$-subspace of $\fqm^k$ of dimension $s$ and $t$, respectively. Then
    \begin{equation}\label{eq:dualweight} \dim_{\F_q}(U^{\perp'}\cap W^{\perp'})-\dim_{\F_q}(U\cap W)=mk-t-sm. \end{equation}
\end{itemize}
\end{proposition}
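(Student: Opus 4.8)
The plan is to treat the five items in order, deriving the substantive formula (v) and the identification (iv) from the two basic dimension formulas (i) and (ii) together with the definitional monotonicity (iii).

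First I would establish (i) and (ii) by the standard nondegeneracy argument. For (i), nondegeneracy of $\sigma$ means the induced map $\F_{q^m}^k \to W^\ast$ sending $v \mapsto \sigma(\cdot,v)|_W$ has kernel exactly $W^\perp$ and is surjective, so rank--nullity gives $\dim_{\F_{q^m}} W^\perp = k - \dim_{\F_{q^m}} W$ (the map is only semilinear if $\sigma$ is, but a field automorphism does not affect dimensions, so the count is unchanged). Item (ii) is the same statement applied to $\sigma'$, which—as already observed in the text—is a nondegenerate reflexive sesquilinear form on $\F_{q^m}^k$ regarded as an $\F_q$-space of dimension $mk$; the identical argument yields $\dim_{\F_q} U + \dim_{\F_q} U^{\perp'} = mk$. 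Item (iii) is immediate from the definition: if $T_1 \subseteq T_2$, then every vector orthogonal under $\sigma'$ to all of $T_2$ is in particular orthogonal to all of $T_1$, so $T_2^{\perp'} \subseteq T_1^{\perp'}$.

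Next I would prove (iv). The inclusion $W^\perp \subseteq W^{\perp'}$ holds for free, since $\sigma(w,v)=0$ for all $w \in W$ forces $\mathrm{Tr}_{q^m/q}(\sigma(w,v))=0$, i.e. $\sigma'(w,v)=0$. For the reverse inclusion one uses that $W$ is an $\F_{q^m}$-subspace: if $v \in W^{\perp'}$, then $\mathrm{Tr}_{q^m/q}(\sigma(\lambda w, v)) = 0$ for every $w \in W$ and every $\lambda \in \F_{q^m}$, and since $\sigma$ is (semi)linear in its first slot this reads $\mathrm{Tr}_{q^m/q}(\mu\,\sigma(w,v)) = 0$ for all $\mu \in \F_{q^m}$; the nondegeneracy of the trace form $(a,b)\mapsto \mathrm{Tr}_{q^m/q}(ab)$ then forces $\sigma(w,v)=0$, whence $v \in W^\perp$. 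Alternatively—and this is the shortcut I would actually record—equality follows from (i) and (ii) by a dimension count: $W^\perp$ is an $\F_{q^m}$-subspace with $\dim_{\F_q} W^\perp = m(k-s) = mk-sm$, while (ii) gives $\dim_{\F_q} W^{\perp'} = mk - \dim_{\F_q} W = mk - sm$; since $W^\perp \subseteq W^{\perp'}$ and the two $\F_q$-dimensions agree, the subspaces coincide.

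Finally, for (v) I would invoke the De Morgan law for the $\perp'$ operator, namely $U^{\perp'} \cap W^{\perp'} = (U+W)^{\perp'}$, which holds directly from the definition and the $\F_q$-bilinearity of $\sigma'$. Applying (ii) to the $\F_q$-subspace $U+W$ gives $\dim_{\F_q}(U^{\perp'}\cap W^{\perp'}) = mk - \dim_{\F_q}(U+W)$, and Grassmann's identity together with $\dim_{\F_q} U = t$ and $\dim_{\F_q} W = sm$ yields $\dim_{\F_q}(U+W) = t + sm - \dim_{\F_q}(U\cap W)$. Substituting and rearranging produces exactly $\dim_{\F_q}(U^{\perp'}\cap W^{\perp'}) - \dim_{\F_q}(U\cap W) = mk - t - sm$, as claimed. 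I do not expect a serious obstacle: the whole statement is bookkeeping once (i) and (ii) are in place. The one genuine point—and the only place the specific structure of $\sigma'$ enters—is (iv), identifying the $\F_{q^m}$-orthogonal with the $\F_q$-orthogonal, whose content is precisely the nondegeneracy of the trace form; the main thing to be careful about throughout is consistently distinguishing $\F_q$-dimensions from $\F_{q^m}$-dimensions, using $\dim_{\F_q} W = m\dim_{\F_{q^m}} W = sm$.
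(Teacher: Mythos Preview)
Your argument is correct in every part; the only thing to note is that the paper itself does not supply a proof of this proposition at all, but simply records it with a reference to \cite{polverino2010linear}. So there is no ``paper's approach'' to compare against: you have filled in what the authors chose to cite, and your derivation of (v) via $(U+W)^{\perp'}=U^{\perp'}\cap W^{\perp'}$ together with (ii) and Grassmann is exactly the standard route.
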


In what follows, we will consider as $\sigma$ the standard scalar product.

\begin{proposition}\label{prop:geometricranklist2}
Let $\mathcal{C}$ be a nondegenerate $(\mu_1,\dots,\mu_t)$-constant rank-profile sum-rank metric code with parameters $[\bfn, k, d]_{q^m/q}$. Let $G = (G_1 \mid \dots \mid G_t)$ be a generator matrix of $\mathcal{C}$. Let $\mathcal{U}_i$ be the $\mathbb{F}_q$-span of the columns of $G_i$, for every $i \in [t]$. 
For all $v \in \mathbb{F}_{q^m}^k$, we have that
\[\tau(m-\dim_{\mathbb{F}_q}(\mathcal{U}_1^{\perp'}\cap \la v\ra_{\F_{q^m}}),\ldots, m-\dim_{\mathbb{F}_q}(\mathcal{U}_t^{\perp'}\cap \la v\ra_{\F_{q^m}}))= (\mu_1,\dots,\mu_t). \]
\end{proposition}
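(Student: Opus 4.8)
The plan is to deduce this statement directly from Proposition~\ref{prop:geometricranklist} by applying the duality relation in item (v) of Proposition~\ref{prop:dualityproperties}. Proposition~\ref{prop:geometricranklist} tells us that for every $v\in\F_{q^m}^k$ the tuple
\[
\tau\bigl(n_1-\dim_{\F_q}(\mathcal{U}_1\cap v^\perp),\ldots,n_t-\dim_{\F_q}(\mathcal{U}_t\cap v^\perp)\bigr)=(\mu_1,\ldots,\mu_t),
\]
so it suffices to show that each entry $n_i-\dim_{\F_q}(\mathcal{U}_i\cap v^\perp)$ equals the corresponding entry $m-\dim_{\F_q}(\mathcal{U}_i^{\perp'}\cap\langle v\rangle_{\F_{q^m}})$ appearing in the target formula. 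Since $\tau$ only reorders, matching the entries componentwise is enough.

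The key computation is to apply identity \eqref{eq:dualweight} with the substitutions $W=v^\perp$ and $U=\mathcal{U}_i$. Here I would use item (iv), $W^\perp=W^{\perp'}$, to identify $(v^\perp)^{\perp'}=(v^\perp)^{\perp}=\langle v\rangle_{\F_{q^m}}$, and observe that $W=v^\perp$ has $\F_{q^m}$-dimension $s=k-1$ by item (i), while $U=\mathcal{U}_i$ has $\F_q$-dimension $n_i$ (it is the system associated with $\mathcal{C}$, so $\dim_{\F_q}\mathcal{U}_i=n_i$). Plugging these into \eqref{eq:dualweight} gives
\[
\dim_{\F_q}(\mathcal{U}_i^{\perp'}\cap\langle v\rangle_{\F_{q^m}})-\dim_{\F_q}(\mathcal{U}_i\cap v^\perp)=mk-n_i-(k-1)m=m-n_i.
\]
Rearranging yields $n_i-\dim_{\F_q}(\mathcal{U}_i\cap v^\perp)=m-\dim_{\F_q}(\mathcal{U}_i^{\perp'}\cap\langle v\rangle_{\F_{q^m}})$, exactly as required.

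Combining the two displays, for every $v\in\F_{q^m}^k$ the $i$-th entry of the tuple inside $\tau$ in Proposition~\ref{prop:geometricranklist} coincides with the $i$-th entry of the tuple inside $\tau$ in the present statement, so applying $\tau$ to both produces the same reordered tuple $(\mu_1,\ldots,\mu_t)$. This completes the argument.

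The substantive point—and the only place where one must be careful—is the bookkeeping of dimensions when invoking \eqref{eq:dualweight}: one must correctly read off $s=\dim_{\F_{q^m}}(v^\perp)=k-1$ for the hyperplane $v^\perp$ and $t=\dim_{\F_q}(\mathcal{U}_i)=n_i$ for the system component, and then correctly identify $(v^\perp)^{\perp'}$ with $\langle v\rangle_{\F_{q^m}}$ via items (i) and (iv). I do not anticipate a genuine obstacle here; the proof is essentially a one-line application of the duality formula, and the main risk is a sign or index slip in the arithmetic $mk-n_i-(k-1)m=m-n_i$.
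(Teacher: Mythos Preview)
Your proposal is correct and follows essentially the same route as the paper: invoke Proposition~\ref{prop:geometricranklist} and then apply the duality identity \eqref{eq:dualweight} with $W=v^\perp$ and $U=\mathcal{U}_i$, using item (iv) to identify $(v^\perp)^{\perp'}=\langle v\rangle_{\F_{q^m}}$. Your bookkeeping of the dimensions is in fact more explicit than the paper's own (rather terse) proof.
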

\begin{proof}
From Proposition \ref{prop:geometricranklist}, we have that 
\[\tau(n_1-\dim_{\mathbb{F}_q}(\mathcal{U}_1\cap v^\perp),\ldots, n_t-\dim_{\mathbb{F}_q}(\mathcal{U}_t\cap v^\perp))= (\mu_1,\dots,\mu_t), \]
for any $v \in \F_{q^m}^k\setminus\{0\}$.
Let us fix $v \in \F_{q^m}^k\setminus\{0\}$ and $i \in [t]$ and observe that, by using \eqref{eq:dualweight},
\[ \dim_{\fq}(\mathcal{U}_i\cap v^\perp)=mk-\dim_{\fq}((\mathcal{U}_i\cap v^\perp)^{\perp'})=n_i-m+\dim_{\fq}(\mathcal{U}_i^{\perp'}\cap \la v\ra_{\F_{q^m}}). \]
\end{proof}


\begin{remark}
    As before, we need to control the intersection of the systems with the hyperplane of $\F_{q^m}^k$.
    The number of $(k-1)$-dimensional $\F_{q^m}$-subspaces in $\F_{q^m}^k$ is $\frac{q^{mk}-1}{q^m-1}$ and we can list them as follows
    \[ \Lambda_{k-1}=\left\{  {x}_1^\perp , \dots, {x}_{\frac{q^{mk}-1}{q^m-1}}^\perp \right\}, \]
    for some ${x}_1,\ldots,{x}_{\frac{q^{mk}-1}{q^m-1}} \in \F_{q^m}^k$.
    Also, note that the set
    \[ \Lambda_{1}=\left\{  \la {x}_1\ra_{\F_{q^m}} , \dots, \la {x}_{\frac{q^{mk}-1}{q^m-1}}\ra_{\F_{q^m}} \right\}, \]
    corresponds to the set of points in $\mathrm{PG}(k-1,q^m)$.
    We will use this notation in the next result.
\end{remark}

We can now prove a characterization result for constant rank-profile sum-rank metric code where the rank profile has all the entries but one equal to $m$.

\begin{theorem}
Let $\mathcal{C}$ be a  nondegenerate $(\underbrace{ m, \dots, m}_{\frac{q^{mk}-1}{q^m-1}-1 \text{ times}}, r)$-constant rank-profile sum-rank metric code with parameters $[\mathbf{n}, k, d]_{q^m/q}$, where $0\leq r< m$ and $n_i=mk-m+r$ for any $i$.
We have that a system associated with $\C$ is $\left(\mathcal{U}_1,\ldots,\mathcal{U}_{\frac{q^{mk}-1}{q^m-1}}\right)$, where
\begin{equation}\label{eq:Uiperp}
    \mathcal{U}_{i} = (\mathcal{S}_{e}^i)^{\perp'},
\end{equation}
where $ \mathcal{S}_{e}^i $ is an $ (m-r) $-dimensional $\fq$-subspace of $\la x_i \ra_{\fqm}$.
Conversely, choosing the $\mathcal{U}_i$'s as in \eqref{eq:Uiperp}, an associated code is a  nondegenerate $(\underbrace{ m, \dots, m}_{\frac{q^{mk}-1}{q^m-1}-1 \text{ times}}, r)$-constant rank-profile sum-rank metric code with parameters $[\mathbf{n}, k, d]_{q^m/q}$.
\end{theorem}
\begin{proof}
    By Proposition \ref{prop:geometricranklist2}, we have that 
    \begin{equation}\label{eq:1erestm} 
    \dim_{\fq}(\mathcal{U}_i^{\perp'}\cap \langle {x}_j \rangle_{\F_{q^m}})=
    \begin{cases}
        0 & \text{for } \frac{q^{mk}-1}{q^m-1} \text{ values of } i , \\
        m-r & \text{for one value of  } i,
    \end{cases} \end{equation}
    for any $j \in \left[ \frac{q^{mk}-1}{q^m-1} \right]$.
    Since the number of the $\mathcal{U}_i$'s is exactly has the size of $\Lambda_{k-1}$ and since $\dim_{\fq}(\mathcal{U}_i^{\perp'})=m-r$, we have that, up to reordering the $\mathcal{U}_i$'s, for any $i \in \left[\frac{q^{mk}-1}{q^m-1}\right]$ Equation \eqref{eq:Uiperp} is satisfied.
    For the converse, one can immediately check the parameters by using again Proposition \ref{prop:geometricranklist2}.
\end{proof}

\section{One-weight MSRD codes}

In this section, we will discuss one-weight codes that are also maximum sum-rank distance codes. We will survey on the case of codes of dimension two and we will discuss the dimension three case.

We start by recalling the geometric description of an MSRD code.

\begin{corollary}(see \cite[Corollary 3.10]{neri2023geometry})\label{teo:designMSRD} 
Let $\C$ be a sum-rank nondegenerate $\Fmnkd$ code and let $(\mathcal{U}_1,\ldots,\mathcal{U}_t)$ be a system associated with $\C$. $\mathcal{C}$ is an MSRD code if and only if 
\[
\max\left\{ \sum_{i=1}^t \dim_{\fq}(\mathcal{U}_i \cap \mathcal{H})  \st \mathcal{H} \mbox{ hyperplane of }\F_{q^m}^k  \right\} \leq k-1.
\]
\end{corollary}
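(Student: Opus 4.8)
The plan is to read the statement directly off the weight--dimension dictionary of Theorem \ref{th:connection}, combined with the Singleton bound. I would begin by abbreviating the quantity under study: set
\[
M \coloneqq \max\left\{ \sum_{i=1}^t \dim_{\fq}(\mathcal{U}_i \cap \mathcal{H}) \st \mathcal{H} \text{ is a hyperplane of } \F_{q^m}^k \right\}.
\]
Since $(\mathcal{U}_1,\ldots,\mathcal{U}_t)$ is a system associated with the sum-rank nondegenerate code $\C$, Theorem \ref{th:connection} applies verbatim and furnishes the clean identity $d = N - M$. This single identity is the engine of the whole argument, as it converts every distance statement into a statement about $M$.

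With this in hand, the MSRD condition rephrases itself mechanically. By definition $\C$ is MSRD precisely when $d = N - k + 1$; substituting $d = N - M$ shows that this is equivalent to $M = k - 1$. Thus the entire content of the corollary collapses to proving that the inequality $M \leq k - 1$ is equivalent to the equality $M = k - 1$, and for that it suffices to establish the one remaining fact, namely that $M \geq k - 1$ holds for \emph{every} associated system.

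This last inequality is exactly the Singleton bound in geometric disguise. Indeed, Theorem \ref{th:Singletonbound} gives $d \leq N - k + 1$; feeding in $d = N - M$ yields $N - M \leq N - k + 1$, that is $M \geq k - 1$. Consequently $M \leq k-1$ forces $M = k-1$, while $M = k-1$ trivially implies $M \leq k-1$. Chaining the equivalences, $\C$ is MSRD if and only if $M = k-1$ if and only if $M \leq k - 1$, which is the claim.

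I do not anticipate a genuine obstacle: every step is either a substitution into the identity $d = N - M$ or an invocation of an already-established result, and the inequality $M \geq k-1$ is merely the Singleton bound restated. The only point worth flagging is that one could, if desired, bypass the quotation of Theorem \ref{th:Singletonbound} and argue $M \geq k-1$ directly from the system: because the $\mathcal{U}_i$ span $\F_{q^m}^k$ over $\F_{q^m}$, their union contains $k-1$ $\,\F_{q^m}$-linearly independent vectors $u^{(1)},\ldots,u^{(k-1)}$, each lying in some block $\mathcal{U}_{i_j}$; taking $\mathcal{H} = \langle u^{(1)},\ldots,u^{(k-1)}\rangle_{\F_{q^m}}$ and counting $\F_q$-dimensions block by block gives $\sum_i \dim_{\fq}(\mathcal{U}_i \cap \mathcal{H}) \geq k-1$. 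This recovers $M \geq k-1$ intrinsically and, via $d = N - M$, even reproves the Singleton bound geometrically; but for the corollary the appeal to Theorem \ref{th:Singletonbound} is the most economical route.
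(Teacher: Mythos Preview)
Your proof is correct. Note, however, that the paper does not supply its own proof of this corollary: it is quoted as \cite[Corollary 3.10]{neri2023geometry} and stated without argument. Your derivation from Theorem~\ref{th:connection} and Theorem~\ref{th:Singletonbound} is exactly the intended one-line deduction (and is essentially how the cited reference obtains it as well): the identity $d=N-M$ reduces the MSRD condition to $M=k-1$, and the Singleton bound forces $M\geq k-1$ in general, so $M\leq k-1$ and $M=k-1$ coincide. The optional intrinsic argument you sketch for $M\geq k-1$ is also fine, with the caveat that several of the chosen $u^{(j)}$ may land in the same block; since $\F_{q^m}$-independence implies $\F_q$-independence, grouping them by block still yields $\sum_i \dim_{\F_q}(\mathcal U_i\cap\mathcal H)\geq k-1$.
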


Therefore, a one-weight MSRD code can be characterized geometrically as follows.

\begin{corollary}\label{teo:designMSRDoneweight} 
Let $\C$ be a sum-rank nondegenerate $\Fmnkd$ code and let $(\mathcal{U}_1,\ldots,\mathcal{U}_t)$ be a system associated with $\C$. $\mathcal{C}$ is a one-weight MSRD code if and only if 
\[
 \sum_{i=1}^t \dim_{\fq}(\mathcal{U}_i \cap \mathcal{H})   = k-1,
\]
for every hyperplane $\mathcal{H}$ of $\F_{q^m}^k$.
\end{corollary}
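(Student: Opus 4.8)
The plan is to combine the general MSRD characterization of Corollary \ref{teo:designMSRD} with the one-weight hypothesis, and to show that the two conditions together force the displayed equality for \emph{every} hyperplane, not merely the inequality for the maximizing one. First I would invoke Corollary \ref{teo:designMSRD}: since $\mathcal{C}$ is MSRD, we have $\sum_{i=1}^t \dim_{\fq}(\mathcal{U}_i \cap \mathcal{H}) \leq k-1$ for every hyperplane $\mathcal{H}$ of $\F_{q^m}^k$. Next I would translate the one-weight condition through the weight-dimension dictionary of Theorem \ref{th:connection}: for any $v \in \F_{q^m}^k \setminus \{0\}$ with associated hyperplane $\mathcal{H} = v^{\perp}$, we have $\ww(vG) = N - \sum_{i=1}^t \dim_{\fq}(\mathcal{U}_i \cap v^{\perp})$. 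Because every nonzero codeword has the \emph{same} weight $d$, the sum $\sum_{i=1}^t \dim_{\fq}(\mathcal{U}_i \cap \mathcal{H})$ must be \emph{constant} in $\mathcal{H}$, equal to $N - d$.

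The key step is then to identify this constant value. Since the code is MSRD, its minimum distance meets the Singleton bound (Theorem \ref{th:Singletonbound}), so $d = N - k + 1$, whence $N - d = k - 1$. Combining this with the constancy just established gives $\sum_{i=1}^t \dim_{\fq}(\mathcal{U}_i \cap \mathcal{H}) = k-1$ for every hyperplane $\mathcal{H}$, which is precisely the claimed equality. For the converse direction, I would assume the displayed equality holds for all hyperplanes. The equality being constant immediately gives, via Theorem \ref{th:connection}, that $\ww(vG) = N - (k-1) = N - k + 1$ is the same for every nonzero $v$, so $\mathcal{C}$ is one-weight; and since the maximum of the intersection sums equals $k-1 \leq k-1$, Corollary \ref{teo:designMSRD} yields that $\mathcal{C}$ is MSRD.

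I do not expect a genuine obstacle here, as the result is essentially an unwinding of two earlier corollaries: the only point requiring care is the logical interplay between ``constant'' and ``bounded by $k-1$.'' Concretely, the one-weight hypothesis alone only guarantees the sum is constant, and the MSRD hypothesis alone only guarantees it is at most $k-1$; neither in isolation forces the exact value on every hyperplane. The cleanest argument notes that a nonnegative integer quantity that is simultaneously constant and bounded above by $k-1$, and whose maximum equals $N-d = k-1$ (using $d = N-k+1$), must equal $k-1$ identically. Thus the mild subtlety is simply to make explicit that ``one-weight'' upgrades the MSRD inequality to an equality uniformly across all hyperplanes, and I would phrase the proof to highlight exactly this upgrade.
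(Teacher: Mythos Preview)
Your proposal is correct and follows exactly the approach implicit in the paper, which states the corollary without proof as an immediate consequence of Corollary~\ref{teo:designMSRD} and Theorem~\ref{th:connection}. Your argument---use Theorem~\ref{th:connection} to see that one-weight forces $\sum_i \dim_{\fq}(\mathcal{U}_i\cap\mathcal{H})$ to be constant equal to $N-d$, then use MSRD (i.e., $d=N-k+1$) to identify that constant as $k-1$, and reverse the steps for the converse---is precisely the intended unwinding.
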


An important property that is satisfied by the systems associated with MSRD code is the following.

\begin{proposition}\label{prop:boxsdesign}(see \cite[Proposition 3.2]{santonastaso2023subspace})
    Let $\C$ be an MSRD code with parameters $\Fmnkd$ and let $(\mathcal{U}_1,\ldots,\mathcal{U}_t)$ be a system associated with $\C$.
    We have that 
    \[
 \sum_{i=1}^t \dim_{\fq}(\mathcal{U}_i \cap \mathcal{S})   \leq j,
\]
for every $j$-dimensional $\fqm$-subspace $\mathcal{S}$ of $\F_{q^m}^k$.
\end{proposition}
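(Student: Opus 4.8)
The plan is to reduce the statement about an arbitrary $j$-dimensional subspace to the hyperplane case furnished by Corollary \ref{teo:designMSRD}, by passing to the quotient of $\F_{q^m}^k$ modulo $\mathcal{S}$ and then playing the resulting minimum-distance lower bound against the Singleton bound of Theorem \ref{th:Singletonbound}. Throughout, set $S \coloneqq \sum_{i=1}^t \dim_{\fq}(\mathcal{U}_i \cap \mathcal{S})$; the goal is to show $S \leq j$. For $j=k$ the left-hand side is $\sum_i n_i = N$, so the bound is only meaningful for proper subspaces; the case $j=0$ is immediate, and I therefore fix $1 \leq j \leq k-1$.

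First I would set up the quotient. Let $\pi \colon \F_{q^m}^k \to \F_{q^m}^k/\mathcal{S} \eqqcolon V'$ be the canonical projection, so that $\dim_{\fqm} V' = k - j \eqqcolon k'$, and write $\overline{W} \coloneqq \pi(W)$. The hyperplanes $\mathcal{H}$ of $\F_{q^m}^k$ containing $\mathcal{S}$ are in bijection with the hyperplanes $\overline{\mathcal{H}}$ of $V'$. The key computation is that, for every such $\mathcal{H}$ and every $i \in [t]$,
\[ \dim_{\fq}(\mathcal{U}_i \cap \mathcal{H}) = \dim_{\fq}(\mathcal{U}_i \cap \mathcal{S}) + \dim_{\fq}(\overline{\mathcal{U}_i} \cap \overline{\mathcal{H}}). \]
This follows from the identity $\pi(\mathcal{U}_i \cap \mathcal{H}) = \overline{\mathcal{U}_i} \cap \overline{\mathcal{H}}$ (whose nontrivial inclusion uses $\mathcal{S} \subseteq \mathcal{H}$) together with $\ker(\pi|_{\mathcal{U}_i \cap \mathcal{H}}) = \mathcal{U}_i \cap \mathcal{S}$, again because $\mathcal{S} \subseteq \mathcal{H}$.

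Next I would introduce the quotient code. The tuple $(\overline{\mathcal{U}_1}, \ldots, \overline{\mathcal{U}_t})$ spans $V' \cong \F_{q^m}^{k'}$ over $\fqm$ (since the $\mathcal{U}_i$ span $\F_{q^m}^k$), so after discarding the blocks with $\overline{\mathcal{U}_i} = \{0\}$ it is the system of a nondegenerate code $\mathcal{C}'$ of dimension $k'$ and length $N' = \sum_{i=1}^t \dim_{\fq} \overline{\mathcal{U}_i} = N - S$. Summing the displayed identity over $i$ and using that $\C$ is MSRD, so that $\sum_{i=1}^t \dim_{\fq}(\mathcal{U}_i \cap \mathcal{H}) \leq k-1$ for every hyperplane by Corollary \ref{teo:designMSRD}, gives, for every hyperplane $\overline{\mathcal{H}}$ of $V'$,
\[ \sum_{i=1}^t \dim_{\fq}(\overline{\mathcal{U}_i} \cap \overline{\mathcal{H}}) = \sum_{i=1}^t \dim_{\fq}(\mathcal{U}_i \cap \mathcal{H}) - S \leq (k-1) - S. \]
Applying Theorem \ref{th:connection} to $\mathcal{C}'$, this bounds its maximal hyperplane intersection, whence its minimum distance satisfies $d' = N' - \max_{\overline{\mathcal{H}}} \sum_i \dim_{\fq}(\overline{\mathcal{U}_i} \cap \overline{\mathcal{H}}) \geq (N - S) - (k - 1 - S) = N - k + 1$.

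Finally I would invoke the Singleton bound (Theorem \ref{th:Singletonbound}) for $\mathcal{C}'$, namely $d' \leq N' - k' + 1 = (N - S) - (k - j) + 1$. Combining the two estimates for $d'$ gives $N - k + 1 \leq N - S - k + j + 1$, that is $S \leq j$, as desired. The main point requiring care is the quotient bookkeeping: verifying the dimension identity in the first step, and confirming that the quotient system still defines a nondegenerate code of dimension exactly $k' = k-j$ and length $N-S$, so that both Theorem \ref{th:connection} and Theorem \ref{th:Singletonbound} legitimately apply, including the harmless removal of the blocks that collapse to $\{0\}$ under $\pi$.
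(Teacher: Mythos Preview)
The paper does not actually prove this proposition: it is quoted verbatim from \cite[Proposition 3.2]{santonastaso2023subspace} and stated without argument. So there is no in-paper proof to compare against.

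Your quotient argument is correct. The identity $\dim_{\fq}(\mathcal{U}_i\cap\mathcal{H})=\dim_{\fq}(\mathcal{U}_i\cap\mathcal{S})+\dim_{\fq}(\overline{\mathcal{U}_i}\cap\overline{\mathcal{H}})$ for hyperplanes $\mathcal{H}\supseteq\mathcal{S}$ is justified exactly as you say, and combining the MSRD hyperplane bound from Corollary~\ref{teo:designMSRD} with the Singleton bound for the projected code yields $S\le j$ cleanly. Your bookkeeping remarks are to the point: the images $\overline{\mathcal{U}_i}$ span $V'$ because the $\mathcal{U}_i$ span $\F_{q^m}^k$, so the projected system is nondegenerate of dimension $k-j$; blocks collapsing to $\{0\}$ contribute nothing to either the length $N'=N-S$ or the hyperplane sums and can be dropped; and the possible failure of the ordering convention $n_1'\ge\cdots\ge n_{t'}'$ is immaterial for Theorems~\ref{th:connection} and~\ref{th:Singletonbound}. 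You are also right to flag that the statement is vacuous (indeed false in general) for $j=k$, so the restriction to $1\le j\le k-1$ is the meaningful range; this is consistent with how the proposition is used downstream in the paper (for points and hyperplanes).
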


This implies that if $(\mathcal{U}_1,\ldots,\mathcal{U}_t)$ is a system associated with an MSRD code, then the linear sets $L_{\mathcal{U}_i}$'s are scattered with respect to the hyperplanes (i.e. $w_{L_{\mathcal{U}_i}}(\mathcal{H})\leq k-1$ for any hyperplane in $\mathrm{PG}(k-1,q^m)$ and any $i \in [t]$, see \cite{lunardon2017mrd,sheekey2019scatterd}) and they are pairwise disjoint.
We will use these properties to handle the cases where the dimension of the code is either two or three.
One-weight MSRD codes of dimension two have been studied in \cite{neri2023geometry}.
By using Corollary \ref{teo:designMSRDoneweight} one can prove the following.

\begin{proposition}(see \cite[Corollary 7.2]{neri2023geometry})
    Let $\C$ be a sum-rank nondegenerate $\Fmnkd$ code and let $(\mathcal{U}_1,\ldots,\mathcal{U}_t)$ be a system associated with $\C$. $\mathcal{C}$ is a one-weight MSRD code if and only if for every $i \in [t]$, $L_{\mathcal{U}_i}$ is a scattered $\fq$-linear set in $\mathrm{PG}(1,q^m)$, the $L_{\mathcal{U}_i}$'s are pairwise disjoint and $L_{\mathcal{U}_1}\cup\ldots\cup L_{\mathcal{U}_t}=\mathrm{PG}(1,q^m)$.
\end{proposition}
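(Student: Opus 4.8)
The plan is to apply the characterization in Corollary \ref{teo:designMSRDoneweight}, which says that a sum-rank nondegenerate code $\C$ with associated system $(\mathcal{U}_1,\ldots,\mathcal{U}_t)$ is a one-weight MSRD code if and only if $\sum_{i=1}^t \dim_{\fq}(\mathcal{U}_i \cap \mathcal{H}) = k-1$ for every hyperplane $\mathcal{H}$. Since we are in dimension $k=2$, the hyperplanes of $\F_{q^m}^2$ are exactly the $q^m+1$ one-dimensional $\F_{q^m}$-subspaces, i.e.\ the points of $\mathrm{PG}(1,q^m)$. Thus the governing condition becomes
\[
\sum_{i=1}^t \dim_{\fq}(\mathcal{U}_i \cap P) = 1
\]
for every point $P$ of $\mathrm{PG}(1,q^m)$, where I identify $P$ with its underlying $1$-dimensional $\F_{q^m}$-subspace. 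The whole proof will consist of translating this single numerical identity into the three geometric conditions in the statement (scattered, pairwise disjoint, covering), and conversely.

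First I would recall that for a point $P = \langle x\rangle_{\F_{q^m}}$, the quantity $\dim_{\fq}(\mathcal{U}_i \cap P)$ is precisely the weight $w_{L_{\mathcal{U}_i}}(P)$ of $P$ in the linear set $L_{\mathcal{U}_i}$; in particular $P \in L_{\mathcal{U}_i}$ if and only if this weight is at least $1$. For the forward direction, I would read off the identity $\sum_i w_{L_{\mathcal{U}_i}}(P) = 1$ directly. Since each summand is a nonnegative integer, for each point $P$ exactly one index $i$ contributes, and it contributes the value $1$. The fact that a single $i$ contributes for each $P$ gives that the $L_{\mathcal{U}_i}$ are pairwise disjoint (no point lies in two of them), and the fact that some $i$ always contributes gives that $L_{\mathcal{U}_1}\cup\ldots\cup L_{\mathcal{U}_t} = \mathrm{PG}(1,q^m)$. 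Moreover, the contributing weight being exactly $1$ (never larger) means $w_{L_{\mathcal{U}_i}}(P)\leq 1$ for every point and every $i$, which is exactly the definition of $L_{\mathcal{U}_i}$ being scattered in $\mathrm{PG}(1,q^m)$. Alternatively, the scattered and disjointness conclusions already follow from Proposition \ref{prop:boxsdesign} applied to one-dimensional subspaces $\mathcal{S}$, as noted in the paragraph preceding the statement; I would invoke that to streamline.

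For the converse, I would assume the three geometric conditions and verify the identity $\sum_i \dim_{\fq}(\mathcal{U}_i \cap P)=1$ for each point $P$. Scatteredness forces each $\dim_{\fq}(\mathcal{U}_i\cap P)=w_{L_{\mathcal{U}_i}}(P)\in\{0,1\}$, equal to $1$ exactly when $P\in L_{\mathcal{U}_i}$. Pairwise disjointness ensures $P$ lies in at most one $L_{\mathcal{U}_i}$, so at most one summand equals $1$; the covering condition $L_{\mathcal{U}_1}\cup\ldots\cup L_{\mathcal{U}_t}=\mathrm{PG}(1,q^m)$ ensures $P$ lies in at least one, so at least one summand equals $1$. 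Hence the sum is exactly $1$ for every $P$, and Corollary \ref{teo:designMSRDoneweight} yields that $\C$ is a one-weight MSRD code. The main point to handle carefully is the equivalence between the algebraic weight condition and each of the three geometric notions, especially the clean observation that over a projective line the MSRD-scattered condition forces point-weights to be at most $1$; beyond this bookkeeping, the argument is entirely a restatement of the corollary, so no serious obstacle is expected.
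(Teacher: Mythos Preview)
Your proof is correct and follows exactly the approach the paper indicates: the paper does not give an explicit proof but states that the result follows from Corollary \ref{teo:designMSRDoneweight}, which is precisely what you do by specializing to $k=2$ and translating the condition $\sum_i \dim_{\fq}(\mathcal{U}_i\cap P)=1$ into the three geometric properties. Your bookkeeping (scattered $\Leftrightarrow$ all point-weights $\leq 1$; disjointness and covering from the unique contributing index) is clean and complete.
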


In \cite[Corollary 7.2]{neri2023geometry}, it has been proved that for a two-dimensional one-weight MSRD code with $t$ blocks we have that
\[q+1\leq t \leq q^m+1.\]
Moreover, constructions of two-dimensional one-weight MSRD codes have been shown in the cases when $t=q+1$, $t=q^m+1$ and some other cases; see \cite[Section 7.1]{neri2023geometry}.

\subsection{Dimension three}

In this section, we will study the case of one-weight MSRD codes with dimension three, where all the blocks have length equal to $m$. We will first derive some conditions that the parameters of a one-weight MSRD code must satisfy, and then we will derive some non-existence results.

We start by determining the size of the intersection of the lines of $\mathrm{PG}(2,q^m)$ and the union of the linear sets associated with the code.

\begin{proposition}\label{prop:linek=3}
    Let $\C$ be a one-weight MSRD code with parameters $[\bfm,3,d]_{q^m/q}$ with $\bfm=(m,\ldots,m)$, and let $(\mathcal{U}_1,\ldots,\mathcal{U}_t)$ be a system associated with $\C$.
    For every line $\ell$ in $\mathrm{PG}(2,q^m)$ we have that one of the following holds
    \begin{itemize}
        \item there exists $i \in [t]$ such that $|\ell \cap L_{\mathcal{U}_i}|=q+1$ and $\ell \cap L_{\mathcal{U}_j}=\emptyset$ for any $j \ne i$;
        \item there exist $i_1,i_2 \in [t]$ such that $|\ell \cap L_{\mathcal{U}_{i_1}}|=|\ell \cap L_{\mathcal{U}_{i_2}}|=1$ and $\ell \cap L_{\mathcal{U}_j}=\emptyset$ for any $j \ne i_1,i_2$.
    \end{itemize}
\end{proposition}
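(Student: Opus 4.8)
The plan is to read the entire statement off the one-weight MSRD condition (Corollary~\ref{teo:designMSRDoneweight}) combined with the scattered-and-disjoint structure provided by Proposition~\ref{prop:boxsdesign}. Since $\mathbf{m}=(m,\ldots,m)$, each $\mathcal{U}_i$ is an $m$-dimensional $\fq$-subspace of $\F_{q^m}^3$, and a line $\ell$ of $\mathrm{PG}(2,q^m)$ is exactly a hyperplane $\mathcal{H}=\mathrm{PG}(W,\F_{q^m})$ for a $2$-dimensional $\F_{q^m}$-subspace $W$ of $\F_{q^m}^3$. By Corollary~\ref{teo:designMSRDoneweight}, for every such $\mathcal{H}$ one has
\[
\sum_{i=1}^t w_i = k-1 = 2, \qquad \text{where } w_i := \dim_{\fq}(\mathcal{U}_i \cap W).
\]

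First I would exploit that each $w_i$ is a nonnegative integer and that they sum to $2$: the only two possibilities are that exactly one index $i$ has $w_i=2$ (all others zero), or that exactly two indices $i_1,i_2$ have $w_{i_1}=w_{i_2}=1$ (all others zero). These are precisely the two alternatives in the statement, so the remaining task is to convert each value $w_i$ into the cardinality $|\ell\cap L_{\mathcal{U}_i}|$. The elementary identity I would use here is $\ell\cap L_{\mathcal{U}_i}=L_{\mathcal{U}_i\cap W}$, a linear set on the line $\ell$ of rank $w_i$.

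To pin down the cardinalities I would invoke Proposition~\ref{prop:boxsdesign} with $j=1$: for every point $P=\la x\ra_{\F_{q^m}}$ of $\mathrm{PG}(2,q^m)$ it gives $\sum_i \dim_{\fq}(\mathcal{U}_i\cap P)\le 1$, so each $L_{\mathcal{U}_i}$ is scattered (all its points have weight $1$) and the $L_{\mathcal{U}_i}$'s are pairwise disjoint. Hence: if $w_i=0$ then $\mathcal{U}_i\cap W=\{0\}$ and $\ell\cap L_{\mathcal{U}_i}=\emptyset$; if $w_i=1$ then $L_{\mathcal{U}_i\cap W}$ has rank $1$, i.e.\ a single point; and if $w_i=2$ then $L_{\mathcal{U}_i\cap W}$ has rank $2$, and since each of its points $P$ inherits weight $\dim_{\fq}(\mathcal{U}_i\cap P)=1$, it is a \emph{scattered} rank-$2$ linear set on the line, whence by \eqref{eq:pesicard}--\eqref{eq:pesivett} it has exactly $\frac{q^2-1}{q-1}=q+1$ points. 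Plugging these counts into the two distributions of the $w_i$ yields the two claimed alternatives.

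The only step that really requires care is the case $w_i=2$: a priori $\ell\cap L_{\mathcal{U}_i}$ could be a single point of weight $2$ rather than $q+1$ points, and it is exactly the per-point bound of Proposition~\ref{prop:boxsdesign} (equivalently, the scatteredness of $L_{\mathcal{U}_i}$) that excludes this and forces the value $q+1$. The rest is bookkeeping on the weight distribution together with the identification $\ell\cap L_{\mathcal{U}_i}=L_{\mathcal{U}_i\cap W}$.
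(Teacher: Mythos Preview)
Your proposal is correct and follows essentially the same approach as the paper: both apply Corollary~\ref{teo:designMSRDoneweight} to obtain $\sum_i w_{L_{\mathcal{U}_i}}(\ell)=2$, split into the two nonnegative-integer partitions of $2$, and then invoke Proposition~\ref{prop:boxsdesign} (scatteredness) to convert the weight-$2$ case into the cardinality $q+1$. Your write-up is slightly more explicit in spelling out the identification $\ell\cap L_{\mathcal{U}_i}=L_{\mathcal{U}_i\cap W}$ and the $w_i=1$ case, but the underlying argument is identical.
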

\begin{proof}
    By Corollary \ref{teo:designMSRDoneweight}, we have that
    \[\sum_{i \in [t]} w_{L_{\mathcal{U}_i}}(\ell)=2,\]
    for any line $\ell$. 
    Let us fix a line $\ell$, we have that either there exists $i \in [t]$ such that $w_{L_{\mathcal{U}_i}}(\ell)=2$ and $w_{L_{\mathcal{U}_j}}(\ell)=0$ for every $j\ne i$ or there exist $i_1,i_2 \in [t]$ such that  $w_{L_{\mathcal{U}_{i_1}}}(\ell)=w_{L_{\mathcal{U}_{i_2}}}(\ell)=1$ and $w_{L_{\mathcal{U}_j}}(\ell)=0$ for every $j\ne i_1,i_2$.
    By Proposition \ref{prop:boxsdesign}, the $L_{\mathcal{U}_i}$'s are scattered and so $w_{L_{\mathcal{U}_i}}(\ell)=2$ implies that $|L_{\mathcal{U}_i}\cap \ell|=q+1$ and the assertion follows.
\end{proof}

This implies that the union of the linear sets associated with a system as in the above proposition is a $2$-fold blocking set.
A $2$-\textbf{fold blocking set} in $\mathrm{PG}(2,q)$ is a set of points $B$ in $\mathrm{PG}(2,q)$ such that every line intersects $B$ in at least two distinct points. We say that it is \textbf{minimal} is it does not contain a proper subset which is a $2$-fold blocking set, or equivalently if for each point there exists a $2$-secant line (i.e. a line intersecting $B$ in exaclty two distinct points). We refer to \cite{blokhuis2011blocking} for a general overview on blocking sets.

\begin{proposition}\label{prop:minimal2bs}
    Let $\C$ be a sum-rank nondegenerate $[\bfm,3,d]_{q^m/q}$ one-weight MSRD code with $\bfm=(m,\ldots,m)$, and let $(\mathcal{U}_1,\ldots,\mathcal{U}_t)$ be a system associated with $\C$.
    The set $L_{\mathcal{U}_1}\cup \ldots \cup L_{\mathcal{U}_t}$ is a minimal $2$-fold blocking set of size $t\frac{q^m-1}{q-1}$.
\end{proposition}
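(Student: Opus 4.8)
The plan is to verify two things: that $B \coloneqq L_{\mathcal{U}_1}\cup \ldots \cup L_{\mathcal{U}_t}$ is a $2$-fold blocking set, that it has the stated size, and that it is minimal. The blocking property is immediate from Proposition \ref{prop:linek=3}: for every line $\ell$ in $\mathrm{PG}(2,q^m)$, that proposition guarantees that $\ell$ meets $B$ in at least two distinct points (either one $L_{\mathcal{U}_i}$ contributes $q+1 \geq 2$ points on $\ell$, or two distinct linear sets each contribute a single, necessarily distinct, point). Hence every line is at least a $2$-secant to $B$, which is exactly the definition of a $2$-fold blocking set.

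For the size, I would use that the $L_{\mathcal{U}_i}$ are pairwise disjoint (a consequence of Proposition \ref{prop:boxsdesign} together with the MSRD property, as recorded after that proposition), so that $|B| = \sum_{i=1}^t |L_{\mathcal{U}_i}|$. Each $\mathcal{U}_i$ is an $m$-dimensional $\fq$-subspace of $\fqm^3$ whose linear set is scattered with respect to the hyperplanes; by Proposition \ref{prop:boxsdesign} applied with $j=1$, every point of $\mathrm{PG}(2,q^m)$ has weight at most $1$ in $L_{\mathcal{U}_i}$, so $L_{\mathcal{U}_i}$ is a scattered linear set of rank $m$ and therefore $|L_{\mathcal{U}_i}| = \frac{q^m-1}{q-1}$. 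Summing over the $t$ blocks yields $|B| = t\,\frac{q^m-1}{q-1}$.

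For minimality I must show that every point of $B$ lies on some $2$-secant line, i.e.\ a line meeting $B$ in exactly two points. Fix a point $P \in B$, say $P \in L_{\mathcal{U}_i}$. The key observation is that a $2$-secant to $B$ through $P$ is precisely a line $\ell$ realizing the second alternative of Proposition \ref{prop:linek=3} with $i_1 = i$ and $P \in \ell \cap L_{\mathcal{U}_i}$: such a line meets $B$ in exactly the two points $P$ and one point of some $L_{\mathcal{U}_{i_2}}$. So I need to exhibit, through $P$, a line that is \emph{not} of the first type (a full $(q+1)$-secant contained in a single $L_{\mathcal{U}_i}$). The pencil of lines through $P$ in $\mathrm{PG}(2,q^m)$ has $q^m+1$ members; I would argue by counting that not all of them can be of the first type. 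Indeed, if every line through $P$ were a $(q+1)$-secant lying entirely in $L_{\mathcal{U}_i}$, then $L_{\mathcal{U}_i}$ would contain $q^m+1$ lines through $P$ and hence all of $\mathrm{PG}(2,q^m)$, forcing $w_{L_{\mathcal{U}_i}} \geq 1$ everywhere with weight at least $2$ on some two-dimensional subspace, contradicting $\dim_{\fq}\mathcal{U}_i = m < 2m$ (equivalently contradicting the scattered/box bound of Proposition \ref{prop:boxsdesign} with $j=2$, which caps $\sum \dim(\mathcal{U}_i \cap \mathcal{S}) \leq 2$). Thus some line $\ell$ through $P$ is of the second type, giving a $2$-secant through $P$.

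The main obstacle is the minimality argument: I expect the clean way is to show that a line through $P$ of the first type is quite constrained, and that the number of such lines through a fixed point is strictly less than $q^m+1$. The cleanest route is probably to bound, via Proposition \ref{prop:boxsdesign} with $j=2$, how many $(q+1)$-secants of a single scattered $L_{\mathcal{U}_i}$ can pass through one point: since $L_{\mathcal{U}_i}$ has only $\frac{q^m-1}{q-1}$ points in total and a $(q+1)$-secant through $P$ uses $q$ further points of $L_{\mathcal{U}_i}$, the number of first-type lines through $P$ is at most $\frac{|L_{\mathcal{U}_i}| - 1}{q} = \frac{q^{m-1}-1}{q-1} < q^m+1$, leaving room for a $2$-secant in the pencil. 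Once this counting inequality is in place, the remaining steps are routine.
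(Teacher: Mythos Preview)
Your proof is correct, but your minimality argument is more elaborate than necessary. The paper dispenses with the pencil-counting entirely: given $P\in L_{\mathcal{U}_i}$, it simply picks any point $Q\in L_{\mathcal{U}_j}$ with $j\neq i$ (such a $Q$ exists since the linear sets are pairwise disjoint and $t\ge 2$) and observes that the line $PQ$ meets two distinct $L_{\mathcal{U}_\bullet}$'s, hence by Proposition~\ref{prop:linek=3} it must fall into the second alternative and is therefore an exact $2$-secant through $P$. This is a one-line argument that avoids any counting.

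Your route---bounding the number of first-type lines through $P$ by $\frac{|L_{\mathcal{U}_i}|-1}{q}=\frac{q^{m-1}-1}{q-1}<q^m+1$---is valid (you implicitly use that a first-type line through $P$ must lie in $L_{\mathcal{U}_i}$, since $P$ belongs only to that linear set), and it has the mild advantage of giving quantitative information about how many $2$-secants pass through each point. But for the bare statement of minimality the paper's direct construction of a $2$-secant is cleaner. Your size computation via scatteredness and disjointness is fine and matches what the paper implicitly uses.
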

\begin{proof}
    By Proposition \ref{prop:linek=3}, every line intersects $L_{\mathcal{U}_1}\cup \ldots \cup L_{\mathcal{U}_t}$ in either two or $q+1$ points and so $L_{\mathcal{U}_1}\cup \ldots \cup L_{\mathcal{U}_t}$ is a $2$-fold blocking set.
    The minimality is due to the fact that if $P \in L_{\mathcal{U}_1}\cup \ldots \cup L_{\mathcal{U}_t}$ then there exists only one $i \in [t]$ such that $P \in L_{\mathcal{U}_i}$ (as the $ L_{\mathcal{U}_i}$'s are pairwise disjoint).
    Therefore, let $Q \in (L_{\mathcal{U}_1}\cup \ldots \cup L_{\mathcal{U}_t})\setminus L_{\mathcal{U}_i}$, by Proposition \ref{prop:linek=3}, the line $PQ$ is a $2$-secant line. Hence, $L_{\mathcal{U}_1}\cup \ldots \cup L_{\mathcal{U}_t}$ is a minimal $2$-fold blocking set.
\end{proof}

Due to the behavior of the lines with respect to the linear sets of a system associated with a one-weight MSRD code, a relation involving $q,m$ and $t$ needs to be satisfied.

\begin{proposition}\label{prop:condont}
    Let $\C$ be a one-weight MSRD code with parameters $[\bfm,3,d]_{q^m/q}$  with $\bfm=(m,\ldots,m)$. We have that
    \begin{equation}\label{eq:condontk=3} t\frac{\frac{q^m-1}{q-1}\left(\frac{q^m-1}{q-1}-1\right)}{{q+1 \choose 2}} + {t \choose 2}\left( \frac{q^m-1}{q-1} \right)^2=q^{2m}+q^m +1.\end{equation}
    In particular, $t$ is a root of the $Ax^2+Bx+C$, where 
    \[A=\frac{1}{2} \left( \frac{q^m - 1}{q - 1} \right)^2,\quad B= \frac{(q^m - 1)}{2(q - 1)^2(q + 1)} \left( 4q^{m-1} - q^{m+1} - q^m + q - 3 \right),\quad  C=-\left( q^{2m} + q^m + 1 \right).\]
\end{proposition}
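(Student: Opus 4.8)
The plan is to set up a double-counting argument over the pairs $(P_1,P_2)$ of distinct points of $L_{\mathcal{U}_1}\cup\ldots\cup L_{\mathcal{U}_t}$, organized according to which of the two line-types from Proposition~\ref{prop:linek=3} contains them. First I would record the basic counts that follow from the earlier results. By Proposition~\ref{prop:minimal2bs} the blocking set $B=L_{\mathcal{U}_1}\cup\ldots\cup L_{\mathcal{U}_t}$ has exactly $t\,\frac{q^m-1}{q-1}$ points, and since the $L_{\mathcal{U}_i}$'s are scattered of rank $m$, each has size $\frac{q^m-1}{q-1}$ and is pairwise disjoint from the others. The number of lines of $\PG(2,q^m)$ is $q^{2m}+q^m+1$, which is the right-hand side of \eqref{eq:condontk=3}, so the strategy is to count lines (equivalently, point-pairs on lines) in two ways.

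The key step is to partition the lines of $\PG(2,q^m)$ into the two classes described in Proposition~\ref{prop:linek=3}: the \emph{$(q+1)$-lines}, which are the $(q+1)$-secants meeting a single $L_{\mathcal{U}_i}$ in a full $\Fq$-subline, and the \emph{$2$-lines}, which are the true $2$-secants meeting two distinct linear sets in one point each. I would count each class by fixing an appropriate incidence. For the $(q+1)$-lines: every pair of distinct points lying in the \emph{same} $L_{\mathcal{U}_i}$ determines a unique line, which by scatteredness must be a $(q+1)$-secant to that $L_{\mathcal{U}_i}$; conversely each such line carries $\binom{q+1}{2}$ point-pairs. Summing over $i$, the number of $(q+1)$-lines is
\[
\frac{1}{\binom{q+1}{2}}\cdot t\cdot\binom{\frac{q^m-1}{q-1}}{2}
= t\,\frac{\frac{q^m-1}{q-1}\bigl(\frac{q^m-1}{q-1}-1\bigr)}{q(q+1)},
\]
matching the first summand. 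For the $2$-lines: a $2$-line is determined by choosing an unordered pair of \emph{distinct} linear sets $L_{\mathcal{U}_{i_1}},L_{\mathcal{U}_{i_2}}$ and one point from each, and each such choice of point-pair lies on a unique line which (being bisecant) is a $2$-line. This yields $\binom{t}{2}\bigl(\frac{q^m-1}{q-1}\bigr)^2$ lines, the second summand. Since every line of $\PG(2,q^m)$ is exactly one of the two types, the two counts sum to $q^{2m}+q^m+1$, giving \eqref{eq:condontk=3}.

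The remaining step is purely algebraic: I would expand \eqref{eq:condontk=3} as a quadratic in $t$. Writing $s=\frac{q^m-1}{q-1}$, the equation reads $t\,\frac{s(s-1)}{\binom{q+1}{2}}+\binom{t}{2}s^2=q^{2m}+q^m+1$, and collecting the coefficient of $t^2$ from $\binom{t}{2}s^2$ gives $A=\tfrac12 s^2=\tfrac12\bigl(\frac{q^m-1}{q-1}\bigr)^2$. The linear coefficient combines $\frac{s(s-1)}{\binom{q+1}{2}}$ with $-\tfrac12 s^2$ from $\binom{t}{2}$, and after substituting $s=\frac{q^m-1}{q-1}$ and simplifying one obtains the stated $B$; the constant term is simply $C=-(q^{2m}+q^m+1)$. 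The main obstacle I anticipate is not the counting itself but the verification that the coefficient simplification yields exactly the displayed $B$: it requires care in clearing the denominators $(q-1)^2$ and $(q+1)$ and in checking that $4q^{m-1}-q^{m+1}-q^m+q-3$ emerges correctly, so I would carry out that reduction explicitly to confirm the match.
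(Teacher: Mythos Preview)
Your proposal is correct and follows essentially the same approach as the paper: both proofs partition the lines of $\PG(2,q^m)$ into the two types coming from Proposition~\ref{prop:linek=3}, count the $(q+1)$-secant lines via pairs of points inside a single $L_{\mathcal{U}_i}$ divided by $\binom{q+1}{2}$, count the $2$-secant lines via $\binom{t}{2}\bigl(\frac{q^m-1}{q-1}\bigr)^2$, and then rearrange the resulting identity into the quadratic $At^2+Bt+C=0$.
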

\begin{proof}
    Let $(\mathcal{U}_1,\ldots,\mathcal{U}_t)$ be a system associated with $\C$.
    By Proposition \ref{prop:linek=3}, we can consider the following sets of lines in $\mathrm{PG}(2,q^m)$:
    \begin{itemize}
        \item $\mathcal{L}_1$, which is the set of lines meeting one of the $L_{\mathcal{U}_i}$'s in $q+1$ points;
        \item $\mathcal{L}_2$, which is the set of lines meeting exactly two $L_{\mathcal{U}_i}$'s into two distinct points.
    \end{itemize}
    Clearly, $\mathcal{L}_1\cap\mathcal{L}_2=\emptyset$ and $\mathcal{L}_1\cup\mathcal{L}_2$ corresponds to the set of all lines in $\mathrm{PG}(2,q^m)$.
    The number of lines that meet $L_{\mathcal{U}_1}$ is $q+1$ points is the number of pairs of distinct points in $L_{\mathcal{U}_1}$ divided by the number of pairs that give a fixed line, i.e.
    \[\frac{\frac{q^m-1}{q-1}\left(\frac{q^m-1}{q-1}-1\right)}{{q+1 \choose 2}},\]
    and the same happens for all the other linear sets, therefore \[ |\mathcal{L}_1|=t\frac{\frac{q^m-1}{q-1}\left(\frac{q^m-1}{q-1}-1\right)}{{q+1 \choose 2}}. \]
    For the size of $\mathcal{L}_2$ we need to count the number of pairs of points belonging to distinct $L_{\mathcal{U}_i}$'s. Let $i_1,i_2 \in [t]$ with $i_1 \ne i_2$, the number of pairs of points one belonging to $L_{\mathcal{U}_{i_1}}$ and one belonging to $L_{\mathcal{U}_{i_2}}$ is 
    \[\left( \frac{q^m-1}{q-1} \right)^2.\]
    The size of $\mathcal{L}_2$ is then obtained by multiplying this number by the number of choices of two distinct elements in the et $[t]$, so
    \[ |\mathcal{L}_2|={t \choose 2}\left( \frac{q^m-1}{q-1} \right)^2. \]
    Therefore, since $|\mathcal{L}_1\cup \mathcal{L}_2|=|\mathcal{L}_1|+| \mathcal{L}_2|=q^{2m}+q^m+1$, we obtain \eqref{eq:condontk=3}.

    Moreover, observe that \eqref{eq:condontk=3} can be written as follows
\[
\frac{t\left(\frac{q^m - 1}{q - 1}\right)\left(\frac{q^m - 1}{q - 1} - 1\right)}{\binom{q + 1}{2}} + \left(\frac{q^m - 1}{q - 1} \right)^2 \frac{t(t-1)}2-( q^{2m} + q^m + 1)=0,
\]
thus, we obtain
\[
\frac{t^2}{2} \left( \frac{q^m - 1}{q - 1} \right)^2
+ \frac{t (q^m - 1)}{2(q - 1)^2(q + 1)} \left( 4q^{m-1} - q^{m+1} - q^m + q - 3 \right)
- \left( q^{2m} + q^m + 1 \right) = 0
,\]
and so 
\[At^2+Bt+C=0.\]
\end{proof}

In order to give some bounds on the possible values of $t$ we need the following technical lemma.

\begin{lemma}\label{bounds}
Let $m\geq 3$ and $q\geq 16$.
Let
\[A=\frac{1}{2} \left( \frac{q^m - 1}{q - 1} \right)^2,\quad B= \frac{(q^m - 1)}{2(q - 1)^2(q + 1)} \left( 4q^{m-1} - q^{m+1} - q^m + q - 3 \right),\quad  C=-\left( q^{2m} + q^m + 1 \right),\]
and \(\Delta=B^2-4AC\), we have that
\[\frac{q^{m+1}(q^m-1)}{2(q-1)^2(q+1)}\leq -B\leq \frac{q^{m+2}(q^m-1)}{2(q-1)^2(q+1)}, \]
and
\[\frac{7q^{2m+4}(q^m-1)^2}{4(q-1)^4(q+1)^2}\leq \Delta \leq \frac{q^{2m+5}(q^m-1)^2}{8(q-1)^4(q+1)^2}. \]
\end{lemma}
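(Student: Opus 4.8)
The statement is purely a matter of bounding two explicit rational expressions in $q$ and $m$, so the plan is to estimate $-B$ and $\Delta=B^2-4AC$ separately, using only elementary inequalities valid under the hypotheses $m\geq 3$ and $q\geq 16$. The key observation is that all three quantities $A,B,C$ carry a common factor of $(q^m-1)/(q-1)$ (or its square), so after factoring these out, the problem reduces to bounding the polynomial
\[
P(q,m)\coloneqq q^{m+1}+q^m-4q^{m-1}-q+3,
\]
which satisfies $-2(q-1)^2(q+1)\,B = (q^m-1)\,P(q,m)$. The dominant term of $P$ is $q^{m+1}$, so I expect to show that $q^{m+1}\leq P(q,m)\leq q^{m+2}$ for $q\geq 16$, $m\geq 3$; this immediately yields the two-sided bound on $-B$ after reinserting the factor $(q^m-1)/\bigl(2(q-1)^2(q+1)\bigr)$.

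For the lower bound $P(q,m)\geq q^{m+1}$, I would verify that $q^m-4q^{m-1}-q+3\geq 0$, i.e. that the leftover terms after isolating $q^{m+1}$ are nonnegative; since $q^m-4q^{m-1}=q^{m-1}(q-4)$ and $q\geq 16$, this term dominates the lower-order $-q+3$, so the inequality holds comfortably. For the upper bound $P(q,m)\leq q^{m+2}$, I would note that $P(q,m)<q^{m+1}+q^m\leq 2q^{m+1}\leq q^{m+2}$, the last step using $q\geq 2$. The analogous estimate for $\Delta$ is where the bulk of the work lies: since $\Delta=B^2-4AC$ with $A>0$ and $C<0$, both $B^2$ and $-4AC$ are positive contributions, and I would estimate each. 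The term $B^2$ inherits the factor $(q^m-1)^2/\bigl(4(q-1)^4(q+1)^2\bigr)$ and a factor $P(q,m)^2$ whose leading behavior is $q^{2m+2}$; the term $-4AC=2(q^m-1)^2(q^{2m}+q^m+1)/(q-1)^2$ contributes at order $q^{4m}$. Converting everything to the common denominator $4(q-1)^4(q+1)^2$ and comparing numerators against the claimed bounds $\tfrac{7}{4}q^{2m+4}(q^m-1)^2$ and $\tfrac18 q^{2m+5}(q^m-1)^2$ is the decisive computation.

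\textbf{Main obstacle.}
The hard part will be the upper bound on $\Delta$, because the stated bound $\tfrac18 q^{2m+5}(q^m-1)^2/\bigl((q-1)^4(q+1)^2\bigr)$ is quite tight relative to the true leading order of $\Delta$, and one must check that the $-4AC$ contribution (whose numerator over the common denominator grows like $8(q-1)^2(q+1)^2 q^{4m}$, i.e. effectively $q^{4m+4}$ after comparison with $(q^m-1)^2$) does not overrun the allotted $\tfrac18 q^{2m+5}$. I would carefully track that, upon dividing through by $(q^m-1)^2$, both $\Delta$ and the bounds scale the same way, so the comparison reduces to a polynomial inequality in $q$ alone once the $q^m$ powers are matched; the hypotheses $q\geq 16$ and $m\geq 3$ are exactly what is needed to absorb the cross terms and lower-order corrections. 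The lower bound $\tfrac{7}{4}q^{2m+4}(q^m-1)^2/\bigl((q-1)^4(q+1)^2\bigr)$ should follow more easily, since discarding the positive $B^2$ term and retaining only $-4AC$ already produces something of the right order, and one then checks the numerical constant $7/4$ survives after accounting for the ratio $q^{2m}/\bigl((q-1)^2(q+1)^2\bigr)$ under $q\geq 16$.
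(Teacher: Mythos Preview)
Your plan is correct and matches the paper's proof essentially step for step: the paper also isolates the polynomial $P(q,m)=q^{m+1}+q^m-4q^{m-1}-q+3$ and shows $q^{m+1}\le P(q,m)\le q^{m+2}$ for the bounds on $-B$, then writes $\Delta=\dfrac{(q^m-1)^2}{4(q-1)^4(q+1)^2}\bigl[P(q,m)^2+8(q-1)^2(q+1)^2(q^{2m}+q^m+1)\bigr]$ and bounds the bracket, using exactly your idea of dropping $P^2$ for the lower bound and using $8q^{2m+4}\le \tfrac12 q^{2m+5}$ (i.e.\ $q\ge 16$) for the upper bound. Your identification of the upper bound on $\Delta$ as the place where $q\ge 16$ is genuinely needed is spot on.
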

    \begin{proof}
     \textbf{Bounds on \(B\)}.\\
     We rewrite \(B\) as \[-B= \frac{(q^m - 1)}{2(q - 1)^2(q + 1)} \left(q^{m+1}-4q^{m-1} + q^m-q + 3 \right),\]
     since \[q^{m+1}+q^m-4q^{m-1}-q+3\geq q^{m+1},\]
     we have that
     \[-B\geq \frac{q^{m+1}(q^m - 1)}{2(q - 1)^2(q + 1)}.\]
     Similarly, since 
     \[q^{m+1}+q^m-4q^{m-1}-q+3\leq q^{m+2},\]
     we get
     \[-B\leq \frac{q^{m+2}(q^m - 1)}{2(q - 1)^2(q + 1)}.\]
     Thus, the bounds on \(-B\) are established.\\
     \textbf{Bounds on \(\Delta\)}.\\
     We expand \(\Delta\) using \(A, B, \text{ and } C\)
     \[\Delta=\frac{(q^m - 1)^2}{4(q - 1)^4(q + 1)^2} \left[ 
\left( 4q^{m-1} - q^{m+1} - q^m + q - 3 \right)^2 
+ 8(q - 1)^2(q + 1)^2 \left( q^{2m} + q^m + 1 \right)
\right]\]
To lower bound \(\Delta\), note that 
\[8(q - 1)^2(q + 1)^2(q^{2m} + q^m + 1)=8(q^4+1-2q^2)(q^{2m}+q^m+1).\] 
Observe that $8q^{2m+4}\geq 7q^{2m+4}$ and $8(q^4+1-2q^2)(q^{2m}+q^m+1)-q^{2m+4}\geq 0$, it follows that \[
8(q - 1)^2(q + 1)^2(q^{2m} + q^m + 1) \geq  7q^{2m+4}.
\]
Therefore
\[\Delta\geq \frac{7q^{2m+4}(q^m - 1)^2}{4(q - 1)^4(q + 1)^2}.\]
To upper bound \(\Delta\), we expand
\[8(q^4-2q^2+1)(q^{2m} + q^m + 1)=8q^{2m+4}+8q^{m+4}+8q^4-16q^{2m+2}-16q^{m+2}-16q^2+8q^{2m}+8q^m+8.\]
Hence
\[8q^{2m+4}+8q^{m+4}+8q^4+8q^{2m}+8q^m+8\leq \frac
{q^{2m+5}}{2}+ 16q^{2m+2}+16q^{m+2}+16q^2 ,\]
which follows from the following inequalities 
\begin{align*}
8q^{2m+4} &\leq \frac{q^{2m+5}}{2}, \\
8q^{m+4} + 8q^{2m} &\leq 16q^{2m+2}, \\
8q^m + 8q^4 &\leq 16q^{m+2}, \\
8 &\leq 16q^2.
\end{align*}
Moreover
\[\left( 4q^{m-1} - q^{m+1} - q^m + q - 3 \right)^2 \leq \frac{q^{2m+5}}{2},\]
since \(2(4q^{-1}-q-1+q^{-m+1}-3q^{-m})^2\leq \frac{q^5}{2}\).
 Therefore,
\[\Delta\leq\frac{q^{2m+5}(q^m-1)^2}{8(q-1)^4(q+1)^2}.\]
This completes the proof.
    \end{proof}
Thanks to the above results, we can obtain the number of blocks necessary to have an MSRD codes with parameters $[(m,\dots,m),3,tm-2]_{q^m/q}$.

\begin{theorem}\label{thm:maink=3}
Suppose there exists a one-weight MSRD code with parameters $[(m,\dots,m),3,tm-2]_{q^m/q}$ then 
\[t=\frac{-B+\sqrt{\Delta}}{2A},\]
where \[A=\frac{1}{2} \left( \frac{q^m - 1}{q - 1} \right)^2,\quad B= \frac{(q^m - 1)}{2(q - 1)^2(q + 1)} \left( 4q^{m-1} - q^{m+1} - q^m + q - 3 \right),\quad  C=-\left( q^{2m} + q^m + 1 \right),\] and\[\Delta=B^2-4AC.\]\\
In particular, if $m\geq 3$ and $q\geq 16$ we have that $q\leq t \leq q^{3/2}$.
\end{theorem}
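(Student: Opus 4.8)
The plan is to read off the exact value of $t$ from the quadratic relation of Proposition \ref{prop:condont} and then feed the estimates of Lemma \ref{bounds} into the resulting root formula to localize $t$ between $q$ and $q^{3/2}$.

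First I would settle the exact formula. By Proposition \ref{prop:condont}, the number of blocks $t$ of a putative code is a root of $Ax^2+Bx+C$. Since $A=\tfrac12\left(\tfrac{q^m-1}{q-1}\right)^2>0$ and $C=-(q^{2m}+q^m+1)<0$, the product of the two roots equals $C/A<0$, so exactly one root is positive. Moreover $\Delta=B^2-4AC>B^2$ because $-4AC>0$, whence $\sqrt{\Delta}>|B|\geq -B$; this forces $\frac{-B+\sqrt{\Delta}}{2A}>0$ and $\frac{-B-\sqrt{\Delta}}{2A}<0$. As $t$ is a genuine positive integer satisfying the equation, it must coincide with the positive root, giving $t=\frac{-B+\sqrt{\Delta}}{2A}$.

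Next I would derive the numerical bounds by substituting Lemma \ref{bounds}. The key simplification is that, writing $2A=\left(\tfrac{q^m-1}{q-1}\right)^2$, both $-B$ and $\sqrt{\Delta}$ carry the common factor $\frac{q^m-1}{2(q-1)^2(q+1)}$, and dividing it by $2A$ collapses everything to $\frac{1}{2(q+1)(q^m-1)}$. For the lower bound I would use the lower estimates of $-B$ and $\sqrt{\Delta}$ to get
\[ t \geq \frac{q^{m+1}(1+\sqrt{7}\,q)}{2(q+1)(q^m-1)} > \frac{q(1+\sqrt{7}\,q)}{2(q+1)}, \]
where the last step uses $\frac{q^{m+1}}{q^m-1}>q$. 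The claim $t\geq q$ then reduces to $(\sqrt 7-2)q\geq 1$, which is immediate for $q\geq 16$. Symmetrically, plugging the upper estimates of $-B$ and $\sqrt{\Delta}$ yields
\[ t \leq \frac{q^{m+2}\bigl(1+\sqrt{q/2}\bigr)}{2(q+1)(q^m-1)}. \]
Using $m\geq 3$ to bound $\frac{q^{m+2}}{q^m-1}=q^2\cdot\frac{q^m}{q^m-1}\leq q^2\cdot\frac{q^3}{q^3-1}$, the inequality $t\leq q^{3/2}$ becomes, after clearing denominators and dividing by $q^{5/2}$, essentially $q^{5/2}+q^3/\sqrt 2\leq 2q^3+2q^2-2-2/q$, which holds for all $q\geq 16$ since the right-hand side is dominated by $2q^3$ while the left-hand side only grows like $q^3/\sqrt 2<q^3$.

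The only delicate part is the bookkeeping in these two final scalar inequalities; the rest is mechanical substitution. Two hypotheses are used precisely here: the condition $m\geq 3$ is what lets one bound $\frac{q^m}{q^m-1}$ uniformly (the factor is decreasing in $m$), and the threshold $q\geq 16$ is exactly what makes the constant $\sqrt 7$ in the lower estimate of $\Delta$ large enough to push $t$ past $q$ while simultaneously keeping the upper estimate below $q^{3/2}$. I expect the arithmetic of verifying $q^{5/2}+q^3/\sqrt2\leq 2q^3+2q^2-2-2/q$ for all $q\geq 16$ (rather than any conceptual difficulty) to be the main thing requiring care.
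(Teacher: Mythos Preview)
Your proposal is correct and follows essentially the same route as the paper: invoke Proposition~\ref{prop:condont} to place $t$ among the roots of $Ax^2+Bx+C$, pick out the positive root, and then plug the estimates of Lemma~\ref{bounds} into the root formula to squeeze $t$ between $q$ and $q^{3/2}$. Your identification of the positive root via the sign of $C/A$ is a touch cleaner than the paper's ``it is easy to see'' step, and your explicit use of $m\geq 3$ to bound $\tfrac{q^m}{q^m-1}\leq\tfrac{q^3}{q^3-1}$ makes the role of that hypothesis more transparent than in the paper, but these are cosmetic differences rather than a genuinely different approach.
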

\begin{proof}
By Proposition \ref{prop:condont}, we know that
\[
t = \frac{-B \pm \sqrt{\Delta}}{2A}.
\]
First, it is easy to see that $-B - \sqrt{\Delta} < 0$ and \(A>0\). Since $t \in \NN$, this implies that $t = \frac{-B + \sqrt{\Delta}}{2A}$.

Now, applying Lemma \ref{bounds}, we derive upper bound for \(q\geq16\), and \(m\geq3\)
 \[
t \leq \left( \frac{q^{m+2}(q^m - 1)}{2(q - 1)^2(q + 1)} + \frac{q^{m + \frac{5}{2}}(q^m - 1)}{2\sqrt{2}(q - 1)^2(q + 1)} \right) \cdot \frac{(q - 1)^2}{(q^m - 1)^2},
\]
consequently, by straightforward computation, we obtain
\[t\leq \frac{q^{m+2}}{2(q^m-1)(q+1)}\left(1+\frac{q^{1/2}}{\sqrt{2}}\right),\]
so
\[ \frac{q^{m+2}}{2(q^m-1)(q+1)}\left(1+\frac{q^{1/2}}{\sqrt{2}}\right)\leq q^{3/2},\]
which is equivalent to 
\[1+\frac{q^{1/2}}{\sqrt{2}}\leq 2q^{-1/2-m}(q^m-1)(q+1),\]
this holds for all \(q\geq16\) and \(m\geq 3\).\\
Similarly, after applying Lemma \ref{bounds} to obtain a lower bound, we have
\[t\geq \frac{q^{m+1}}{2(q^m-1)(q+1)}\left(1+\sqrt{7}q\right),\]
straightforward computations show that 
\[t\geq \frac{q^{m+1}}{2(q^m-1)(q+1)}\left(1+\sqrt{7}q\right)\geq q.\]
 Thus, we have proved the desired result.
 \end{proof}

We now consider the special case where \(q=2\) in the following theorem, which is excluded from the previous considerations.

\begin{theorem}
    For any $m\in \NN$, there does not exist one-weight MSRD code with parameters \([(m,\dots,m),3,tm-2]_{2^m/2}\). 
\end{theorem}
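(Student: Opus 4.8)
The plan is to specialize the Diophantine constraint of Proposition~\ref{prop:condont} to $q=2$ and extract a divisibility condition that forces $m$ into a tiny finite set, which is then eliminated by hand. Suppose, for contradiction, that such a code exists with $t$ blocks. By Proposition~\ref{prop:condont} the pair $(q,m)=(2,m)$ and the number of blocks $t$ must satisfy \eqref{eq:condontk=3}. Writing $s\coloneqq 2^m-1$, so that $\frac{q^m-1}{q-1}=s$, $\binom{q+1}{2}=3$ and $q^{2m}+q^m+1=s^2+3s+3$, equation \eqref{eq:condontk=3} becomes
\[
\frac{t\,s(s-1)}{3}+\binom{t}{2}s^2=s^2+3s+3.
\]

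Next I would clear denominators. Multiplying by $6$ and using $\binom{t}{2}=\tfrac{t(t-1)}{2}$ yields the integer identity
\[
3s^2t^2-s(s+2)\,t-\bigl(6s^2+18s+18\bigr)=0.
\]
The decisive step is then to reduce this identity modulo $s$: every term is divisible by $s$ except the constant $18$, so $18\equiv 0 \pmod{s}$, i.e. $s\mid 18$. Since $s=2^m-1$ runs through $1,3,7,15,31,\dots$, the only such values dividing $18$ are $s=1$ and $s=3$, corresponding to $m=1$ and $m=2$.

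It then remains only to rule out these two cases directly by substituting back into the displayed quadratic. For $m=1$ (so $s=1$) it reads $t^2-t-14=0$, with discriminant $57$, not a perfect square; for $m=2$ (so $s=3$) it reduces to $9t^2-5t-42=0$, with discriminant $1537$, also not a perfect square. Hence in neither case is there an integer $t$, contradicting the existence of the code, and therefore no such code exists for any $m\in\NN$.

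I expect the only delicate point to be bookkeeping, namely carrying out the specialization of \eqref{eq:condontk=3} and the clearing of denominators correctly, after which everything collapses to the clean congruence $s\mid 18$. Unlike the $q\geq 16$ case of Theorem~\ref{thm:maink=3}, which requires the analytic bounds of Lemma~\ref{bounds}, here there is no genuine obstacle: the whole argument is a single modular reduction followed by two explicit discriminant checks.
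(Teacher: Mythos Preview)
Your proof is correct and takes a genuinely different route from the paper. The paper specializes $A,B,C$ to $q=2$, writes $t$ via the quadratic formula, and verifies analytically that
\[
t=\frac{2^m+1+\sqrt{73\cdot 2^{2m}+37\cdot 2^{m+1}+73}}{6(2^m-1)}<2\quad\text{for all }m\geq 3,
\]
forcing $t=1$; it then invokes the nonexistence of nontrivial one-weight rank-metric codes \cite{loidreau2006properties} to obtain the contradiction. Your argument instead clears denominators in \eqref{eq:condontk=3} and reduces the resulting integer identity $3s^2t^2-s(s+2)t-(6s^2+18s+18)=0$ modulo $s=2^m-1$, extracting the divisibility $s\mid 18$ and hence $m\in\{1,2\}$, after which two discriminant checks finish the job.

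Your approach has two advantages: it is entirely self-contained (no appeal to the rank-metric one-weight classification), and it disposes of the small cases $m=1,2$ uniformly, whereas the paper's proof of this theorem only treats $m\geq 3$ and defers $m=2$ to a separate corollary using the Ball--Blokhuis bound on $2$-fold blocking sets. The paper's method, on the other hand, gives the explicit closed form for the unique candidate $t$, which is informative in its own right and parallels the structure of Theorem~\ref{thm:maink=3}.
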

    \begin{proof}
    Suppose there exists one-weight MSRD code $\C$ with parameters \([(m,\dots,m),3,2t-2]_{2^m/2}\). By Theorem \ref{thm:maink=3}, we know that $t=\frac{-B+\sqrt{\Delta}}{2A}$,
    where
    \[A=\frac{(2^m-1)^2}{2},\quad B=-\frac{(2^{2m}-1)^2}{6},\quad C=-(2^{2m}+2^m+1).\]
    Accordingly, the discriminant \(\Delta\) becomes
      \[\Delta=\frac{(2^m-1)^2}{36}\left( 73\cdot2^{2m}+37\cdot2^{m+1}+73\right),\]
      Consequently, the value of \(t\)
      \[t=\frac{2^m+1+\sqrt{73\cdot2^{2m}+37\cdot2^{m+1}+73}}{6(2^m-1)}<2, \text{ for all } m\geq3.\]
      Therefore, the only possibility is that $t=1$ (as it needs to be a positive integer).
      Hence, $t=1$, which is a contradiction to the nonexistence of a nontrivial one-weight code in the rank metric; see \cite{loidreau2006properties}.
    \end{proof}

For the case $m=2$, we are going to use the fact that the union of the linear set associated to the system is a $2$-fold blocking set and the following result.

\begin{theorem}[see \text{\cite[Theorem 5.1]{ball1996size}}]\label{thm:BBbound}
     Suppose that $B$ is a $2$-fold blocking set in \(\text{PG}(2,q^2)\). Then \(|B|\geq 2q^2+2q+2.\)
\end{theorem}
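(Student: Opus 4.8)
The plan is to prove the bound by the Rédei (lacunary) polynomial method, which is the standard engine for sharp blocking-set estimates in planes of square order. Here $\PG(2,q^2)$ has order $Q=q^2$, and the target value $2q^2+2q+2 = 2(Q+\sqrt{Q}+1)$ is precisely twice the Bruen bound $Q+\sqrt{Q}+1$ for an ordinary blocking set; the extremal configuration is the union of two disjoint Baer subplanes, since each line meets a Baer subplane in $1$ or $q+1$ points and hence meets the union in at least two points. I would first reduce to a \emph{minimal} $2$-fold blocking set $B$, as a lower bound for minimal sets implies it in general. I note that the easy reductions fall short: for any $P\in B$ the set $B\setminus\{P\}$ is still an ordinary blocking set (every line through $P$ met $B$ at least twice), so Bruen's theorem yields a bound of order $Q+\sqrt{Q}$, only about half of what is needed; and the naive second-moment count $\sum_\ell \binom{m_\ell}{2}=\binom{|B|}{2}$ with $m_\ell\ge 2$ produces only a vacuous constraint. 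Hence the missing factor of $2$ must be extracted from the algebra of the Rédei polynomial.

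Next I would set up affine coordinates, choosing the line at infinity $\ell_\infty$ so as to control $|B\cap\ell_\infty|$, write the affine points of $B$ as $\{(a_i,b_i)\}_{i=1}^{N}$ with $N=|B|-|B\cap\ell_\infty|$, and form the Rédei polynomial
\[
R(X,Y)=\prod_{i=1}^{N}\bigl(X+a_iY-b_i\bigr)\in\F_{q^2}[X,Y].
\]
For a fixed direction $Y=y$, the multiplicity of a root $X=c$ of $R(X,y)$ equals the number of affine points of $B$ on the corresponding line of that direction. Thus the hypothesis that every affine line of direction $y$ meets $B$ in at least two points—after accounting for the contribution of the point of $\ell_\infty$ in that direction—translates into divisibility of $R(X,y)$, as a polynomial in $X$, by $\bigl(X^{q^2}-X\bigr)^{2}$, up to a correction of bounded degree supported on $B\cap\ell_\infty$. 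Writing $R(X,y)=\bigl(X^{q^2}-X\bigr)^{2}\,h(X,y)$ then leaves a cofactor $h$ of degree $N-2q^2$ in $X$, and the core of the argument is to show that this cofactor is a \emph{fully reducible lacunary} polynomial over $\F_{q^2}$.

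The heart of the proof, and the step I expect to be the main obstacle, is the lacunary-polynomial estimate in the square field $\F_{q^2}$. I would invoke the Blokhuis–Szőnyi theory of fully reducible lacunary polynomials to show that a nonconstant cofactor $h$ arising from a $2$-fold blocking condition cannot have degree smaller than $2q$: this is exactly where $\sqrt{Q}=q$ enters, and where the doubling distinguishing the $2$-fold bound from the single one must be accounted for by treating the two layers of divisibility by $X^{q^2}-X$ simultaneously rather than one at a time. Combining $\deg_X h \ge 2q$ with $\deg_X h = N-2q^2$ gives $N\ge 2q^2+2q$, and adding the $|B\cap\ell_\infty|\ge 2$ points at infinity yields $|B|\ge 2q^2+2q+2$. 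The delicate points I anticipate are establishing full reducibility of $h$ uniformly in $y$, pinning down the sharp degree threshold $2q$ (rather than the weaker $\sqrt{2}\,q$ coming from a generic $t$-fold argument), and disposing of the degenerate cases in which $h$ is constant, corresponding to $B$ containing lines.

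Finally, I would trace the equality case through the lacunary-polynomial characterization: forcing $\deg_X h = 2q$ constrains $h$ to factor into linearized polynomials tied to the subfield $\F_q$, which identifies an extremal $B$ with the union of two disjoint Baer subplanes. I would close by remarking that this matches the construction noted above, confirming that the bound $2q^2+2q+2$ is sharp.
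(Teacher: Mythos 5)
A preliminary remark on the comparison itself: the paper contains no proof of Theorem \ref{thm:BBbound} --- it is quoted verbatim from Ball's paper \cite{ball1996size} --- so your proposal can only be judged against the literature argument. You have identified that argument's machinery correctly: the R\'edei polynomial, divisibility of $R(X,y)$ by $(X^{q^2}-X)^2$ for generic directions, fully reducible lacunary polynomials over the square-order field, the sharp example of two disjoint Baer subplanes, and the right arithmetic reduction (take $\ell_\infty$ a $2$-secant, so $\deg_X h = N-2q^2$ and everything hinges on $\deg_X h \ge 2q$).

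That hinge, however, is exactly where the proposal stops being a proof. The assertion that a nonconstant fully reducible cofactor produced by the $2$-fold blocking condition has degree at least $2q$ is not an off-the-shelf consequence of the Blokhuis--Sz\H{o}nyi theory for single blocking sets; it is the new lacunary-polynomial estimate, for polynomials with two lacunary layers, that Ball (and Ball--Blokhuis in the double-blocking-set setting) had to establish, and you explicitly ``invoke'' it while labelling it the expected main obstacle. Since this lemma carries the entire factor-of-two content of the bound, presupposing it is a genuine gap, not a deferred routine verification. By contrast, one of the ``delicate points'' you list is not delicate at all: full reducibility of $h(X,y)$ for fixed $y$ is immediate, because $R(X,y)$ is by construction a product of linear factors over $\mathbb{F}_{q^2}$, and its exact quotient by the split polynomial $(X^{q^2}-X)^2$ in the UFD $\mathbb{F}_{q^2}[X]$ therefore also splits. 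Two further loose ends: the constant-cofactor case gives $|B|=2q^2+2$ and must be excluded by an actual argument --- your gloss that it ``corresponds to $B$ containing lines'' cannot be right, since a $2$-fold blocking set containing a line has at least $(q^2+1)+(2q^2-1)=3q^2>2q^2+2$ points by the Jamison--Brouwer--Schrijver affine blocking set bound applied to $B\setminus\ell$; and the directions determined by the points of $B\cap\ell_\infty$ break the clean divisibility by $(X^{q^2}-X)^2$ and need the separate treatment you only gesture at. In short: correct strategy and correct reduction, but the pivotal lemma is assumed rather than proved.
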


Combining Theorem \ref{thm:BBbound} and Proposition \ref{prop:condont} we obtain the following nonexistence result for $m=2$.

\begin{corollary}
      For any prime power $q$, there does not exist one-weight MSRD code with parameters \([(2,\dots,2),3,2t-2]_{q^2/q}\).
\end{corollary}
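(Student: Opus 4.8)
The plan is to combine the size formula for the union of linear sets, which follows from Proposition~\ref{prop:minimal2bs}, with the lower bound from Theorem~\ref{thm:BBbound}, and show that the resulting constraint on $t$ forces $t\le 1$, which in turn contradicts the nonexistence of nontrivial one-weight codes. Concretely, suppose for contradiction that a one-weight MSRD code $\C$ with parameters $[(2,\dots,2),3,2t-2]_{q^2/q}$ exists, and let $(\mathcal{U}_1,\ldots,\mathcal{U}_t)$ be a system associated with $\C$. By Proposition~\ref{prop:minimal2bs}, the set $B=L_{\mathcal{U}_1}\cup\ldots\cup L_{\mathcal{U}_t}$ is a $2$-fold blocking set in $\mathrm{PG}(2,q^2)$ of size $t\frac{q^2-1}{q-1}=t(q+1)$.

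Next I would feed this size into the Ball--Blokhuis bound of Theorem~\ref{thm:BBbound}, which gives
\[
t(q+1)=|B|\geq 2q^2+2q+2=2(q+1)^2-2q,
\]
so that $t\geq 2(q+1)-\frac{2q}{q+1}>2q$. This furnishes a clean \emph{lower} bound on $t$, and the whole proof hinges on clashing this lower bound against the \emph{upper} bound for $t$ coming from the counting identity of Proposition~\ref{prop:condont}. Setting $m=2$ in that proposition, $t$ must be the positive root $t=\frac{-B+\sqrt{\Delta}}{2A}$ of the quadratic $At^2+Bt+C=0$ with $A,B,C$ specialized to $m=2$, namely $A=\tfrac12(q+1)^2$, $C=-(q^4+q^2+1)$, and $B$ the corresponding specialization. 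The key step is to show that this root is of order at most a constant times $q$ with a coefficient strictly smaller than the $2q$ coming from the blocking-set bound, producing the contradiction.

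The main obstacle I anticipate is the explicit estimate of the positive root of the $m=2$ quadratic, since Lemma~\ref{bounds} is stated only for $m\geq 3$ and $q\geq 16$ and therefore cannot be invoked directly. I would instead substitute $m=2$ into the exact expressions for $A$, $B$ and $\Delta=B^2-4AC$, simplify, and estimate $\sqrt{\Delta}$ by hand. From $|B|\le \frac{q^4(q^2-1)}{2(q-1)^2(q+1)}$-type bounds specialized to $m=2$ one expects $t=\frac{-B+\sqrt{\Delta}}{2A}$ to behave like $c\,q$ for some explicit $c<2$, so that $t<2q$ for all $q$, directly contradicting $t>2q$ from the blocking-set bound. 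A cleaner route, avoiding square roots entirely, is to rewrite Proposition~\ref{prop:condont} at $m=2$ as the polynomial identity
\[
\frac{t(q+1)^2 q^2}{\binom{q+1}{2}}+\binom{t}{2}(q+1)^2=q^4+q^2+1,
\]
and to show that any $t$ satisfying this together with $t(q+1)\ge 2q^2+2q+2$ leads to $q^4+q^2+1>\binom{t}{2}(q+1)^2\ge\binom{2q}{2}(q+1)^2$, whose right-hand side exceeds $q^4+q^2+1$ for every prime power $q$; this is a routine but careful polynomial inequality verification.
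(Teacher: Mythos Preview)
Your approach is essentially the paper's: it too combines Proposition~\ref{prop:minimal2bs} with Theorem~\ref{thm:BBbound} to force $t(q+1)\ge 2q^2+2q+2$, and then clashes this against the quadratic constraint of Proposition~\ref{prop:condont}, from which the paper extracts the explicit bound $t\le\sqrt{2}q$ (so your anticipated constant $c<2$ is indeed $\sqrt{2}$). One arithmetic slip to fix: in your $m=2$ specialization the first summand is $\dfrac{t(q+1)q}{\binom{q+1}{2}}=2t$, not $\dfrac{t(q+1)^2q^2}{\binom{q+1}{2}}$; this does not affect your direct-substitution route, since $\binom{t}{2}(q+1)^2\le q^4+q^2+1$ still follows and is already violated once $t>2q$.
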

     \begin{proof}
        Suppose there exists one-weight MSRD code $\C$ with parameters \([(2,\dots,2),3,2t-2]_{q^2/q}\).
        By Theorem \ref{thm:maink=3}, we know that $t=\frac{-B+\sqrt{\Delta}}{2A}$,
    where 
         \[A=\frac{(q+1)^2}{2},\quad B=\frac{-q^2-2q+3}{2}, \quad C=-(q^4+q^2+1).\]
         The discriminant \(\Delta\) becomes
         \[\Delta=\frac{1}{4}\left(8q^6+16q^5+17q^4+20q^3+14q^2+4q+17\right).\]
         Thus, by Proposition \ref{prop:condont}, we obtain \(t\)
         \[t=\frac{q^2+2q-3+\sqrt{8q^6+16q^5+17q^4+20q^3+14q^2+4q+17}}{2(q+1)^2}.\]
         From this we can immediately derive that
         \[  t\leq \sqrt{2}q.\]
         Let $(U_1,\ldots,U_t)$ be an associated system to $\C$. By Proposition \ref{prop:minimal2bs}, $L_{U_1}\cup \ldots \cup L_{U_t}$ is a minimal $2$-fold blocking set.
         By Theorem \ref{thm:BBbound}, we have that
         \[|L_{U_1}\cup \ldots \cup L_{U_t}|\geq 2q^2+2q+2.\]
         Now, since the $L_{U_i}$'s are scattered $\fq$-linear sets of rank $2$ pairwise disjoint, we obtain that 
         \[|L_{U_1}\cup \ldots \cup L_{U_t}|=t(q+1)\leq \sqrt2q(q+1),\]
         as $t\leq \sqrt2 q$.
         We get a contradiction, as $\sqrt2q(q+1)$ is strictly less than $2q^2+2q+2$.
         \end{proof}

\section*{Acknowledgements}
The authors thank Alessandro Neri and the referees for pointing out a correction to Section 4.
The research  was partially supported by the Italian National Group for Algebraic and Geometric Structures and their Applications (GNSAGA-INdAM).
\bibliographystyle{abbrv}
\bibliography{Biblo}
\end{document}